\documentclass[11pt]{article}
\usepackage{bm}
\usepackage[font=small]{caption}
\RequirePackage{amsthm,amsmath,amsfonts,amssymb}
\usepackage{omega}
\usepackage[a4paper,left=2.5cm,right=2.5cm,top=2.5cm,bottom=2.5cm]{geometry}

\usepackage{bm}
\usepackage{natbib}
\usepackage[ruled]{algorithm2e}
\usepackage[dvipsnames]{xcolor}
\usepackage[colorlinks = true,
            linkcolor = NavyBlue,
            urlcolor  = NavyBlue,
            citecolor = NavyBlue,
            anchorcolor = NavyBlue]{hyperref} 
            
 \usepackage{booktabs}

\usepackage{chngcntr}
\usepackage{colonequals}
\usepackage[graphicx]{realboxes}
\usepackage{enumerate}
\usepackage{comment}
\usepackage{subfigure}
\RequirePackage{subfigure}
\RequirePackage{algorithmic}

\author{}

\newcommand{\ee}{\end{aligned} \end{equation}}
\newcommand{\eq}{\end{quote}}
\newcommand{\diag}{\mathrm{diag}}

\newcommand{\ep}{\end{parts}}

\newcommand{\bqp}{\begin{quote}\begin{parts}}

\newcommand{\epq}{\end{parts}\end{quote}}

\DeclareMathOperator*{\argmin}{argmin}
\newcommand{\Rom}[1]{\text{\uppercase\expandafter{\romannumeral #1\relax}}}
\newcommand{\bee}{\begin{equation}\begin{aligned}}

\newcommand{\emm}{\end{bmatrix}}

\newcommand{\argmax}{\operatornamewithlimits{argmax}}
\numberwithin{equation}{section}
\newcommand{\vertiii}[1]{{\vert\kern-0.25ex\vert\kern-0.25ex\vert #1 
    \vert\kern-0.25ex\vert\kern-0.25ex\vert}}
    
\newcommand{\ma}{\bm{A}}
\newcommand{\mb}{\bm{B}}

\newcommand{\mi}{\bm{I}}

\newcommand{\mv}{\bm{V}}
\newcommand{\mx}{\bm{X}}
\newcommand{\mn}{\bm{N}}

\newcommand{\mw}{\bm{W}}
\newcommand{\mg}{\bm{G}}
\newcommand{\mm}{\bm{M}}

\newcommand{\mpp}{\bm{P}}
\newcommand{\mo}{\bm{O}}
\newcommand{\my}{\bm{Y}}
\newcommand{\muu}{\bm{U}}

\newcommand{\mSigma}{\bm{\Sigma}}

\newcommand{\mLambda}{\bm{\Lambda}}

\newtheorem{Theorem}{Theorem}

\newtheorem{Remark}{Remark}

\newtheorem{Assumption}{Assumption}

 \title{
Global Synchronization for Multi-Source Data Integration under Blockwise Missing Patterns}

  \usepackage{authblk}

  \author[1]
  {Runbing Zheng\thanks{Email: \texttt{runbing@stanford.edu}.}
  }
  \author[2]
  {Dmitriy Kunisky\thanks{Email: \texttt{kunisky@jhu.edu}.}}

  \affil[1]{Department of Neurology and Neurological Sciences, Stanford University}
  \affil[2]
  {Department of Applied Mathematics \& Statistics, Johns Hopkins University}

\date{September 29, 2026}

\begin{document}

\maketitle
\thispagestyle{empty}

\begin{abstract}
Multi-source data integration problems over datasets from different sources covering different but possibly overlapping sets of entities have become increasingly important in many real-world areas, including genomics, single-cell analysis, and healthcare research. In such problems, one often first learns a low-dimensional representation of the entities within each source and then integrates these representations across sources. As the representations from different sources are only identifiable up to some transformation, how to align them across sources using the sources' overlapping entities becomes a key challenge. Existing methods align the sources in a sequential or tree-structured manner, and are therefore sensitive to the chosen order and exploit only part of the available overlapping information. Motivated by this limitation, we propose Global Synchronized Multiple Matrix Integration (GSMMI), which formulates this alignment problem as a global synchronization problem and jointly aligns all sources using all pairwise overlaps at once, thereby making full use of all overlapping information across the sources. We develop an efficient iterative algorithm for GSMMI that is fast and scalable to the large-scale data arising in these applications. We show both theoretically and empirically that GSMMI improves alignment accuracy, with clear improvements even under modest overlap structure. Moreover, we develop GSMMI to be broadly applicable across data types, covering symmetric positive semidefinite, symmetric indefinite, and asymmetric or rectangular matrices, and even settings where sources overlap only in their rows or only in their columns, making it suitable for a wide variety of application scenarios.

\end{abstract}

\vspace{1em}

\noindent
{\it Keywords:} data integration, blockwise missing pattern, matrix completion, synchronization

\clearpage

\tableofcontents
\thispagestyle{empty}

\clearpage

\section{Introduction}
\pagenumbering{arabic}

The rapid growth of large-scale data collection and sharing has made \emph{multi-source data integration} an increasingly important problem in modern statistics, with applications ranging across healthcare research \citep{zhou2021multi}, genomics \citep{maneck2011genomic,tseng2015integrating,zang2016high,cai2016structured}, single-cell analysis \citep{stuart2019comprehensive,argelaguet2021computational,ma2023your}, and chemometrics \citep{mishra2021recent}.

A key feature of multi-source data integration is that it often gives rise to problems with missing blocks of data, where each source observes a submatrix of the entire underlying data matrix, with different sources covering different but potentially overlapping subsets of its rows and columns.
We give three motivating examples here to illustrate how differences in coverage across sources give rise to distinct forms of block-wise missing data.

First, consider pointwise mutual information (PMI) matrices derived from electronic health record (EHR) data. PMI measures associations between pairs of clinical concepts based on their co-occurrence \citep{ahuja2020surelda,zhou2022multiview}. Because healthcare systems have limited interoperability \citep{rajkomar2018scalable}, different EHR datasets often cover partially overlapping sets of concepts. Each PMI matrix therefore provides a square block of the full concept-by-concept matrix, with shared concepts giving rise to overlapping blocks, as shown in Figure~\ref{fig:motivating_examples}(a). 
Second, in single-cell data, each dataset records measurements of genomic features across cells. Experiments spanning different batches, tissues, or technologies may cover different, potentially overlapping sets of cells and features \citep{ma2020integrative,lahnemann2020eleven}. The resulting matrices provide rectangular blocks that may overlap along both dimensions, as illustrated in Figure~\ref{fig:motivating_examples}(b). 
Third, in multimodal health data, different modalities measure distinct sets of features, but not every subject has data available for every modality, for reasons such as cost. The resulting matrices have disjoint feature sets and partially overlapping subject sets, producing the pattern in Figure~\ref{fig:motivating_examples}(c).
\begin{figure}[htbp]
    \centering
    \begin{minipage}[b]{0.325\textwidth}
        \centering
        \includegraphics[height=3.3cm]{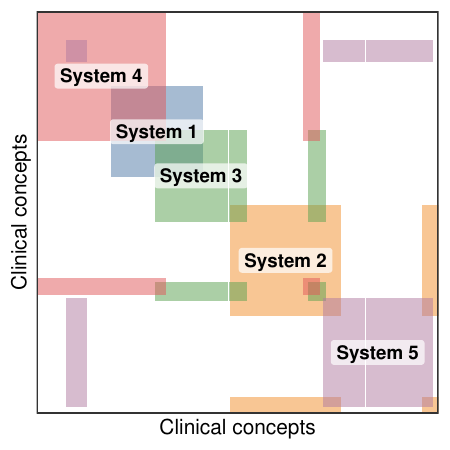}\\
        \small (a) PMI matrices
    \end{minipage}
    \begin{minipage}[b]{0.325\textwidth}
        \centering
        \includegraphics[height=3.3cm]{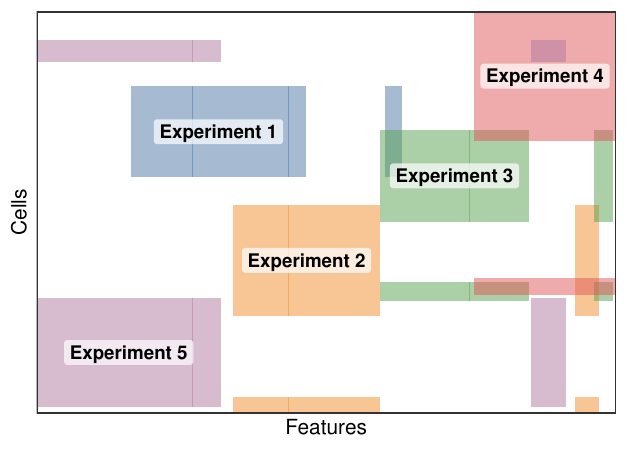}\\
        \small (b) Single-cell matrices
    \end{minipage}
    \begin{minipage}[b]{0.325\textwidth}
        \centering
        \includegraphics[height=3.3cm]{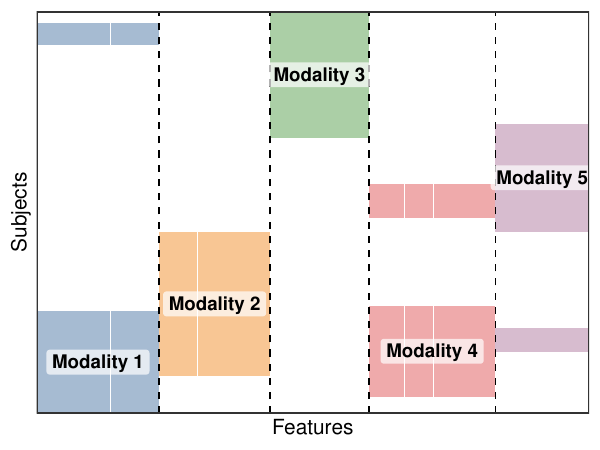}\\
        \small (c) Multimodal health data
    \end{minipage}
    \caption{Block-wise observation patterns in three multi-source integration settings: (a) PMI matrices from different EHR systems, (b) single-cell matrices from different experiments, and (c) multimodal health data. Different colors indicate blocks observed by different sources, and blended colors indicate overlaps between sources.}
    \label{fig:motivating_examples}
\end{figure}

As illustrated in Figure~\ref{fig:motivating_examples}, in all of these cases, integrating the observations from different sources can involve a structured, block-wise missing pattern, rather than the uniformly or independently sampled entries that are commonly assumed in the matrix completion literature.
Thus, although there is a rich body of literature on matrix completion (see, e.g., \citet{candes2012exact,candes2011tight,koltchinskii2011nuclear,chen2019inference,fornasier2011low,mohan2012iterative,vandereycken2013low,hu2012fast,chen2020nonconvex,cai2016matrix,foygel2011learning} for some of the key references), most of these methods are not well-suited to the multi-source data integration problem considered here, since they mainly focus on recovering a possibly low-rank matrix under the assumption of uniformly or independently sampled observed entries.

For block-wise missing data, much existing work is developed for particular tasks under specialized assumptions. For instance, \cite{xue2021integrating,xue2021statistical,he2023variable,zhang2020imputed} study regression problems in which the design matrix has block-wise missing entries. 
Specifially, \cite{xue2021integrating} propose a multiple block-wise imputation approach for model selection under a linear model; \cite{xue2021statistical} develop debiased estimators and valid confidence intervals for high-dimensional linear regression based on unbiased estimating equations and block-wise imputation; \cite{he2023variable} develop a single regression-based imputation algorithm, built on sparse precision matrix estimation, for variable selection under a generalized linear model; and \cite{zhang2020imputed} propose a factor-model imputation approach that yields an imputed factor regression for dimension reduction and prediction.
Beyond regression, \cite{sui2026multi} propose a multi-task learning framework that imputes the missing blocks via shared and task-specific representations to predict a target of interest for each task. \cite{wang2026multimodal} study multimodal domain adaptation under a label-shift assumption, using labeled source datasets with block-wise missing modalities to predict outcomes in an unlabeled target population. 
\cite{zhu2020generalized} propose a generalized integrative principal component analysis procedure, which assumes the data sources share common samples but distinct variables, and models each source with a possibly different exponential-family distribution.

Further, some work on multi-source data integration under block-wise missing patterns \citep{zhou2021multi,bishop2014deterministic,zheng2026cluster,liu2026representation,zheng2026chain} does not target a specific downstream task or rely on associated specialized assumptions, but instead aims to recover the full low-rank matrix, or equivalently the latent positions of all entities in a low-dimensional space, which is the problem we consider in this paper. 
This is typically achieved in two steps: one first applies a low-rank estimation method, such as a spectral method, to each source to estimate the latent positions of the entities it observes.
Because the latent positions estimated from different sources are only identifiable up to some transformation and can therefore differ across sources, one then aligns them using the entities shared between overlapping sources.

Existing methods differ mainly in how the alignment in the second step is performed. \cite{zhou2021multi} align each pair of overlapping submatrices on their shared entities to match their latent positions and then recovers the entries associated with that pair. Since this alignment is carried out in a pairwise manner and does not produce a globally consistent representation across all sources, the portion of the matrix that ends up recovered correctly may be limited. \cite{bishop2014deterministic} focus on symmetric positive semidefinite matrices and integrate all the submatrices sequentially, aligning the latent positions of each new submatrix to the entities already covered by the previously integrated submatrices. 
Such a sequential procedure can yield globally aligned latent positions for all entities covered by the integrated submatrices, but it is restrictive as a way of aligning the sources, and its performance is sensitive to the integration order, yet how to determine an effective order remains unresolved in \cite{bishop2014deterministic}. 
Moreover, because the alignment is propagated in a one-directional, sequential manner and once-fixed alignments are never corrected by later overlaps, the sequential procedure does not fully exploit all of the available overlapping information.
\cite{zheng2026cluster} studies rectangular data and likewise aligns the submatrices sequentially, and additionally provides a greedy algorithm for ordering them. \cite{liu2026representation} also focuses on rectangular data and aligns the sources through a set of anchor features shared across all sources, and then the alignment relies solely on the overlap over this anchor block, which may constitute only a small fraction of all the available overlapping information. And when such a globally shared feature set does not exist, \cite{liu2026representation} proceeds by sequentially linking overlapping groups one after another, where each group is first estimated using its own shared features and successive groups are then aligned through their overlapping members, along a prespecified order. It provides no algorithm to determine the order, and this sequential linking method again involves limitations similar to those of the sequential procedures noted above.
\cite{zheng2026chain} proposes the chain-linked multiple matrix integration (CMMI) method. Rather than relying on a sequential order, CMMI encodes the overlaps among sources as a graph, whose vertices represent the observed submatrices and whose edges represent overlapping pairs, and then selects a spanning tree of this graph to align all submatrices along a common coordinate system, yielding a globally consistent alignment. It further provides a method for choosing a favorable spanning tree that can lead to smaller alignment error. By breaking free from the restriction of sequential integration, CMMI makes it possible to align the sources in a better way. Nevertheless, like all the methods above that aim at globally aligned latent positions, CMMI still does not fully utilize all available overlapping information, as it uses only the overlaps along the chosen spanning tree.
These limitations motivate an alignment scheme that exploits the entire structure of overlapping observations.

In this paper, building on CMMI, we propose Global Synchronized Multiple Matrix Integration (GSMMI), which aligns all sources by exploiting all pairwise overlaps rather than only a subset of them. Specifically, we formulate the alignment as a global synchronization problem, jointly seeking transformations for all sources that simultaneously match the latent positions of every pair of overlapping submatrices over all of their shared entities. Unlike CMMI and the other methods above, which propagate alignments along a sequence or a spanning tree and thus rely on only part of the overlapping information, GSMMI aggregates the alignment information from all overlapping pairs at once, thereby making full use of the overlapping structure.
Computational efficiency is another important consideration for multi-source data integration, since applications in areas such as genomics, single-cell analysis, and healthcare research often involve large-scale data. A notable advantage of CMMI is that its alignment mainly involves a sequence of singular value decompositions and is therefore computationally very fast. To preserve this efficiency, we develop an iterative algorithm for the global synchronization problem, in which each update reduces to a Procrustes or least-squares subproblem that admits an efficient closed-form solution. We further use the CMMI solution as a high-quality initialization, which empirically allows GSMMI to converge accurately and rapidly, so that its running time is close to, or at most a few times, that of CMMI.

The structure of our paper is as follows. In Section~\ref{sec:method}, we introduce the model for multiple observed submatrices of a whole symmetric positive semidefinite matrix, revisit the CMMI algorithm, and then present the proposed GSMMI algorithm. In Section~\ref{sec:theory}, we provide a theoretical analysis showing that GSMMI can achieve improved alignment accuracy over CMMI, particularly when the number of sources is large and the overlapping structure is rich. Numerical simulations and real-data experiments are presented in Sections~\ref{sec:simu} and \ref{sec:real}, illustrating that GSMMI outperforms CMMI in estimation accuracy while incurring only a modest computational overhead. Finally, although the main idea and analysis of GSMMI are presented for the case of symmetric positive semidefinite matrices for simplicity, GSMMI can be extended to symmetric indefinite matrices as well as asymmetric or rectangular matrices, as we show in Section~\ref{sec:extension}, so that it is applicable to the full range of multiple matrix integration problems, including all the settings illustrated in Figure~\ref{fig:motivating_examples}.

\subsection{Notation}
We summarize some notations used in this paper. 
For any positive integer $n$, we denote by $[n]$ the set $\{1,2,\dots, n\}$. 

\paragraph{Asymptotics}
For two non-negative sequences
$\{a_n\}_{n \geq 1}$ and $\{b_n\}_{n \geq 1}$, we write $a_n \lesssim
b_n$ (resp. $a_n \gtrsim b_n$) if there exists some constant $C>0$
such that $a_n \leq C b_n$ (resp. $a_n \geq C b_n$) for all  $n \geq 1$, and we write $a_n \asymp b_n$ if $a_n\lesssim b_n$ and $a_n\gtrsim b_n$.
If $a_n/b_n$ stays bounded away from $+\infty$, we write $a_n=O(b_n)$ and $b_n=\Omega(a_n)$, and we use the notation $a_n=\Theta(b_n)$ to indicate that $a_n=O(b_n)$ and $a_n=\Omega(b_n)$.
If $a_n/b_n\to 0$, we write $a_n=o(b_n)$ and $b_n=\omega(a_n)$.
We say a sequence of events $\mathcal{A}_n$ holds with high probability if for any $c > 0$ there exists a finite constant $n_0$ depending only on $c$ such that $\mathbb{P}(\mathcal{A}_n)\geq 1-n^{-c}$ for all $n \geq n_0$.
We write $a_n = O_p(b_n)$ (resp. $a_n = o_p(b_n)$) to denote that $a_n = O(b_n)$ (resp. $a_n = o(b_n)$) holds with high probability.

\paragraph{Linear algebra}
We denote by $\mathcal{O}_d$ the set of $d \times d$ orthogonal
matrices, and let $\mathcal{O}_{n\times d}:=\{\muu\in\mathbb{R}^{n\times d}\mid \muu^\top\muu=\mi_d\}$.
For any matrix $\mm\in \mathbb{R}^{A\times B}$ and index sets $\mathcal{A}\subseteq [A]$, $\mathcal{B}\subseteq [B]$, we denote by $\mm_{\mathcal{A},\mathcal{B}}\in \mathbb{R}^{|\mathcal{A}|\times |\mathcal{B}|}$ the submatrix of $\mm$ formed from rows $\mathcal{A}$ and columns $\mathcal{B}$, and we denote by $\mm_{\mathcal{A}}\in\mathbb{R}^{|\mathcal{A}|\times B}$ the submatrix of $\mm$ consisting of the rows indexed by $\mathcal{A}$. The Hadamard product between conformal matrices $\bm{M}$ and $\bm{N}$ is denoted by $\bm{M} \circ \bm{N}$.  
Given a matrix $\bm{M}$, we denote
its spectral, Frobenius, and infinity norms by $\|\bm{M}\|$, 
$\|\bm{M}\|_{F}$, and $\|\bm{M}\|_{\infty}$. 
We also denote 
the $2 \to \infty$ norm of $\bm{M}$ by $\|\bm{M}\|_{2 \to \infty} = \max_{\|\bm{x}\| = 1} \|\bm{M} \bm{x}\|_{\infty} = \max_{i} \|\bm{m}_i\|,$ where $\bm{m}_i$ denotes the $i$th row of $\bm{M}$, i.e., $\|\bm{M}\|_{2 \to \infty}$ is the maximum of the $\ell_2$ norms of the rows of $\bm{M}$.

\section{Methodology and Proposed Algorithm}
\label{sec:method}

We first formulate a precise model of data integration problems.
Suppose we are interested in an unobserved low-rank population matrix $\mpp \in \mathbb{R}^{N \times N}$ associated with $N$ entities.
For now, we further suppose $\mpp$ is positive semidefinite with rank $d \ll N$; extensions to the case of symmetric but indefinite $\mpp$, as well as asymmetric or rectangular $\mpp$, including the corresponding algorithms, are presented in Section~\ref{sec:extension}.
Denote the eigendecomposition of $\mpp$ by $\mpp = \muu\mLambda\muu^\top$, where $\mLambda\in\mathbb{R}^{d\times d}$ is a diagonal matrix whose diagonal entries are the non-zero eigenvalues of $\mpp$ in descending order, and the orthonormal columns of $\muu\in\mathcal{O}_{N\times d}$ are the corresponding eigenvectors. We define the latent positions of the entities as $\mx=\muu\mLambda^{1/2}\in\mathbb{R}^{N\times d}$, where the $s$th row $\bm{x}_s$ gives the latent position of entity $s$ in $\mathbb{R}^{d}$. Then we have $\mpp = \mx \mx^{\top}$, and each entry is the inner product of the corresponding latent positions, i.e., $\mpp_{s,t}=\bm{x}_s^\top\bm{x}_t$ for all $s,t \in [N]$.

We assume that $\mpp$ is partially observed. More specifically, suppose that we have $m$ sources, and from each source, we obtain a (corrupted) observation for a subset of the entities. For any $i\in[m]$ we denote the index set of the entities contained in the $i$th source by $\mathcal{U}_i\subseteq [N]$. We denote $n_i := |\mathcal{U}_i|$ and the population matrix for the $i$th source by $\mpp^{(i)}\in\mathbb{R}^{n_i\times n_i}$. We then have
$$\mpp^{(i)}=\mpp_{\mathcal{U}_i,\mathcal{U}_i}=\muu_{\mathcal{U}_i}\mLambda\muu_{\mathcal{U}_i}^\top =\mx_{\mathcal{U}_i}\mx_{\mathcal{U}_i}^\top,$$
where $\mpp_{\mathcal{U}_i,\mathcal{U}_i}$ is the submatrix of $\mpp$ formed from rows and columns in $\mathcal{U}_i$, $\muu_{\mathcal{U}_i}\in\mathbb{R}^{n_i\times d}$ contains the rows of $\muu$ in $\mathcal{U}_i$, and $\mx_{\mathcal{U}_i}\in\mathbb{R}^{n_i\times d}$ contains the latent positions of $\mathcal{U}_i$.
Suppose that from each source $i$, we observe a noisy submatrix
\begin{equation}
\ma^{(i)} = \mpp^{(i)}+\mn^{(i)},
\end{equation}
where $\mn^{(i)} \in \mathbb{R}^{n_i \times n_i}$ denotes an unobserved symmetric zero-mean random noise matrix. 

Since each source observes only the entities in its own index set $\mathcal{U}_i$, the collection $\{\ma^{(i)}\}_{i=1}^m$ covers a union of principal submatrices of $\mpp$ that may overlap with one another, forming a block-structured observation pattern such as the one illustrated in Figure~\ref{fig:motivating_examples}(a). 
The goal is to recover the entire population matrix $\mpp$, or equivalently the latent positions in $\mx$, based on the observed $\{\ma^{(i)}\}_{i=1}^m$.
    \label{fig:block_structure}




Our proposed method, Global Synchronized Multiple Matrix Integration (GSMMI), improves upon the Chain-linked Multiple Matrix Integration (CMMI) algorithm proposed by \citet{zheng2026chain}. We first introduce the main idea of CMMI before presenting our improvements.

\subsection{Revisiting Chain-linked Multiple Matrix Integration}\label{sec:rev_CMMI}

CMMI applies a natural idea: first estimate the latent position matrix $\mx_{\mathcal{U}_i}$ from each observed submatrix $\ma^{(i)}$ by computing $\hat{\mx}^{(i)}\in\mathbb{R}^{n_i\times d}$ such that $\ma^{(i)}\approx \hat{\mx}^{(i)}\hat{\mx}^{(i)\top}$ (for example, using the scaled leading eigenvectors $\hat{\mx}^{(i)}=\hat{\muu}^{(i)}(\hat{\mLambda}^{(i)})^{1/2}$, where $\hat{\mLambda}^{(i)}$ and $\hat{\muu}^{(i)}$ contain the leading eigenvalues and eigenvectors of $\ma^{(i)}$, respectively), and then integrate $\{\hat{\mx}^{(i)}\}$ to recover $\mpp$.
Note that the latent position matrix $\hat{\mx}^{(i)}$ obtained from $\ma^{(i)}$ is unique only up to an orthogonal transformation. Therefore, to integrate any $\hat{\mx}^{(i)}$ and $\hat{\mx}^{(j)}$ for jointly recovering $\mpp$, we need an orthogonal alignment matrix to bridge them. 

When two submatrices $\ma^{(i)}$ and $\ma^{(j)}$ involve sufficiently many overlapping entities ($|\mathcal{U}_i\cap\mathcal{U}_j|\geq d$), such an alignment matrix $\hat\mw^{(i,j)}\in\mathcal{O}_d$ can be obtained by matching their latent positions corresponding to the shared entities, for instance by solving the orthogonal Procrustes problem
\begin{equation}\label{eq:procrustes_adjacent}
	\hat\mw^{(i,j)}=\underset{\mo\in \mathcal{O}_d}{\operatorname{argmin}} \|\hat{\mx}^{(i)}_{\langle\mathcal{U}_{i}\cap \mathcal{U}_j\rangle}\mo-\hat{\mx}^{(j)}_{\langle\mathcal{U}_{i}\cap \mathcal{U}_j\rangle}\|_F,
\end{equation}
where $\hat\mx^{(i)}_{\langle\mathcal{U}_i\cap \mathcal{U}_j\rangle}$ and $\hat\mx^{(j)}_{\langle\mathcal{U}_i\cap \mathcal{U}_j\rangle}$ are the rows of $\hat\mx^{(i)}$ and $\hat\mx^{(j)}$ corresponding to the shared entities in $\mathcal{U}_i \cap \mathcal{U}_j$. The problem in \eqref{eq:procrustes_adjacent} admits a closed-form solution via singular value decomposition (SVD). Let $\mb^{(i,j)} := (\hat{\mx}^{(i)}_{\langle\mathcal{U}_i\cap \mathcal{U}_j\rangle})^\top \hat{\mx}^{(j)}_{\langle\mathcal{U}_i\cap \mathcal{U}_j\rangle}$ and let $\mb^{(i,j)} = \muu_{\mb^{(i,j)}} \mSigma_{\mb^{(i,j)}} \mv_{\mb^{(i,j)}}^\top$ be its SVD. Then the optimal alignment matrix is $\hat\mw^{(i,j)} = \muu_{\mb^{(i,j)}} \mv_{\mb^{(i,j)}}^\top$.
With such an alignment matrix, $\hat{\mx}^{(i)}\hat\mw^{(i,j)}$ and $\hat{\mx}^{(j)}$ can be used together to recover $\mpp$; for example, $\mpp_{\mathcal{U}_i,\mathcal{U}_j}$ can be estimated as $\hat\mpp_{\mathcal{U}_i,\mathcal{U}_j}=(\hat\mx^{(i)}\hat\mw^{(i,j)})\hat\mx^{(j)\top}$.

Importantly, not only can submatrices with overlapping entities be aligned, but any two \textit{connected} submatrices $\ma^{(i)}$ and $\ma^{(j)}$ can also be integrated. To formalize this, CMMI constructs an undirected graph $\mathcal{G}$ with $m$ vertices $\{v_1, \ldots, v_m\}$, where each vertex $v_i$ corresponds to the observed submatrix $\ma^{(i)}$ with estimated latent position matrix $\hat\mx^{(i)}$. Two vertices $v_i$ and $v_j$ are adjacent if and only if their corresponding entity sets overlap sufficiently, i.e., $|\mathcal{U}_i \cap \mathcal{U}_j| \geq d$. For each pair of adjacent vertices in $\mathcal{G}$, CMMI computes the alignment matrix $\hat\mw^{(i,j)}$ as described above. Two submatrices $\ma^{(i)}$ and $\ma^{(j)}$ are said to be \textit{connected} if there exists a path between $v_i$ and $v_j$ in $\mathcal{G}$. For any such connected pair, the alignment matrices along the connecting path can be composed to integrate $\hat{\mx}^{(i)}$ and $\hat{\mx}^{(j)}$, thereby enabling CMMI to combine all connected observed submatrices.

Without loss of generality, in the following discussion we assume that $\mathcal{G}$ is connected, so that all $\{\ma^{(i)}\}$ can be integrated (if $\mathcal{G}$ is not connected, the integration procedure can be applied separately to each connected component). Notice that generally the structure of $\mathcal{G}$ is complex and contains cycles. Figure~\ref{fig:G} illustrates such a graph with $m=10$ vertices, where cycles are present. This means that for some vertices $v_i$ and $v_j$, there may exist multiple paths connecting them, which can lead to different alignment results. To resolve this issue, CMMI finds a spanning tree of $\mathcal{G}$ so that there is a unique path between every pair of vertices, as shown by the red edges in Figure~\ref{fig:G}, which enables consistent alignment of all latent position matrices $\{\hat{\mx}^{(i)}\}$ simultaneously. After aligning all $\{\hat{\mx}^{(i)}\}$, for entities with multiple estimated latent positions from different $\hat{\mx}^{(i)}$s, CMMI computes a (weighted) average to obtain their final latent position estimates, which is also important for further reducing the estimation error. See \citet{zheng2026chain} for more details on CMMI, including the specific procedure for aligning non-adjacent submatrices along paths, strategies for selecting a good spanning tree, and the determination of weights when integrating estimated latent positions from different submatrices.
\begin{figure}[htbp]
    \centering
    \includegraphics[width=0.3\textwidth]{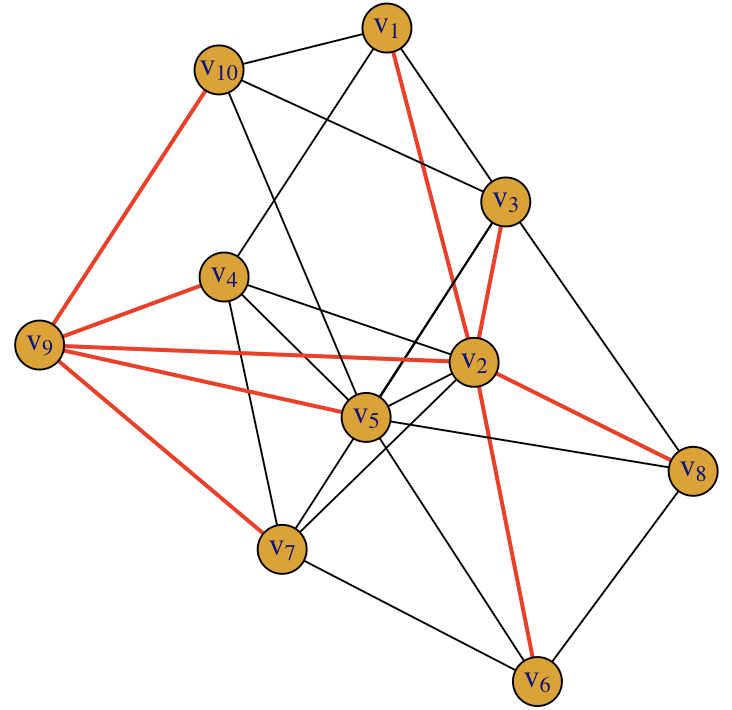}
    \caption{Illustration of a graph $\mathcal{G}$ with $m=10$ vertices containing cycles. Red edges represent a spanning tree that can be used by CMMI for alignment, while black edges indicate unused information about overlapping subsets of entities.}
\label{fig:G}
\end{figure}

The total error in this matrix integration process comes from two sources: the local estimation error in computing each $\hat{\mx}^{(i)}$ and the cross-submatrix alignment error in integrating $\{\hat{\mx}^{(i)}\}$. Although \citet{zheng2026chain} proves that the cross-submatrix alignment error is negligible compared to the local estimation error under certain conditions, such as a bounded number of sources (see the discussion in Remark~5 of \citet{zheng2026chain}), this may no longer hold when the number of sources is large. In that regime, the spanning tree grows and its paths may become longer, causing alignment errors to accumulate along these paths, so that the cross-submatrix alignment error can become substantial. Reducing it then becomes important for accurate recovery.

Note that when integrating $\{\hat{\mx}^{(i)}\}$, CMMI uses only a subset of the overlapping relationships, namely those corresponding to the edges of the selected spanning tree (shown in red) of $\mathcal{G}$. As illustrated in Figure~\ref{fig:G}, several edges (shown in black) remain unused, so not all of the information available from the overlapping structure is exploited. This gap becomes especially pronounced when the number of sources is large and the overlap structure is rich. The spanning tree retains only $m-1$ edges, while the total number of overlapping pairs can be far larger, so the vast majority of the overlaps are discarded, and whether this additional information is exploited can make a substantial difference.

Thus, better utilization of the complete overlapping structure may yield more accurate alignments of $\{\hat{\mx}^{(i)}\}$, thereby reducing the cross-submatrix alignment error and further improving the accuracy of recovering $\mx$ and $\mpp$, especially when the number of sources is large and the overlap structure is rich.


Additionally, an important practical advantage of CMMI is its computational efficiency. Treating the low rank $d$ as a constant, the alignment problem in \eqref{eq:procrustes_adjacent} admits a closed-form SVD solution requiring $O(|\mathcal{U}_i \cap \mathcal{U}_j|)$ operations per pairwise alignment. Given the overlap graph $\mathcal{G}$, computing a minimum spanning tree takes $O(E\log m)$ operations, where $E$ is the number of overlapping pairs, and the $m-1$ alignments along the tree take $O(m\bar{s})$ operations, where $\bar{s}$ is the average overlap size. Thus, aligning the local estimates $\{\hat{\mx}^{(i)}\}$ in CMMI has an overall complexity of $O(m\bar{s} + E\log m)$, and this simplicity allows CMMI to scale efficiently to large datasets. The experiments in \citet{zheng2026chain} also demonstrate that, under block-structured missing patterns, CMMI is more efficient than existing methods designed for the general matrix-completion problem, while also being considerably more accurate.

Motivated by the above discussion, we aim to develop an improved algorithm that fully exploits all overlapping pairs, rather than only $m-1$ of them, while maintaining computational efficiency comparable to CMMI.


\subsection{Our proposed Global Synchronized Multiple Matrix Integration}

We now introduce the proposed Global Synchronized Multiple Matrix Integration (GSMMI) algorithm, which extends CMMI by exploiting all pairwise overlapping information. To fully utilize the overlapping structure, we formulate the alignment problem as a synchronization-type problem. Specifically, we seek transformation matrices $\{\hat{\mw}^{(i)}\}$ such that all estimated latent position matrices $\{\hat{\mx}^{(i)}\}$ are optimally aligned across \textit{all} their overlapping entities:
\begin{equation}\label{eq:partial MLE}
\{\hat{\mw}^{(i)}\}_{i=1}^m = \argmin_{\mo^{(1)},\ldots,\mo^{(m)}\in \mathcal{O}_d}
\sum_{i < j : \mathcal{U}_i \cap \mathcal{U}_j \neq \varnothing}
\pi_{i,j}\cdot \|\hat{\mx}^{(i)}_{\langle\mathcal{U}_i \cap \mathcal{U}_j\rangle}\mo^{(i)}-\hat{\mx}^{(j)}_{\langle\mathcal{U}_i \cap \mathcal{U}_j\rangle}\mo^{(j)}\|_F^2,
\end{equation}
where $\pi_{i,j}$ is the weight corresponding to each pair of overlapping submatrices, to be specified later.

This is a challenging non-convex optimization problem over the product space of orthogonal matrices $(\mathcal{O}_d)^m$. We propose to solve it via an alternating Procrustes optimization approach. We fix $\hat{\mw}^{(1)} = \mi_d$ as the reference and iteratively update each $\hat{\mw}^{(i)}$ for $i = 2, \ldots, m$. For a fixed set of transformation matrices $\{\hat{\mw}^{(j)}\}_{j \neq i}$, the optimal update for $\hat{\mw}^{(i)}$ that minimizes \eqref{eq:partial MLE} is
\begin{equation}\label{eq:procrustes_update}
\hat{\mw}^{(i)} = \argmin_{\mo \in \mathcal{O}_d} \sum_{j \neq i : \mathcal{U}_i \cap \mathcal{U}_j \neq \varnothing} \pi_{i,j} \|\hat{\mx}^{(i)}_{\langle\mathcal{U}_i \cap \mathcal{U}_j\rangle} \mo - \hat{\mx}^{(j)}_{\langle\mathcal{U}_i \cap \mathcal{U}_j\rangle} \hat{\mw}^{(j)}\|_F^2.
\end{equation}
The solution to \eqref{eq:procrustes_update} admits a closed-form expression via SVD. Let 
$$\mb^{(i)} := \sum_{j \neq i : \mathcal{U}_i \cap \mathcal{U}_j \neq \varnothing} \pi_{i,j} (\hat{\mx}^{(i)}_{\langle\mathcal{U}_i \cap \mathcal{U}_j\rangle})^\top \hat{\mx}^{(j)}_{\langle\mathcal{U}_i \cap \mathcal{U}_j\rangle} \hat{\mw}^{(j)},$$ 
and let $\mb^{(i)} = \muu_{\mb^{(i)}} \mSigma_{\mb^{(i)}} \mv_{\mb^{(i)}}^\top$ be its SVD. Then $\hat{\mw}^{(i)} = \muu_{\mb^{(i)}} \mv_{\mb^{(i)}}^\top$.
We cycle through all submatrices $i = 2, \ldots, m$ and update each $\hat{\mw}^{(i)}$ until convergence, which occurs when $\sum_{i=2}^m \|\hat{\mw}^{(i)}_{\text{new}} - \hat{\mw}^{(i)}_{\text{old}}\|_F^2 < \varepsilon^2$ for some tolerance $\varepsilon > 0$ (e.g., $\varepsilon = 10^{-6}$). This alternating optimization decreases the objective in \eqref{eq:partial MLE} and converges to a stationary point; when initialized well, it typically attains a near-optimal solution in practice.

After obtaining $\{\hat{\mw}^{(i)}\}$ through alternating optimization, we follow the same aggregation strategy as CMMI. For each entity $k \in [N]$, we compute the weighted average of its aligned latent positions across all submatrices containing it by
\begin{equation}\label{eq:aggregation}
\hat{\bm{x}}_k=\frac{\sum_{i: k\in\mathcal{U}_i}
\tau_i \cdot (\hat{\mx}^{(i)}\hat{\mw}^{(i)})_{\langle k\rangle}}{\sum_{i :  k\in\mathcal{U}_i}
\tau_i},
\end{equation}
where $\tau_{i}$ is the weight for the $i$th submatrix, to be specified later. We then form $\hat{\bm{X}} = [\hat{\bm{x}}_1|\cdots|\hat{\bm{x}}_N]^\top$, whose rows are the estimated latent positions of all $N$ entities and can be used to support downstream tasks such as clustering, visualization, and community detection. The estimate of the population matrix $\mpp$ is then $\hat{\mpp}=\hat{\mx}\hat{\mx}^\top$.

Note that the quality of the initialization is crucial for the alternating optimization, affecting both its computational efficiency and the proximity of the converged solution to the global optimum. We therefore adopt the CMMI approach to obtain an initial alignment $\{\hat{\mw}^{(i)}_{\text{init}}\}$. Specifically, CMMI selects a spanning tree of the graph $\mathcal{G}$ and obtains the initial alignment for each submatrix by composing the pairwise alignments along the path from the root, which is taken to be the first submatrix with $\hat{\mw}^{(1)}_{\text{init}} = \mi_d$. Leveraging this high-quality initialization, the alternating optimization can converge rapidly to a stationary point that improves on the initial CMMI estimate.

The complete procedure is summarized in Algorithm~\ref{alg:gsmmi}.

\begin{algorithm}[htbp]
\caption{GSMMI for Positive Semidefinite Matrices}
\label{alg:gsmmi}
\begingroup
\setlength{\baselineskip}{10pt}
\setlength{\itemsep}{0pt}
\setlength{\parskip}{0pt}
\setlength{\abovedisplayskip}{4pt}
\setlength{\belowdisplayskip}{4pt}
\begin{algorithmic}
\REQUIRE Observed submatrices $\{\bm{A}^{(i)}\}_{i\in[m]}$ with their index sets $\{\mathcal{U}_i\}_{i\in[m]}\subseteq [N]$, embedding dimension $d$.

\STATE \textbf{Step 1: Obtain local latent position estimations $\{\hat{\bm{X}}^{(i)}\}_{i\in[m]}$}
\STATE Obtain $\hat{\bm{X}}^{(i)}\in\mathbb{R}^{n_i\times d}$ such that $\bm{A}^{(i)}\approx \hat{\bm{X}}^{(i)}\hat{\bm{X}}^{(i)\top}$ for each $i\in[m]$, where $n_i=|\mathcal{U}_i|$. See Remark~\ref{rem:individual est} for details.

\STATE \textbf{Step 2: Obtain alignment matrices $\{\hat{\bm{W}}^{(i)}\}_{i\in[m]}$ via synchronization}
\STATE Compute the error measure $c_i:=\|(\bm{A}^{(i)}-\hat{\bm{P}}^{(i)})\hat{\bm{X}}^{(i)}(\hat{\bm{X}}^{(i)\top}\hat{\bm{X}}^{(i)})^{-1}\|_F/n_i^{1/2}$ for each $i\in[m]$, where $\hat{\bm{P}}^{(i)}:=\hat{\bm{X}}^{(i)}\hat{\bm{X}}^{(i)\top}$.
\STATE Compute the weight $\pi_{i,j}:=(c_i^2+c_j^2)^{-1}$ for each pair of overlapping submatrices $(i,j)$ with $\mathcal{U}_i \cap \mathcal{U}_j \neq \varnothing$.
\STATE Solve for transformation matrices $\{\hat{\bm{W}}^{(i)}\}_{i\in[m]}$ by
\[
\{\hat{\bm{W}}^{(i)}\}_{i=1}^m = \argmin_{\bm{O}^{(1)},\ldots,\bm{O}^{(m)}\in \mathcal{O}_d}
\sum_{\substack{(i,j): \\\mathcal{U}_i \cap \mathcal{U}_j \neq \varnothing}}
\pi_{i,j}\cdot \|\hat{\bm{X}}^{(i)}_{\langle\mathcal{U}_i \cap \mathcal{U}_j\rangle}\bm{O}^{(i)}-\hat{\bm{X}}^{(j)}_{\langle\mathcal{U}_i \cap \mathcal{U}_j\rangle}\bm{O}^{(j)}\|_F^2,
\]
using alternating optimization as in \eqref{eq:procrustes_update} with $\hat{\bm{W}}^{(1)} = \mi_d$ fixed and initialization from CMMI.

\STATE \textbf{Step 3: Aggregate $\{\hat{\bm{X}}^{(i)}\hat{\bm{W}}^{(i)}\}_{i\in[m]}$}

\STATE Compute the weight $\tau_{i}:=c_i^{-2}$ for each submatrix $i\in[m]$.

\STATE Compute the weighted average
    $\hat{\bm{x}}_k = \frac{\sum_{i: k\in\mathcal{U}_i} \tau_i \cdot (\hat{\bm{X}}^{(i)}\hat{\bm{W}}^{(i)})_{\langle k\rangle}}{\sum_{i: k\in\mathcal{U}_i} \tau_i}$ for each entity $k\in[N]$.

\STATE Form $\hat{\bm{X}} = [\hat{\bm{x}}_1|\cdots|\hat{\bm{x}}_N]^\top$ and compute $\hat{\bm{P}} = \hat{\bm{X}}\hat{\bm{X}}^\top$.

\ENSURE Estimated latent position matrix $\hat{\bm{X}}$ and estimated complete matrix $\hat{\bm{P}}$.
\end{algorithmic}
\endgroup
\end{algorithm}

\begin{Remark}\label{rem:individual est}
To obtain $\hat{\bm{X}}^{(i)}\in\mathbb{R}^{n_i\times d}$ such that $\bm{A}^{(i)}\approx \hat{\bm{X}}^{(i)}\hat{\bm{X}}^{(i)\top}$ for each $i\in[m]$, various methods can be employed depending on the scenario. For example, if $\mathbb{E}(\bm{A}^{(i)})=\bm{P}^{(i)}$ with a relatively low noise level, one can simply use the scaled leading eigenvectors $\hat{\bm{X}}^{(i)}=\hat{\bm{U}}^{(i)}(\hat{\bm{\Lambda}}^{(i)})^{1/2}$, where $\hat{\bm{\Lambda}}^{(i)}=\diag(\hat{\lambda}^{(i)}_1,\dots,\hat{\lambda}^{(i)}_d)$ and $\hat{\bm{U}}^{(i)}=[\hat{\bm{u}}^{(i)}_1|\dots|\hat{\bm{u}}^{(i)}_d]$ contain the top-$d$ eigenvalues and eigenvectors of $\bm{A}^{(i)}$, respectively. 
When the noise level is more substantial, with $\bm{N}^{(i)}$ having approximately independent upper-triangular entries satisfying certain moment conditions (e.g., sub-Gaussian), a debiasing transformation can further improve accuracy: motivated by the structure of eigenvalues and eigenvalues of supercritical spiked matrix models, one can apply
\begin{align*}
    \phi(\hat{\lambda}^{(i)}_k) &\colonequals \frac{\hat{\lambda}^{(i)}_k+\sqrt{(\hat{\lambda}^{(i)}_k)^2-4\hat{\sigma}_i^2n_i}}{2}, \\
    \psi(\hat{\bm{u}}_k^{(i)}) &\colonequals \frac{1}{\sqrt{1 - \frac{\hat{\sigma}_i^2 n_i}{(\phi(\hat{\lambda}^{(i)}_k))^2}}}\cdot \hat{\bm{u}}_k^{(i)}    
\end{align*}
to form $\hat{\bm{X}}^{(i)} = [\hat{\bm{x}}^{(i)}_1|\cdots|\hat{\bm{x}}^{(i)}_d]$ with $\hat{\bm{x}}^{(i)}_k = \phi^{1/2}(\hat{\lambda}^{(i)}_k)\cdot \psi(\hat{\bm{u}}_k^{(i)})$, where the estimated noise level $\hat{\sigma}_i$ can be obtained by $\hat{\sigma}_i^2 := \|\bm{A}^{(i)}-\hat{\bm{P}}^{(i)}\|_F^2/n_i^2$ with $\hat{\bm{P}}^{(i)}=\hat{\bm{U}}^{(i)} \hat{\bm{\Lambda}}^{(i)}\hat{\bm{U}}^{(i)\top}$. 
Additionally, when $\bm{A}^{(i)}$ has close to independent and uniformly distributed missing entries, low-rank matrix completion methods (e.g., \citet{hastie2015matrix, chatterjee2015matrix, troyanskaya2001missing}) can be applied to obtain $\hat{\bm{X}}^{(i)}$.
\end{Remark}

\begin{Remark}\label{rem:ci}
The quantity $c_i$ is defined to measure the average estimation error of the latent positions. This definition is motivated by results in the literature (e.g., Lemma~D.1 in \citet{zheng2026chain}, Eq.~(4) in \citet{xie2024entrywise}), which show that, under mild conditions,
$
\hat{\bm{X}}^{(i)}\bm{O}^* - \bm{X}_{\mathcal{U}_i} \approx (\bm{A}^{(i)}-\bm{P}^{(i)})\bm{X}_{\mathcal{U}_i}(\bm{X}_{\mathcal{U}_i}^\top\bm{X}_{\mathcal{U}_i})^{-1},
$
where $\bm{O}^*$ is the optimal orthogonal alignment between $\hat{\bm{X}}^{(i)}$ and $\bm{X}_{\mathcal{U}_i}$. Thus, $c_i$ approximates the per-row estimation error $\|\hat{\bm{X}}^{(i)}\bm{O}^* - \bm{X}_{\mathcal{U}_i}\|_F/n_i^{1/2}$. Accordingly, we use it to construct inverse-variance weights, $\pi_{i,j} = (c_i^2 + c_j^2)^{-1}$ for the pairwise alignment in Step~2 and $\tau_i = c_i^{-2}$ for the aggregation in Step~3, thereby giving greater weight to overlaps between accurately estimated submatrices and to accurately estimated submatrices themselves.
\end{Remark}

\begin{Remark}\label{rem:comp}
The proposed GSMMI algorithm maintains the computational efficiency of CMMI. Specifically, each iteration of the alternating optimization requires $O(E\bar{s})$ operations, and with the high-quality initialization from CMMI, the algorithm typically converges within tens of iterations in practice. Since $E\bar{s}$ is the total size of all overlapping pairs, the additional computational cost over CMMI is modest in typical settings, making the algorithm highly efficient even for large-scale problems. 
As shown in the simulation results in Section~\ref{sec:simu} and the real-data experiments in Section~\ref{sec:real}, GSMMI achieves better accuracy than CMMI, while incurring only a modest computational overhead, ranging from nearly identical runtimes to being a few times slower.

The high-quality initialization from CMMI is crucial for computational efficiency. In the numerical experiments in Section~\ref{sec:init CMMI}, we compare GSMMI with a cold-start variant (GSMMI-cold) that uses random initialization automatically determined by the CVXR solver without CMMI information. The results show that GSMMI is tens of times faster than GSMMI-cold while achieving nearly identical accuracy, demonstrating the critical importance of warm initialization. Thus, CMMI serves as an essential foundation for developing a practically viable implementation of GSMMI.

For positive semidefinite matrices, another possible approach to obtaining a fast algorithm is as follows: the optimization problem \eqref{eq:partial MLE} can be viewed as a little Grothendieck problem over the orthogonal groups (see Appendix~\ref{sec:init CMMI} for details), for which \citet{bandeira2016approximating} proposes a fast approximation algorithm (SDP-approx) based on semidefinite programming relaxation with Gaussian random projections to recover feasible orthogonal matrices. In Appendix~\ref{sec:init CMMI}, we also compare GSMMI with SDP-approx and find that GSMMI is still several times faster while achieving higher accuracy, since SDP-approx only provides an approximate solution. Furthermore, for symmetric but indefinite matrices, as well as asymmetric or rectangular matrices, such fast approximation algorithms are not readily available, which further underscores the importance of CMMI-based initialization for the practical applicability of GSMMI.
\end{Remark}

Overall, GSMMI improves upon CMMI by exploiting all pairwise overlaps through global synchronization, while retaining comparable computational efficiency. As demonstrated in our later experiments, including both simulations and real-data analyses, GSMMI consistently achieves higher accuracy than CMMI, with substantial improvement in some typical settings, at only a modest additional computational cost.

\section{Theoretical Results}\label{sec:theory}

Since GSMMI and CMMI differ only in their alignment strategy, our comparative analysis of the two methods focuses on the difference in alignment accuracy. As discussed above, the more comprehensive use of the overlapping structure in GSMMI intuitively suggests improved alignment accuracy compared with CMMI, particularly when the number of data sources is large and the overlapping patterns are rich. To formalize this intuition, we present in this section a theoretical analysis comparing the two methods in a tractable setting that retains the essential structure of the problem, introducing several simplifying assumptions that facilitate the analysis without changing its essence.

Recall that the true signal matrix $\mpp$ is assumed to be symmetric positive semidefinite of size $N\times N$ with rank $d$, with eigendecomposition $\mpp=\muu\mLambda\muu^\top$, where $\mLambda=\diag(\lambda_1,\dots,\lambda_d)$ contains the nonzero eigenvalues of $\mpp$ in descending order and the orthonormal columns of $\muu=[\bm{u}_1|\dots|\bm{u}_d]\in\mathcal{O}_{N\times d}$ are the corresponding eigenvectors. For ease of exposition, we further assume that the eigenvalues are distinct; the extension to repeated eigenvalues requires only minor bookkeeping adjustments to the arguments that follow. We write $\mx= \muu\mLambda^{1/2}$ for the latent positions, so that $\mpp=\mx \mx^{\top}$.

For each source $i$, consider the population submatrix $\mpp^{(i)}=\mpp_{\mathcal{U}_i,\mathcal{U}_i}=\mx_{\mathcal{U}_i}\mx_{\mathcal{U}_i}^\top$ of size $n_i\times n_i$, with eigendecomposition $\mpp^{(i)}=\muu^{(i)}\mLambda^{(i)}\muu^{(i)\top}$, where $\mLambda^{(i)}=\operatorname{diag}(\lambda_1^{(i)},\dots,\lambda_d^{(i)})$ contains its eigenvalues in descending order and $\muu^{(i)}=[\bm{u}_1^{(i)}|\dots|\bm{u}_d^{(i)}]\in\mathcal{O}_{n_i\times d}$ the corresponding eigenvectors. We write $\mx^{(i)}=\muu^{(i)}(\mLambda^{(i)})^{1/2}$, so that $\mpp^{(i)}=\mx^{(i)}\mx^{(i)\top}$. Note that $\mx_{\mathcal{U}_i}$ and $\mx^{(i)}$ are equal up to multiplication on the right by some $d\times d$ orthogonal matrix.

We consider the setting in which all $m$ observed blocks have the same size: each entity set $\mathcal{U}_i$ has size $n_i\equiv n = \lfloor\alpha N\rfloor$ for some $\alpha \in (0,1]$. Each observed matrix $\ma^{(i)} \in \mathbb{R}^{n\times n}$ is given by $\ma^{(i)} = \mpp^{(i)} + \mn^{(i)}$, where the symmetric noise matrix $\mn^{(i)}$ is assumed to have independent off-diagonal entries $\mn_{s,t}^{(i)} \sim \mathcal{N}(0, \sigma_i^2)$ for $s < t$ (with $\mn_{t,s}^{(i)} = \mn_{s,t}^{(i)}$) and independent diagonal entries $\mn_{s,s}^{(i)} \sim \mathcal{N}(0, 2\sigma_i^2)$. We further take the noise level to be constant across blocks, $\sigma_i \equiv \sigma$.

As noted in Remark~\ref{rem:individual est}, a debiasing transformation can further improve the accuracy of the estimate $\hat{\bm{X}}^{(i)}$ in the Gaussian noise setting, which we adopt throughout our analysis. Specifically, let $\hat{\bm{\Lambda}}^{(i)}=\diag(\hat{\lambda}^{(i)}_1,\dots,\hat{\lambda}^{(i)}_d)$ contain the top-$d$ eigenvalues of $\bm{A}^{(i)}$ in descending order and $\hat{\bm{U}}^{(i)}=[\hat{\bm{u}}^{(i)}_1|\dots|\hat{\bm{u}}^{(i)}_d]$ the corresponding eigenvectors. We estimate the noise variance by $\hat{\sigma}_i^2 := \|\bm{A}^{(i)}-\hat{\bm{P}}^{(i)}\|_F^2/n_i^2$, where $\hat{\bm{P}}^{(i)}=\hat{\bm{U}}^{(i)} \hat{\bm{\Lambda}}^{(i)}\hat{\bm{U}}^{(i)\top}$, and apply the debiasing transformations
$$
\phi(\hat{\lambda}^{(i)}_k) = \frac{\hat{\lambda}^{(i)}_k+\sqrt{(\hat{\lambda}^{(i)}_k)^2-4\hat{\sigma}_i^2n_i}}{2}, \qquad
\psi(\hat{\bm{u}}_k^{(i)}) = \frac{1}{\sqrt{1 - \frac{\hat{\sigma}_i^2 n_i}{(\phi(\hat{\lambda}^{(i)}_k))^2}}}\cdot \hat{\bm{u}}_k^{(i)}
$$
to form the debiased estimates $\hat{\bm{x}}^{(i)}_k = \phi^{1/2}(\hat{\lambda}^{(i)}_k)\cdot \psi(\hat{\bm{u}}_k^{(i)})$ for $k\in[d]$, and set $\hat{\bm{X}}^{(i)} = [\hat{\bm{x}}^{(i)}_1|\cdots|\hat{\bm{x}}^{(i)}_d]$. Throughout this section, we write ${\bm{x}}^{(i)}_k$ for the $k$th column of $\mx^{(i)}$, corresponding to the $k$th latent dimension.
The following theorem characterizes the asymptotic distribution of the debiased estimator $\hat{\bm{x}}^{(i)}_k$ around ${\bm{x}}^{(i)}_k$; it is in the spirit of various similar theorems about Gaussian fluctuations of eigenvectors of supercritical spiked matrix models \cite{LCC-2024-GaussianFluctuationsEigenvectors,CK-2025-EigenvectorFluctuationsSpikedMatrix}.

\begin{Theorem}\label{thm:xi}
Fix a source $i\in[m]$, and suppose $\lambda_k^{(i)} \to \alpha\lambda_k$ for each $k\in[d]$. For each $k\in[d]$, define $\rho_k:=\sqrt{\frac{\alpha\lambda_k^2}{\sigma^2N}}$ and assume $\rho_k\to\rho_k^*$ for some constant $\rho_k^*>0$. Then, for any $k\in[d]$ with $\rho_k^*> 1$ and any finite ordered index set $\mathcal{I}\subset[n]$, we have
\begin{equation*}
\frac{\sqrt{\alpha\lambda_k}}{\sigma}\sqrt{1-\frac{\sigma^2N}{\alpha\lambda_k^2}}
\left[\hat{\bm{x}}_k^{(i)}-\bm{x}_k^{(i)}\right]_{\mathcal{I}}
\xrightarrow{\mathcal{D}} 
\mathcal{N}(\bm{0},\bm{I}_{|\mathcal{I}|}),
\end{equation*}
where we resolve the sign ambiguity in $\hat{\bm{x}}_k^{(i)}$ by making it have positive correlation with $\bm x_k^{(i)}$.
Here the limit is as $N \to \infty$ and in distribution.
\end{Theorem}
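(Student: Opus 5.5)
Fix the source $i$ and write $\bm{A}=\bm{A}^{(i)}=\bm{P}^{(i)}+\bm{N}^{(i)}$, where $\bm{N}^{(i)}$ is a GOE-type matrix with entry variance $\sigma^2$. Rescaling by $(\sigma^2 n)^{-1/2}$ gives $\bm{A}/\sqrt{\sigma^2 n}=\sum_k \theta_k \bm{u}_k^{(i)}\bm{u}_k^{(i)\top}+\bm{W}$, with $\bm{W}$ a Wigner matrix of semicircle radius $2$ and spike strengths
\[
\theta_k=\frac{\lambda_k^{(i)}}{\sigma\sqrt{n}}\to \frac{\alpha\lambda_k}{\sigma\sqrt{\alpha N}}=\rho_k^*.
\]
So the quantity $\rho_k$ in the statement is exactly the spiked-model signal-to-noise ratio, and $\rho_k^*>1$ is the supercritical (BBP) regime.

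The argument combines the known Gaussian fluctuation theorems for supercritical spiked Wigner models \cite{LCC-2024-GaussianFluctuationsEigenvectors,CK-2025-EigenvectorFluctuationsSpikedMatrix} with a delta-method argument for the plug-in corrections $\phi$ and $\psi$. I will use the following three facts, in this order.

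\textbf{(i) Eigenvalue.} We have $\hat\lambda_k/\sqrt{\sigma^2 n}=\theta_k+\theta_k^{-1}+O_p(n^{-1/2})$. Since distinct eigenvalues give distinct $\rho_k^*$, the spikes separate and $\hat\lambda_k$ is isolated.

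\textbf{(ii) Noise level.} We have $\hat\sigma^2/\sigma^2=1+O_p(n^{-1})$. This holds because
\[
\|\bm{A}-\hat{\bm{P}}\|_F^2=\|\bm{N}\|_F^2-\sum_{k\le d}\hat\lambda_k^2+O_p(n),
\]
and $\|\bm{N}\|_F^2=\sigma^2n^2(1+O_p(n^{-1}))$.

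\textbf{(iii) Eigenvector.} This is the key input. For any fixed finite $\mathcal{I}$, with the sign chosen so that $\langle \hat{\bm{u}}_k,\bm{u}_k\rangle>0$, write
\[
\hat{\bm{u}}_k=\sqrt{1-\theta_k^{-2}}\,\bm{u}_k+\bm{r}_k .
\]
Then $\sqrt{n}[\bm{r}_k]_{\mathcal{I}}$ is asymptotically $\mathcal{N}(\bm{0},\theta_k^{-2}\bm{I})$ plus a bias term proportional to $[\bm{u}_k]_{\mathcal{I}}$ (and to $[\bm{u}_l]_{\mathcal{I}}$ for $l\ne k$) coming from the overlap fluctuation. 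The Gaussian part comes from the resolvent expansion $\bm{r}_k\approx \hat\lambda_k^{-1}\bm{N}\bm{u}_k\cdot(\text{scalar})$ projected off the spikes. Each coordinate of $\bm{N}\bm{u}_k$ is exactly Gaussian with variance $\sigma^2$, and the coordinates over $\mathcal{I}$ are asymptotically independent.

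Next, substitute into the estimator. From (i) and (ii), $\phi(\hat\lambda_k)=\lambda_k^{(i)}(1+O_p(n^{-1/2}))$, since $\phi$ inverts the map $\lambda\mapsto\lambda+\sigma^2n/\lambda$. The factor in $\psi$ is $(1-\theta_k^{-2})^{-1/2}(1+O_p(n^{-1/2}))$. Therefore
\[
\hat{\bm{x}}_k-\bm{x}_k=\sqrt{\lambda_k^{(i)}}\,(1-\theta_k^{-2})^{-1/2}\,\bm{r}_k+\bm{x}_k\cdot O_p(n^{-1/2}).
\]
Because the paper's scaling assumes delocalized latent positions, $[\bm{u}_k]_{\mathcal{I}}=O(n^{-1/2})$, and hence $[\bm{x}_k]_{\mathcal{I}}=O(\sqrt{\lambda_k/n})$. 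Multiplying by the normalizer $\frac{\sqrt{\alpha\lambda_k}}{\sigma}\sqrt{1-\rho_k^{-2}}$, the second term becomes
\[
O_p\!\left(\frac{\lambda_k}{\sigma n}\right)=O_p\!\left(\frac{\rho_k}{\sqrt{n}}\right)=o_p(1).
\]
The first term becomes
\[
\frac{\sqrt{\alpha\lambda_k\lambda_k^{(i)}}}{\sigma}\,[\bm{r}_k]_{\mathcal{I}}\approx\frac{\alpha\lambda_k}{\sigma}\,[\bm{r}_k]_{\mathcal{I}}=\theta_k\sqrt{n}\,[\bm{r}_k]_{\mathcal{I}}\xrightarrow{\mathcal{D}}\mathcal{N}(\bm{0},\bm{I}).
\]
The same delocalization kills the bias term in (iii).

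The main obstacle is step (iii): obtaining the entrywise Gaussian limit with the exact variance $\theta_k^{-2}/n$ and the correct asymptotic independence across a finite index set, including handling the non-i.i.d.\ structure that $\bm{u}_k^{(i)}$ is deterministic but arbitrary. My first attempt would be to cite the cited fluctuation theorems directly. Failing that, I would prove it through the resolvent identity
\[
\hat{\bm{u}}_k\propto(\hat\lambda_k-\bm{N})^{-1}\bm{U}\bm{\Lambda}\bm{U}^\top\hat{\bm{u}}_k,
\]
expand $(\hat\lambda_k-\bm{N})^{-1}=\hat\lambda_k^{-1}+\hat\lambda_k^{-2}\bm{N}+\dots$ using isotropic local laws, and identify the leading fluctuation as $\hat\lambda_k^{-1}\bm{N}\bm{u}_k$ restricted to $\mathcal{I}$.
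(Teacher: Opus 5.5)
Your overall structure matches the paper's. Both proofs locate the outlier $\hat\lambda_k$ and the overlap limit $\langle\hat{\bm u}_k,\bm u_k\rangle^2\to 1-\theta_k^{-2}$ (the paper cites Benaych-Georges--Nadakuditi), then use an entrywise CLT for $\hat{\bm u}_k$, then push it through $\phi$ and $\psi$. Your plug-in step is actually more careful than the paper's. You verify $\hat\sigma^2/\sigma^2=1+O_p(n^{-1})$ and track rates. You also say explicitly that delocalization of $[\bm U]_{\mathcal I}$ is needed to kill the $\bm x_k\cdot O_p(n^{-1/2})$ term and the overlap-fluctuation bias. That assumption is not in the statement, and the paper uses it tacitly when it passes from $\|\bm\varepsilon\|\to0$ to a coordinatewise limit. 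One correction: distinct $\lambda_k$ do not give distinct limits $\rho_k^*$, so separation of the $\rho_k^*$ has to be assumed.

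The real difference is step (iii). The paper uses no resolvents. It writes $\hat{\bm u}_k=\sum_\ell\tau_\ell\bm u_\ell+\sqrt{1-\|\bm\tau\|^2}\,\hat{\bm u}_k^\#$ with $\tau_\ell=\bm u_\ell^\top\hat{\bm u}_k$. The GOE noise (diagonal variance $2\sigma^2$) satisfies $\bm O\bm N\bm O^\top\overset{d}{=}\bm N$ for every orthogonal $\bm O$ with $\bm O\bm U=\bm U$, so $\hat{\bm u}_k^\#$ is uniform on the unit sphere of $\bm U_\perp$. Its coordinates on $\mathcal I$ are then asymptotically i.i.d.\ $\mathcal N(0,1/n)$, and the variance $1-\zeta_k=\theta_k^{-2}$ comes straight from the overlap limit.

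Citing the fluctuation theorems is a legitimate substitute, and would survive non-Gaussian noise where invariance fails. However, the mechanism you describe, which is also your fallback, is wrong: the coordinate fluctuation of $\hat{\bm u}_k$ is not asymptotically a multiple of $[\bm N\bm u_k]_{\mathcal I}$.

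In your rescaled notation, $\hat{\bm u}_k\approx\theta_k\tau_k(\hat\mu_k-\bm W)^{-1}\bm u_k$ with $\hat\mu_k=\hat\lambda_k/\sqrt{\sigma^2 n}$. Every Neumann term $\theta_k\tau_k\hat\mu_k^{-j-1}[\bm W^j\bm u_k]_s$, $j\ge1$, fluctuates at the same scale $n^{-1/2}$. Keeping only $j=1$ gives variance $\theta_k^4(\theta_k^2-1)/\{(\theta_k^2+1)^4 n\}$ instead of $\theta_k^{-2}/n$.

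Tuning the scalar does not rescue it either. A Schur-complement computation shows the true fluctuation is $\approx\theta_k\tau_k\,m\sum_{r}W_{sr}[\bm G^{(s)}\bm u_k]_r$. Here $m=\theta_k^{-1}$ is $\int(z-x)^{-1}\,d\rho_{\mathrm{sc}}(x)$ at $z=\theta_k+\theta_k^{-1}$, and $\bm G^{(s)}$ is the resolvent of the minor. Its squared correlation with $[\bm W\bm u_k]_s$ tends to $m^2/(-m')=1-\theta_k^{-2}$. So a fraction $\theta_k^{-2}$ of the variance is missed, and near the BBP threshold that is nearly all of it.

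A resolvent proof must keep the full resolvent, which yields $\theta_k^2\tau_k^2m^2(-m')/n=\theta_k^{-2}/n$. The invariance argument gives this with no local-law input.
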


\begin{Remark}\label{rmk:cond_xi}
Theorem~\ref{thm:xi} concerns each source $i$ individually, but we express both the condition and the result in terms of the global parameters, so that both the quantity $\rho_k$ and the limiting distribution are expressed through $\lambda_k$ and $N$ rather than their block-level counterparts. This is for simplicity in the subsequent analysis. The condition $\lambda_k^{(i)} \to \alpha\lambda_k$ serves to connect each block to the global scale. Without it, one could still carry out the analysis in terms of the block-level quantities, albeit with more cumbersome expressions. This condition is also very mild, holding almost surely for example when the eigenvectors $\muu$ are sufficiently incoherent and each index of $\mathcal{U}_i$ is sampled uniformly at random from $[N]$; see Lemma~\ref{lemma:lambdai and lambda} for details.
\end{Remark}

Building on Theorem~\ref{thm:xi}, which characterizes the local estimation error, we now compare the alignment accuracy of CMMI and GSMMI. Because of the difficulty of directly analyzing the synchronization of the estimated latent positions across sources, we model each estimated latent position below as a simplified signal-plus-noise observation, with parameters calibrated to match the asymptotic behavior in Theorem~\ref{thm:xi}, and analyze the alignment under this model in two representative cases: $d=1$ in Section~\ref{sec:d=1}, where the alignment reduces to a sign-synchronization problem, and $d=2$ in Section~\ref{sec:d=2}, where it involves two-dimensional rotational alignment which can be expressed as an alignment over a complex phase. These two cases capture the key distinction between the pairwise, chain-linked alignment of CMMI and the global synchronization of GSMMI. For $d=1$, we compare the conditions under which each method achieves perfect alignment and show that GSMMI succeeds under weaker requirements; for $d=2$, we compare the resulting alignment error and show that GSMMI attains a sharper bound.

\subsection{Rank $d=1$}\label{sec:d=1}

We begin with the case $d=1$. In this case, the global latent position matrix $\bm{X}$ reduces to an $N$-dimensional vector $\bm{x}$, and similarly the local latent position matrices $\bm{X}_{\mathcal{U}_i}$, $\bm{X}^{(i)}$, and $\hat{\bm{X}}^{(i)}$ reduce to $n$-dimensional vectors $\bm{x}_{\mathcal{U}_i}$, $\bm{x}^{(i)}$, and $\hat{\bm{x}}^{(i)}$, respectively. The orthogonal transformation for $d=1$ reduces to a sign in $\{-1,+1\}$. Theorem~\ref{thm:xi} in this case then shows that $\hat{\bm{x}}^{(i)}$ is approximately $\bm{x}^{(i)}$  (up to a potential sign) corrupted by independent Gaussian noise. And since $\bm{x}^{(i)}$ can differ from $\bm{x}_{\mathcal{U}_i}$ by a sign, this motivates the following simplified model. For each source $i \in [m]$, suppose we observe
\begin{equation}\label{eq:simple}
	\hat{\bm{x}}^{(i)} = w^{(i)} \cdot \bm{x}_{\mathcal{U}_i} + \gamma \cdot  \bm{\delta}_i,
\end{equation}
where $\bm{x}\in\mathbb{R}^N$ is the full signal vector, $\mathcal{U}_i \subset [N]$ is a size-$n$ subset with $n = \alpha N$ indexing the entities observed by source $i$, $\bm{x}_{\mathcal{U}_i}$ is the corresponding subvector, $w^{(i)}\in\{-1,+1\}$ is the sign relating $\bm{x}_{\mathcal{U}_i}$ and $\hat{\bm{x}}^{(i)}$, $\bm{\delta}_i \sim \mathcal{N}(\bm{0}, \bm{I}_n)$ is standard Gaussian noise, and $\gamma > 0$ is the noise level. To match the asymptotic behavior in Theorem~\ref{thm:xi}, we set
$$
\gamma
=\frac{\sigma}{\sqrt{\alpha\lambda_1}\sqrt{1-\frac{\sigma^2N}{\alpha\lambda_1^2}}}
=\sigma\sqrt{\frac{\lambda_1}{\alpha\lambda_1^2-\sigma^2N}}.
$$


\paragraph{Analysis of CMMI.}
For any pair $(i, j)$ with overlapping coordinates $\mathcal{S}_{i,j}:=\mathcal{U}_i \cap \mathcal{U}_j \neq \varnothing$, CMMI estimates the relative sign between sources $i$ and $j$ by 
$$
\hat{w}^{(i,j)} = \argmin_{o \in \{-1,+1\}} \left\| \hat{\bm{x}}^{(i)}_{\langle\mathcal{S}_{i,j}\rangle}\cdot o -  \hat{\bm{x}}^{(j)}_{\langle\mathcal{S}_{i,j}\rangle} \right\|^2
= \operatorname{sgn}\!\left(\langle \hat{\bm{x}}^{(i)}_{\langle\mathcal{S}_{i,j}\rangle}, \hat{\bm{x}}^{(j)}_{\langle\mathcal{S}_{i,j}\rangle} \rangle\right),
$$
where the second equality follows by expanding the squared norm $
\left\| \hat{\bm{x}}^{(i)}_{\langle\mathcal{S}_{i,j}\rangle}\cdot o - \hat{\bm{x}}^{(j)}_{\langle\mathcal{S}_{i,j}\rangle} \right\|^2 = \left\|\hat{\bm{x}}^{(i)}_{\langle\mathcal{S}_{i,j}\rangle}\right\|^2 + \left\|\hat{\bm{x}}^{(j)}_{\langle\mathcal{S}_{i,j}\rangle}\right\|^2 - 2o \langle \hat{\bm{x}}^{(i)}_{\langle\mathcal{S}_{i,j}\rangle}, \hat{\bm{x}}^{(j)}_{\langle\mathcal{S}_{i,j}\rangle} \rangle
$. Based on these pairwise estimates, CMMI builds a graph on the $m$ sources, connecting $i$ and $j$ whenever $\mathcal{S}_{i,j} \neq \varnothing$ and labeling the edge by $\hat{w}^{(i,j)}$. It then selects a spanning tree and propagates the signs along it to align all $\hat{\bm{x}}^{(i)}$ to a common orientation (up to an overall $\pm1$ ambiguity): it fixes $\hat{w}^{(1)} = +1$ at the root, and for each tree edge $(i,j)$ sets $\hat{w}^{(j)} = \hat{w}^{(i,j)} \hat{w}^{(i)}$.

The following theorem establishes a sufficient condition under which CMMI achieves perfect alignment under model~\eqref{eq:simple}.

\begin{Theorem}[(Analysis of CMMI for $d=1$)]\label{thm:cmmi_error}
Consider model~\eqref{eq:simple}. Assume $\rho_1=\sqrt{\frac{\alpha\lambda_1^2}{\sigma^2N}}\to\rho_1^*> 1$, $|\mathcal{S}_{i,j}|\asymp \alpha^2 N$ and $\|\bm{x}_{\mathcal{S}_{i,j}}\|^2\asymp\alpha^2\lambda_1$ for the overlaps along the spanning tree, and $\max_\ell x_\ell^2 = o(\lambda_1\alpha^2)$. Then CMMI achieves perfect alignment with high probability provided
$$
\alpha^2N=\omega(\log D + \log N),
$$
where $D \le m-1$ is the depth of the spanning tree.
\end{Theorem}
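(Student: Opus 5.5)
Perfect alignment means every tree edge estimate $\hat w^{(i,j)}$ equals the true relative sign $w^{(i)}w^{(j)}$; composing these signs along the unique root-to-vertex paths then recovers every $\hat w^{(j)}$ up to the global sign. So the plan is to bound the failure probability of a single tree edge and take a union bound over the $m-1$ tree edges. We will organise the union bound by depth: the relevant events are the edges traversed by root-to-leaf paths of length at most $D$. This is how $\log D$ enters the statement, together with a $\log N$ term coming from the high-probability requirement $1-N^{-c}$.

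\emph{Step 1: signal and noise in a single inner product.} Fix a tree edge $(i,j)$ with overlap $\mathcal{S}=\mathcal{S}_{i,j}$, and multiply by the true signs. Then
\[
w^{(i)}w^{(j)}\langle \hat{\bm x}^{(i)}_{\langle\mathcal S\rangle},\hat{\bm x}^{(j)}_{\langle\mathcal S\rangle}\rangle
=\|\bm x_{\mathcal S}\|^2+\gamma\langle \bm x_{\mathcal S}, w^{(j)}\bm\delta_{j,\mathcal S}+w^{(i)}\bm\delta_{i,\mathcal S}\rangle+\gamma^2\langle\bm\delta_{i,\mathcal S},\bm\delta_{j,\mathcal S}\rangle .
\]
An error on this edge requires the right-hand side to be nonpositive.
\begin{itemize}
\item The signal term is $\|\bm x_{\mathcal S}\|^2\asymp\alpha^2\lambda_1$.
\item The linear term is exactly $\mathcal N(0,2\gamma^2\|\bm x_{\mathcal S}\|^2)$. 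Its tail is therefore $\exp(-c\|\bm x_{\mathcal S}\|^2/\gamma^2)$.
\item The quadratic term is a sum of $|\mathcal S|\asymp\alpha^2N$ products of independent standard Gaussians. Bernstein's inequality for sub-exponential variables gives tail $\exp\bigl(-c\min(t^2/(\gamma^4|\mathcal S|),\,t/\gamma^2)\bigr)$ at level $t$.
\end{itemize}

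\emph{Step 2: plug in $\gamma$.} Since $\rho_1\to\rho_1^*>1$, we have $\gamma^2=\sigma^2\lambda_1/(\alpha\lambda_1^2-\sigma^2N)\asymp \sigma^2/(\alpha\lambda_1)\asymp \alpha\lambda_1/N$, using $\sigma^2\asymp\alpha\lambda_1^2/N$.
\begin{itemize}
\item For the linear term, $\|\bm x_{\mathcal S}\|^2/\gamma^2\asymp \alpha^2\lambda_1\cdot N/(\alpha\lambda_1)=\alpha N$.
\item For the quadratic term at $t=\tfrac13\|\bm x_{\mathcal S}\|^2$, we get $t/\gamma^2\asymp\alpha N$ and $t^2/(\gamma^4|\mathcal S|)\asymp (\alpha N)^2/(\alpha^2 N)=N$.
\end{itemize}
Both exponents are at least of order $\alpha^2N$ because $\alpha\le1$. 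Hence each edge fails with probability at most $\exp(-c\,\alpha^2N)$.

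The assumption $\max_\ell x_\ell^2=o(\lambda_1\alpha^2)$ is not needed for the exact Gaussian computation above. I would mention that it guarantees no single coordinate dominates $\|\bm x_{\mathcal S}\|^2$. This matters if one wants the argument to be robust to non-Gaussian or dependent perturbations, e.g.\ via a Lindeberg/Bernstein argument calibrated to Theorem~\ref{thm:xi}.

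\emph{Step 3: union bound.} The union bound costs a factor equal to the number of tree edges. For a tree of depth $D$ we want this factor controlled in terms of $D$ and $N$. The failure probability is then bounded by $(\text{number of edges})\,e^{-c\alpha^2N}$, which is $\le N^{-c'}$ once $\alpha^2N=\omega(\log D+\log N)$.

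\emph{Main obstacle.} The delicate step is this last bookkeeping: the number of tree edges is $m-1$, not $D$. I would first bound $m$ polynomially in $N$, which is natural when every $\mathcal U_i$ is a distinct subset relevant to the problem, so that $\log m\lesssim \log N+\log D$. If that is unavailable, I would instead argue path-by-path: for each vertex, a union over the at most $D$ edges on its root path, followed by a union over vertices absorbed into the $N^{-c}$ high-probability budget. The rest is routine Gaussian concentration.
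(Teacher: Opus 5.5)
\textbf{Comparison with the paper's route.} Your proposal is correct in outline, but it controls the per-edge error differently from the paper. The paper writes the noise in $A_{i,j}$ as a sum of independent centered terms $Y_\ell$. It uses $\max_\ell x_\ell^2=o(\lambda_1\alpha^2)$ to verify the Lyapunov condition and replaces the error probability by $\Phi\bigl(-\|\bm x_{\mathcal S_{i,j}}\|^2/\sqrt{2\gamma^2\|\bm x_{\mathcal S_{i,j}}\|^2+\gamma^4|\mathcal S_{i,j}|}\bigr)$. It then applies $\Phi(-x)\le e^{-x^2/2}$.

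You instead use that the linear part is exactly $\mathcal N(0,2\gamma^2\|\bm x_{\mathcal S_{i,j}}\|^2)$, and you bound the bilinear part by Bernstein. Your route gives a genuinely non-asymptotic bound. A CLT controls the probability only up to an additive $o(1)$ error, which cannot certify a probability of order $e^{-c\alpha^2N}$. So your concentration argument is the rigorous version of the paper's step, and, as you observe, it makes the delocalization assumption unnecessary. The paper's route only adds the sharp Gaussian constant in the exponent, which is irrelevant for an $\omega(\cdot)$ condition.

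\textbf{The value of $\gamma^2$.} You have $\gamma^2=\lambda_1/\bigl(N(\rho_1^2-1)\bigr)\asymp\lambda_1/N$, not $\alpha\lambda_1/N$: in $\sigma^2/(\alpha\lambda_1)$ with $\sigma^2\asymp\alpha\lambda_1^2/N$, the $\alpha$ cancels. Consequently the linear-term exponent and both Bernstein exponents are all $\asymp\alpha^2N$, not $\alpha N$ and $N$. Your per-edge bound $e^{-c\alpha^2N}$ is still right, but there is no slack coming from $\alpha\le1$.

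\textbf{The union bound.} Your worry here is justified, and the paper does not resolve it. Its $D\epsilon$ is a union over the edges of a single root-to-leaf path, whereas perfect alignment requires all $m-1$ tree edges to be correct. Your path-by-path fallback does not help either: the union over vertices brings back a factor $m$ (in fact $mD$). Distinctness of the $\mathcal U_i$ does not make $m$ polynomial in $N$, since there are $\binom{N}{n}$ subsets.

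What closes the argument is an explicit assumption such as $m\le N^{C}$. Under it, $(m-1)e^{-c\alpha^2N}=N^{-\omega(1)}$, and the $\log D$ term is immaterial. Without such an assumption, $\log D$ should be replaced by $\log m$. For example, a star-shaped spanning tree has $D=1$ but may have superpolynomially many leaves, and their edge errors are conditionally independent given $\bm\delta_1$.
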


\begin{Remark}
The detectability threshold condition $\rho_1^*>1$ comes from Theorem~\ref{thm:xi}. 
The conditions $|\mathcal{S}_{i,j}|\asymp\alpha^2 N$ and $\|\bm{x}_{\mathcal{S}_{i,j}}\|^2\asymp\alpha^2\lambda_1$ specify a reasonable size and signal strength for the overlaps used by CMMI, and hold with high probability when the $\{\mathcal{U}_i\}$ are sampled uniformly at random from $[N]$ and independently across $i$.
Finally, the delocalization condition $\max_\ell x_\ell^2 = o(\lambda_1\alpha^2)$ is a mild assumption on the distribution of the signal across coordinates, ensuring that no single coordinate carries too much signal.
\end{Remark}

\paragraph{Analysis of GSMMI.}
Unlike CMMI, which estimates signs pairwise, GSMMI formulates the alignment as a global synchronization problem, minimizing the total disagreement across all pairs:
\begin{equation*}
	\argmin_{o^{(1)},\dots,o^{(m)}\in \{\pm 1\}}
\sum_{i,j\in[m]}
\left\| \hat{\bm{x}}^{(i)}_{\langle\mathcal{S}_{i,j}\rangle} o^{(i)} - \hat{\bm{x}}^{(j)}_{\langle\mathcal{S}_{i,j}\rangle} o^{(j)} \right\|^2,
\end{equation*}
where for simplicity we take equal weights $\pi_{i,j} \equiv 1$, which is justified in our setting since all submatrices have uniform quality. Letting $A_{i,j} := \langle \hat{\bm{x}}^{(i)}_{\langle\mathcal{S}_{i,j}\rangle}, \hat{\bm{x}}^{(j)}_{\langle\mathcal{S}_{i,j}\rangle} \rangle$ for $i\neq j$, this is equivalent to
$$
\max_{o^{(i)} \in \{\pm 1\}} \sum_{i<j} A_{i,j}\, o^{(i)} o^{(j)}.
$$
As a proxy for GSMMI, we consider a spectral relaxation of this problem. Let $\bm{A} = (A_{i,j})_{i,j \in [m]}$ be the pairwise correlation matrix with zero diagonal, and let $\hat{\bm{v}}$ be its leading eigenvector. The estimated signs are then $\hat{w}^{(i)} = \operatorname{sgn}(\hat{\bm{v}}_i)$ for $i \in [m]$.

The behavior of this spectral relaxation depends on the structure of the overlaps. We define the symmetric {overlap matrix} $\bm{S} = (S_{i,j})_{i,j\in[m]}$ by
\begin{equation}\label{eq:S}
S_{i,j} := \|\bm{x}_{\mathcal{S}_{i,j}}\|^2 \ \text{ for } i \neq j, \qquad S_{i,i} := 0,
\end{equation}
and let $\lambda_1(\bm{S})$ and $\bm{s}$ 
denote its leading eigenvalue and eigenvector. Since $S_{i,j} = \|\bm{x}_{\mathcal{S}_{i,j}}\|^2 \geq 0$, the matrix $\bm{S}$ is nonnegative; assuming further that $\bm{S}$ is irreducible, i.e., the overlap graph is connected so that all sources can be integrated, the Perron--Frobenius theorem guarantees that $\lambda_1(\bm{S}) > 0$ is simple and admits a corresponding leading eigenvector $\bm{s}$ whose entries are all positive. Beyond these intrinsic properties, we impose the following mild condition on the spectrum of $\bm{S}$.

\begin{Assumption}
\label{asmp:overlap}
The overlap matrix $\bm{S}$ in \eqref{eq:S} is irreducible, and its eigenvalues and leading eigenvector $\bm{s}$ satisfy:
\begin{enumerate}
\item[(i)] $\lambda_1(\bm{S}) \asymp \lambda_1 \alpha^2 m$ and $\lambda_2(\bm{S}) = o(\lambda_1(\bm{S}))$, where $\lambda_1(\bm{S}), \lambda_2(\bm{S})$ are the largest and second-largest eigenvalues in magnitude;
\item[(ii)] $\min_{i\in[m]} s_i \gtrsim 1/\sqrt{m}$.
\end{enumerate}
\end{Assumption}

\begin{Remark}\label{rem:overlap-examples}
Assumption~\ref{asmp:overlap} requires the overlap structure to be sufficiently connected and homogeneous: (i) ensures a spectral gap with $\lambda_1(\bm{S})$ growing linearly in $m$, and (ii) ensures that every source participates in the overlaps in a balanced way. We give two example settings in which this holds. {(a) Balanced designs:} Suppose the overlaps are balanced, i.e., $\|\bm{x}_{\mathcal{S}_{i,j}}\|^2 = \lambda_1 \alpha^2 (1+o(1))$ uniformly for all $i \neq j$. Then $\bm{S} = \lambda_1 \alpha^2 (\bm{1}\bm{1}^\top - \bm{I}_m)(1+o(1))$, whose leading eigenvalue is $\lambda_1(\bm{S}) = \lambda_1 \alpha^2 (m-1)(1+o(1))$ with leading eigenvector $\bm{s} = \bm{1}/\sqrt{m}$, while all remaining eigenvalues have magnitude $\lambda_1 \alpha^2 (1+o(1)) = o(\lambda_1(\bm{S}))$; hence Assumption~\ref{asmp:overlap} holds. {(b) Uniform random sampling:} Suppose each $\mathcal{U}_i$ is an independent, uniformly random size-$n$ subset of $[N]$ with $n = \alpha N$. Then each coordinate belongs to $\mathcal{S}_{i,j}$ with probability $\alpha^2$, so that $\mathbb{E} (\bm{S}) = \lambda_1 \alpha^2 (\bm{1}\bm{1}^\top - \bm{I}_m)$ (using $\|\bm{x}\|^2 = \lambda_1$ for $d=1$). Provided the signal is delocalized, i.e., $\max_{k} x_k^2 \lesssim \lambda_1/N$, a matrix concentration argument yields $\|\bm{S} - \mathbb{E} (\bm{S})\| = o(\lambda_1 \alpha^2 m)$ with high probability. By Weyl's inequality and an eigenvector perturbation argument, this gives $\lambda_1(\bm{S}) \asymp \lambda_1 \alpha^2 m$, $\lambda_2(\bm{S}) = o(\lambda_1 \alpha^2 m)$, and $\bm{s} \approx \bm{1}/\sqrt{m}$, so Assumption~\ref{asmp:overlap} again holds.
\end{Remark}

The following theorem establishes the condition under which the spectral relaxation of GSMMI achieves perfect alignment under model~\eqref{eq:simple}.

\begin{Theorem}[(Analysis of GSMMI for $d=1$)]\label{thm:gsmmi_error}
Consider model~\eqref{eq:simple}. Assume $\rho_1=\sqrt{\frac{\alpha\lambda_1^2}{\sigma^2N}}\to\rho_1^*> 1$, $\max_{i,j}|\mathcal{S}_{i,j}|\lesssim\alpha^2 N$ and $\max_{i,j}\|\bm{x}_{\mathcal{S}_{i,j}}\|^2\lesssim\lambda_1\alpha^2$, and that the overlap matrix $\bm{S}$ defined in \eqref{eq:S} satisfies Assumption~\ref{asmp:overlap}. Then the spectral relaxation of GSMMI achieves perfect alignment with high probability provided
$$
\alpha^2 N = \omega(1) \quad \text{and} \quad m \alpha^2 N = \omega(\log^2 N).
$$
\end{Theorem}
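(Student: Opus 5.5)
Write $\bm{w}=(w^{(1)},\dots,w^{(m)})^\top$ and $\bm{D}_w=\diag(\bm{w})$. The plan is to decompose the pairwise correlation matrix $\bm{A}$ as a rotated copy of the overlap matrix $\bm{S}$ plus noise. Then we apply a perturbation argument, ideally an entrywise ($\ell_\infty$) one, to the leading eigenvector. Under model~\eqref{eq:simple}, for $i\neq j$,
\begin{equation*}
A_{i,j}=w^{(i)}w^{(j)}S_{i,j}+\gamma\bigl(w^{(i)}\langle \bm{x}_{\mathcal{S}_{i,j}},\bm{\delta}_{j,\langle\mathcal{S}_{i,j}\rangle}\rangle+w^{(j)}\langle \bm{x}_{\mathcal{S}_{i,j}},\bm{\delta}_{i,\langle\mathcal{S}_{i,j}\rangle}\rangle\bigr)+\gamma^2\langle\bm{\delta}_{i,\langle\mathcal{S}_{i,j}\rangle},\bm{\delta}_{j,\langle\mathcal{S}_{i,j}\rangle}\rangle .
\end{equation*}
Hence $\bm{A}=\bm{D}_w\bm{S}\bm{D}_w+\bm{E}_1+\bm{E}_2$. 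The leading eigenvector of $\bm{D}_w\bm{S}\bm{D}_w$ is $\bm{D}_w\bm{s}$, whose signs equal $\bm{w}$ because $\bm{s}$ has positive entries. Perfect alignment (up to a global sign) therefore follows once $\|\hat{\bm{v}}-\bm{D}_w\bm{s}\|_\infty<\min_i s_i$, after a sign choice. By Assumption~\ref{asmp:overlap}(ii), this reduces to showing $\|\hat{\bm{v}}\mp\bm{D}_w\bm{s}\|_\infty=o(1/\sqrt m)$.

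The scales are as follows. The condition $\rho_1^*>1$ gives $\gamma^2\asymp \sigma^2/(\alpha\lambda_1)\asymp \lambda_1/N$. The eigengap is $\lambda_1(\bm{S})-\lambda_2(\bm{S})\asymp\lambda_1\alpha^2 m$ by Assumption~\ref{asmp:overlap}(i).

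Next we bound the noise in operator norm. For $\bm{E}_1$, each entry is Gaussian with variance $\lesssim\gamma^2\lambda_1\alpha^2\asymp\lambda_1^2\alpha^2/N$. The entries are linear in the independent vectors $\bm{\delta}_i$, so $\bm{E}_1=\sum_i(\text{terms linear in }\bm{\delta}_i)$. A matrix Gaussian series or $\varepsilon$-net argument then gives $\|\bm{E}_1\|\lesssim\lambda_1\alpha\sqrt{m\log N/N}$ w.h.p. For $\bm{E}_2$, write $\gamma^2\langle\bm{\delta}_{i,\mathcal{S}_{i,j}},\bm{\delta}_{j,\mathcal{S}_{i,j}}\rangle$ as a chaos in independent Gaussians, with entry scale $\gamma^2\sqrt{|\mathcal{S}_{i,j}|}\asymp\lambda_1\alpha/\sqrt N$. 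Decoupling plus a Hanson--Wright or matrix chaos bound gives $\|\bm{E}_2\|\lesssim\lambda_1\alpha\sqrt{m}\,\mathrm{polylog}(N)/\sqrt N+\lambda_1 \alpha\sqrt{\log N/N}$.

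Comparing with the gap, the ratio $\|\bm{E}\|/(\lambda_1\alpha^2m)$ is of order $\sqrt{\log N}/\sqrt{m\alpha^2N}$ up to logs, and the conditions $m\alpha^2N=\omega(\log^2N)$ and $\alpha^2N=\omega(1)$ make this $o(1)$. Davis--Kahan then gives $\ell_2$ closeness. The main obstacle is upgrading to $\ell_\infty$, since $\ell_2$ error $o(1)$ does not imply $o(1/\sqrt m)$ entrywise.

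For the $\ell_\infty$ step, first I would try the first-order expansion $\hat{\bm{v}}\approx \bm{A}\bm{D}_w\bm{s}/\lambda_1(\bm{S})$ in the style of Abbe--Fan--Wang--Zhong. Coordinate $i$ of $\bm{E}\bm{D}_w\bm{s}$ is $\sum_j E_{i,j}w^{(j)}s_j$, a sum of $m$ weakly dependent terms each of size $\lesssim\lambda_1\alpha/\sqrt N$ (using the overlap bounds) with weights $s_j\lesssim1/\sqrt m$. Gaussian concentration with a union bound over $i$ gives this coordinate as $O(\lambda_1\alpha\sqrt{\log N/N})$. Dividing by $\lambda_1\alpha^2m$ gives $\sqrt{\log N}/(\sqrt m\sqrt{\alpha^2N}\sqrt m)=o(1/\sqrt m)$, where the requirement is exactly $\alpha^2N\, m=\omega(\log N)$, with the extra log absorbing chaos tails. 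The remainder would be controlled by leave-one-out: define $\bm{A}^{(-i)}$ by removing the dependence on $\bm{\delta}_i$. Row $i$ of $\bm{E}$ is not independent of the other rows, since $\bm{\delta}_j$ appears in many entries. I would therefore handle it by conditioning on $\{\bm{\delta}_j\}_{j\neq i}$, so that row $i$ becomes Gaussian in $\bm{\delta}_i$ plus a term already bounded. The second-order term $\|\bm{E}\|\,\|\hat{\bm{v}}-\bm{D}_w\bm{s}\|_2/\lambda_1(\bm{S})$ combined with $\|\bm{E}\|_{2\to\infty}$ bounds then closes the argument. The $\lambda_2(\bm{S})=o(\lambda_1(\bm{S}))$ part of the assumption ensures that the deterministic part does not spoil the expansion.
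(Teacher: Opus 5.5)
There is a genuine gap, and it sits in your noise estimates. Your operator-norm claim $\|\bm{E}_1\|\lesssim\lambda_1\alpha\sqrt{m\log N/N}$ and your first-order claim $|(\bm{E}\bm{D}_w\bm{s})_i|=O(\lambda_1\alpha\sqrt{\log N/N})$ both treat the entries of row $i$ as incoherent. They are not. The part of row $i$ that is linear in $\bm{\delta}_i$, namely $\gamma w^{(j)}\langle(\bm{\delta}_i)_{\langle\mathcal{S}_{i,j}\rangle},\bm{x}_{\mathcal{S}_{i,j}}\rangle$ for $j\neq i$, is a single Gaussian vector with covariances $\gamma^2w^{(j)}w^{(k)}\|\bm{x}_{\mathcal{U}_i\cap\mathcal{U}_j\cap\mathcal{U}_k}\|^2$. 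In $(\bm{E}\bm{D}_w\bm{s})_i$ the signs $w^{(j)}$ cancel against those in $\bm{D}_w\bm{s}$, leaving
\[
\gamma\sum_{j\neq i}s_j\bigl\langle(\bm{\delta}_i)_{\langle\mathcal{S}_{i,j}\rangle},\bm{x}_{\mathcal{S}_{i,j}}\bigr\rangle,
\qquad
\operatorname{Var}=\gamma^2\sum_{\ell\in\mathcal{U}_i}x_\ell^2\Bigl(\sum_{j\neq i:\,\ell\in\mathcal{U}_j}s_j\Bigr)^2 .
\]
The hypotheses allow this variance to be as large as $\gamma^2\lambda_1\alpha^2 m$, not $\gamma^2\lambda_1\alpha^2$. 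Divide by $\lambda_1(\bm{S})\asymp\lambda_1\alpha^2m$ and take the maximum over $i$: the term is of order $\sqrt{\log m}/(\alpha N^{1/2}m^{1/2})$. That is $o(m^{-1/2})$ only if $\alpha^2N=\omega(\log m)$, which the hypotheses do not give. For the same reason $\|\bm{E}_1\|$ can be of order $\gamma\lambda_1^{1/2}\alpha m$. Your conditioning on $\{\bm{\delta}_j\}_{j\neq i}$ handles dependence across rows, but not this coherence within a row.

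This cannot be patched. Take the following design:
\begin{itemize}
\item a common core $T$ with $|T|=\alpha^2N$;
\item $\mathcal{U}_i=T\cup R_i$ with disjoint $R_i$, and $m\asymp\alpha^{-1}$;
\item $x_\ell^2=\lambda_1/N$ for all $\ell$.
\end{itemize}
Then $\bm{S}=\|\bm{x}_T\|^2(\bm{1}\bm{1}^\top-\bm{I}_m)$, and every hypothesis holds when $\alpha\to0$ and $\alpha^2N\to\infty$; note that $m\alpha^2N\asymp\alpha N\gg N^{1/2}$. Source $i$ enters $\bm{A}$ only through $\hat{\bm{x}}^{(i)}_{\langle T\rangle}$. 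The off-diagonal $\bm{\delta}$-linear part of $\bm{E}$ is exactly $\gamma(\bm{g}\bm{w}^\top+\bm{w}\bm{g}^\top)$ with $g_i=\langle(\bm{\delta}_i)_{\langle T\rangle},\bm{x}_T\rangle$. The spectral sign of source $i$ is, to leading order, $\operatorname{sgn}\bigl(w^{(i)}\|\bm{x}_T\|^2+\gamma g_i\bigr)$. This errs with probability $\Phi\bigl(-\sqrt{(\rho_1^2-1)\alpha^2N}\bigr)$, the same as an oracle that knows $\bm{x}_T$. With $\alpha^2N=\log\log N$, the expected number of misaligned sources is $N^{1/2-o(1)}$, which diverges.

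The paper's proof takes a different route. It applies Wedin's theorem in $\ell_2$ and then the counting bound $\#\{i:\hat w^{(i)}\neq w^{(i)}\}\le\|\hat{\bm{v}}-\bm{v}\|^2/\min_i s_i^2$. This closes once $\|\bm{E}\|=o(\lambda_1\alpha^2m^{1/2})$. With the paper's bound $\|\bm{E}\|\lesssim\lambda_1\alpha N^{-1/2}(m^{1/2}+\log N)$, that requirement is exactly the two stated conditions. So your ``main obstacle'' only arises from the extra $\sqrt{\log N}$ on the $m^{1/2}$ term in your bound. However, the paper gets its bound from independent-entry results (Vershynin's Corollary~4.4.8 and the sub-exponential bound of Dai et al.). In the example above, $\|\bm{E}\|\gtrsim\gamma\lambda_1^{1/2}\alpha m$, so the paper's argument has the same hole.

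A correct version needs an additional hypothesis. One option is $\alpha^2N=\omega(\log m)$, which is what the coherent term demands. Another is control of the triple overlaps, which are about $\lambda_1\alpha^3$ under uniform sampling. In the uniform-sampling setting with $\alpha m\gg1$, the $\ell_2$ error is of order $(\alpha N)^{-1/2}$, which is not $o(m^{-1/2})$ once $m\gtrsim\alpha N$. There an entrywise leave-one-out argument like yours, with the coherent $\bm{\delta}_i$ term kept, is genuinely needed rather than an overcomplication.
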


\begin{Remark}
	The conditions $\max_{i,j}|\mathcal{S}_{i,j}|\lesssim\alpha^2 N$ and $\max_{i,j}\|\bm{x}_{\mathcal{S}_{i,j}}\|^2\lesssim\lambda_1\alpha^2$ require that no overlap is unusually large. They hold with high probability under uniform random sampling with a delocalized signal.
\end{Remark}

Comparing Theorems~\ref{thm:cmmi_error} and~\ref{thm:gsmmi_error}, we see that GSMMI improves upon CMMI when the number of sources $m$ is large. Consider $m = N^{\delta}$ for some $\delta > 0$. Since $D \le m-1$, the CMMI condition $\alpha^2 N = \omega(\log D + \log N)$ becomes
$
\alpha^2 N = \omega(\log N),$ i.e., $\alpha = \omega\!\left(N^{-1/2}\log^{1/2}N\right),
$
while the two GSMMI conditions reduce to
$
\alpha = \omega\!\left(\max\{N^{-1/2},\, N^{-(1+\delta)/2}\log N\}\right) = \omega(N^{-1/2}).
$
Thus, for perfect alignment, GSMMI removes the $\sqrt{\log N}$ factor in the required sampling fraction, reflecting the benefit of aggregating alignment information globally rather than propagating pairwise estimates along a spanning tree. The advantage of global synchronization is even clearer for $d=2$, considered in the next subsection, where we compare the alignment error.

\subsection{Rank $d=2$}\label{sec:d=2}

We now extend model~\eqref{eq:simple} to $d=2$. As in the $d=1$ case, the simplified model captures the debiased local estimate, but now each latent position is two-dimensional and the sign ambiguity is replaced by an orthogonal transformation in $\mathcal{O}_2$. For each source $i \in [m]$, suppose we observe an $n \times 2$ matrix
\begin{equation}\label{eq:simple_d2}
	\hat{\bm{X}}^{(i)} = \bm{X}_{\mathcal{U}_i} \bm{W}^{(i)\top} + \bm{\Delta}^{(i)} \bm{\Gamma},
\end{equation}
where $\bm{X} \in \mathbb{R}^{N \times 2}$ is the full signal matrix, $\bm{X}_{\mathcal{U}_i}$ is its restriction to the entities observed by source $i$, $\bm{W}^{(i)} \in \mathcal{O}_2$ is the orthogonal transformation relating $\bm{X}_{\mathcal{U}_i}$ and $\bm{X}^{(i)}$, $\bm{\Delta}^{(i)} \in \mathbb{R}^{n \times 2}$ has i.i.d. $\mathcal{N}(0,1)$ entries, and $\bm{\Gamma} = \mathrm{diag}(\gamma_1, \gamma_2)$ with $\gamma_k > 0$ carries the noise level for each dimension. To match the asymptotic behavior in Theorem~\ref{thm:xi}, we set
$$
\gamma_k = \sigma\sqrt{\frac{\lambda_k}{\alpha\lambda_k^2 - \sigma^2 N}} \quad  \text{for }k \in \{1,2\}.
$$

\paragraph{Analysis of CMMI.}
For any pair $(i,j)$ with $|\mathcal{S}_{i,j}| \geq 2$, CMMI estimates the pairwise alignment by
$$
	\hat{\bm{W}}^{(i,j)} = \argmin_{\bm{O} \in \mathcal{O}_d} 
	\left\| \hat{\bm{X}}^{(i)}_{\langle\mathcal{S}_{i,j}\rangle} \bm{O} - \hat{\bm{X}}^{(j)}_{\langle\mathcal{S}_{i,j}\rangle} \right\|_F.
$$
CMMI then constructs a spanning tree rooted at source $1$ and obtains each $\hat{\bm{W}}^{(i)}$ by composing these pairwise estimates along the unique root-to-node path, starting from $\hat{\bm{W}}^{(1)} = \bm{I}$. As in the $d=1$ case, the alignments $\{\hat{\bm{W}}^{(i)}\}$ are determined only up to a common orthogonal transformation, so here in the analysis we fix the reference coordinate system by setting $\bm{W}^{(1)} = \bm{I}$.

\begin{Theorem}[(Analysis of CMMI for $d=2$)]\label{thm:cmmi_error_d2}
Consider model~\eqref{eq:simple_d2} with $\bm{X}=\bm{U}\bm{\Lambda}^{1/2}$, where $\bm{U}\in\mathcal{O}_{N\times 2}$ and $\bm{\Lambda}=\operatorname{diag}(\lambda_1,\lambda_2)$. Suppose $\|\bm{U}\|_{2\to\infty}\lesssim d^{1/2}N^{-1/2}$, $\lambda_1\asymp\lambda_2$, and $\rho_k=\sqrt{\frac{\alpha\lambda_k^2}{\sigma^2 N}}\to\rho_k^*>1$ for $k=1,2$, and that each overlap $\mathcal{S}_{i,j}$ along the spanning tree satisfies $|\mathcal{S}_{i,j}|\asymp\alpha^2 N$ with all eigenvalues of $\bm{U}_{\mathcal{S}_{i,j}}^\top\bm{U}_{\mathcal{S}_{i,j}}$ of order $\alpha^2$. If $\alpha^2N=\omega(\log^2 N)$, then for any source $i$ at depth $D_i$ in the tree (with $D_i\leq D\leq m-1$, where $D$ is the tree depth),
$$
\|\hat{\bm{W}}^{(i)} - \bm{W}^{(i)}\|_F \lesssim D_i \cdot \frac{\log N}{\alpha N^{1/2}}
$$
with high probability.
\end{Theorem}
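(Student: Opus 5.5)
The plan is to bound the error of each pairwise Procrustes estimate along a tree edge, and then let those errors add up along the root-to-node path.

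\textbf{Step 1: target of a single edge.} In model~\eqref{eq:simple_d2}, the clean parts on the overlap are $\bm{X}_{\mathcal{S}_{i,j}}\bm{W}^{(i)\top}$ and $\bm{X}_{\mathcal{S}_{i,j}}\bm{W}^{(j)\top}$. The population Procrustes solution is therefore $\bm{W}^{(i,j)}:=\bm{W}^{(i)}\bm{W}^{(j)\top}$, since $\bm{X}_{\mathcal{S}}\bm{W}^{(i)\top}\bm{W}^{(i,j)}=\bm{X}_{\mathcal{S}}\bm{W}^{(j)\top}$.

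\textbf{Step 2: perturbation of $\mb^{(i,j)}$.} Write $\mb^{(i,j)}=\bm{M}+\bm{E}$, where
\[
\bm{M}=\bm{W}^{(i)}\bm{X}_{\mathcal{S}}^\top\bm{X}_{\mathcal{S}}\bm{W}^{(j)\top}
\]
and $\bm{E}$ collects the cross terms $\bm{W}^{(i)}\bm{X}_{\mathcal{S}}^\top\bm{\Delta}^{(j)}_{\mathcal{S}}\bm{\Gamma}$, $\bm{\Gamma}\bm{\Delta}^{(i)\top}_{\mathcal{S}}\bm{X}_{\mathcal{S}}\bm{W}^{(j)\top}$ and the noise–noise term $\bm{\Gamma}\bm{\Delta}^{(i)\top}_{\mathcal{S}}\bm{\Delta}^{(j)}_{\mathcal{S}}\bm{\Gamma}$.
\begin{itemize}
\item \emph{Signal.} The assumptions $\lambda_1\asymp\lambda_2$ and $\sigma_{\min}(\bm{U}_{\mathcal{S}}^\top\bm{U}_{\mathcal{S}})\asymp\alpha^2$ give $\sigma_{\min}(\bm{M})\asymp\alpha^2\lambda_1$.
\item \emph{Noise level.} From $\rho_k\to\rho_k^*>1$ we get $\alpha\lambda_k^2-\sigma^2N\asymp\alpha\lambda_k^2$, so $\gamma_k^2\asymp\sigma^2/(\alpha\lambda_1)\asymp\lambda_1/N$.
\item \emph{Cross terms.} Conditionally on the index sets these are Gaussian with entry variance at most $\gamma^2\|\bm{X}_{\mathcal{S}}\|^2\lesssim(\lambda_1/N)\alpha^2\lambda_1$. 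Hence they are $O(\alpha\lambda_1 N^{-1/2}\sqrt{\log N})$ with high probability.
\item \emph{Noise–noise term.} Since $\bm{\Delta}^{(i)}$ and $\bm{\Delta}^{(j)}$ are independent, this is a sum of $|\mathcal{S}|\asymp\alpha^2N$ products of independent Gaussians. Bernstein's inequality gives $O(\gamma^2(\sqrt{\alpha^2N\log N}+\log N))=O(\lambda_1\alpha N^{-1/2}\sqrt{\log N})$, using $\alpha^2N=\omega(\log^2 N)$ to absorb the $\log N$ term.
\end{itemize}
Thus $\|\bm{E}\|\lesssim\alpha\lambda_1 N^{-1/2}\log N$ (up to a $\sqrt{\log N}$ factor we do not track) with high probability, uniformly over all $m-1\le N^{O(1)}$ tree edges by a union bound. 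The $2\to\infty$ bound on $\bm{U}$ enters only to control $\|\bm{X}_{\mathcal{S}}\|$ and the variance proxies.

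\textbf{Step 3: Procrustes perturbation.} I will use the standard bound for the polar factor: if $\bm{M}$ has full rank, then
\[
\|\mathrm{polar}(\bm{M}+\bm{E})-\mathrm{polar}(\bm{M})\|_F\lesssim\|\bm{E}\|_F/\sigma_{\min}(\bm{M}),
\]
valid when $\|\bm{E}\|\le\sigma_{\min}(\bm{M})/2$. This smallness condition is ensured by $\alpha^2N=\omega(\log^2N)$. Since $\mathrm{polar}(\bm{M})=\bm{W}^{(i,j)}$, each edge satisfies
\[
\|\hat{\bm{W}}^{(i,j)}-\bm{W}^{(i,j)}\|_F\lesssim \frac{\alpha\lambda_1N^{-1/2}\log N}{\alpha^2\lambda_1}=\frac{\log N}{\alpha N^{1/2}}=:\epsilon .
\]

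\textbf{Step 4: composition along the path.} Let the root-to-$i$ path be $1=i_0,i_1,\dots,i_{D_i}=i$. With $\bm{W}^{(1)}=\bm{I}$, the composed estimate is a product of $\hat{\bm{W}}^{(i_{t-1},i_t)}$'s or their transposes, matching the convention used for $\hat{\bm{W}}^{(j)}=\hat{\bm{W}}^{(i,j)\top}\hat{\bm{W}}^{(i)}$. The true values telescope to $\bm{W}^{(i)}$. Writing the difference of the two products as a telescoping sum and using that every factor is orthogonal, so Frobenius norms are preserved under multiplication, gives
\[
\|\hat{\bm{W}}^{(i)}-\bm{W}^{(i)}\|_F\le\sum_{t=1}^{D_i}\|\hat{\bm{W}}^{(i_{t-1},i_t)}-\bm{W}^{(i_{t-1},i_t)}\|_F\lesssim D_i\epsilon .
\]

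The main obstacle is Step 2: obtaining uniform high-probability control of the noise–noise term and confirming it is no larger than the cross terms. This is exactly where the $\alpha^2N=\omega(\log^2N)$ condition is needed, both to keep $\|\bm{E}\|$ below $\sigma_{\min}(\bm{M})$ and to survive the union bound over edges.
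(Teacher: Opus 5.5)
Your proposal is correct and follows the same route as the paper. You use the same signal/cross/noise--noise decomposition of the overlap Gram matrix and the same polar-factor perturbation bound with $\sigma_{\min}\asymp\alpha^2\lambda_1$, giving per-edge error $\log N/(\alpha N^{1/2})$. Your path accumulation is also linear; you prove it by a direct telescoping argument over orthogonal factors, where the paper cites Lemma~D.3 of \citet{zheng2026chain}.

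The one real difference is in the cross terms. You bound them entrywise using exact Gaussianity of the fixed-size $2\times 2$ matrix, whereas the paper applies matrix Hoeffding to $\bm{U}_{\mathcal{S}_{i,j}}^\top\bm{\Delta}^{(j)}_{\langle\mathcal{S}_{i,j}\rangle}$. Your route gives the sharper $\sqrt{\log N}$ factor and makes the $\|\bm{U}\|_{2\to\infty}$ assumption unnecessary. Contrary to your remark, that assumption is not even needed to control $\|\bm{X}_{\mathcal{S}}\|$, which already follows from the eigenvalue assumption on $\bm{U}_{\mathcal{S}_{i,j}}^\top\bm{U}_{\mathcal{S}_{i,j}}$.
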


\begin{Remark}
The conditions for $\|\bm{U}\|_{2\to\infty}$ and $\lambda_1,\lambda_2$ are satisfied whenever $\bm{P}=\bm{X}\bm{X}^\top$ has bounded condition number and bounded coherence; see, e.g., \cite{abbe2020entrywise,chen2021spectral,recht2011simpler}. The condition for $\rho_k$ comes from Theorem~\ref{thm:xi}. The overlap conditions are the $d=2$ analogues of those in Theorem~\ref{thm:cmmi_error}, requiring each overlap to be sufficiently large and well-conditioned; they hold, for instance, under uniform random sampling with incoherent eigenvectors, where in fact $\bm{U}_{\mathcal{S}_{i,j}}^\top\bm{U}_{\mathcal{S}_{i,j}}\to\alpha^2\bm{I}_d$ (see Lemma~\ref{lemma:U^TU to alpha I}).
\end{Remark}

Notice that a key feature of the error bound is its linear growth in the tree depth $D_i$. Because $\hat{\bm{W}}^{(i)}$ is obtained by composing pairwise estimates along the root-to-node path, the alignment errors accumulate with the path length. Although Theorem~\ref{thm:cmmi_error_d2} is stated for $d=2$, the result also holds for general $d$, as the proof applies with only minor modifications.

\paragraph{Analysis of GSMMI.}

For the analysis of GSMMI for $d=2$, we exploit the structure of $2$-dimensional orthogonal transformations to embed them as complex numbers. Any rotation $\bm{W}^{(i)} \in \mathrm{SO}(2)$, where $\mathrm{SO}(2)$ denotes the group of $2\times 2$ rotation matrices, can be written as
$$
\bm{W}^{(i)} = \begin{bmatrix} \cos\theta_i & -\sin\theta_i \\ \sin\theta_i & \cos\theta_i \end{bmatrix}
$$
for some rotation angle $\theta_i \in [0, 2\pi)$. Setting $\bm{c} = [1, \imath]^\top$ with $\imath=\sqrt{-1}$, and writing $w^{(i)} := e^{-\imath\theta_i} \in \mathbb{S}^1$, where $\mathbb{S}^1 := \{z\in\mathbb{C} : |z|=1\}$ is the unit circle, we have
$$
\bm{W}^{(i)}\bm{c} = e^{-\imath\theta_i}\bm{c} = w^{(i)}\bm{c}, \qquad \bm{W}^{(i)\top}\bm{c} = e^{\imath\theta_i}\bm{c} = \overline{w^{(i)}}\,\bm{c}.
$$
We focus on rotations $\mathrm{SO}(2)$ in the main analysis; reflections, which form the rest of $\mathcal{O}_2$ and take the form $\bm{W}^{(i)} = \begin{bmatrix} \cos\theta_i & \sin\theta_i \\ \sin\theta_i & -\cos\theta_i \end{bmatrix}$, satisfy $\bm{W}^{(i)}\bm{c} = e^{\imath\theta_i}\overline{\bm{c}}$ with $\overline{\bm{c}} = [1, -\imath]^\top$ and can be handled analogously with the same guarantees.

Define the complex signal $\bm{z} := \bm{X}\bm{c} = \bm{x}_{1} + \imath\bm{x}_{2} \in \mathbb{C}^N$ and the complex-valued observations $\hat{\bm{z}}^{(i)} := \hat{\bm{X}}^{(i)} \bm{c} = \hat{\bm{x}}^{(i)}_{1} + \imath\hat{\bm{x}}^{(i)}_{2} \in \mathbb{C}^n$, where ${\bm{x}}_{k}$ and $\hat{\bm{x}}^{(i)}_{k}$ denote the $k$th columns of ${\bm{X}}$ and $\hat{\bm{X}}^{(i)}$, respectively. Right-multiplying model~\eqref{eq:simple_d2} by $\bm{c}$ gives
$$
\hat{\bm{z}}^{(i)} = \bm{X}_{\mathcal{U}_i}\bm{W}^{(i)\top}\bm{c} + \bm{\Delta}^{(i)}\bm{\Gamma}\bm{c}.
$$
Using $\bm{W}^{(i)\top}\bm{c} = \overline{w^{(i)}}\,\bm{c}$ and $\bm{X}_{\mathcal{U}_i}\bm{c} = \bm{z}_{\mathcal{U}_i}$, we obtain
\begin{equation}\label{eq:complex_model}
	\hat{\bm{z}}^{(i)} = \overline{w^{(i)}} \cdot \bm{z}_{\mathcal{U}_i} + \bm{\epsilon}_i,
\end{equation}
where $\bm{\epsilon}_i := \bm{\Delta}^{(i)}\bm{\Gamma}\bm{c} \in \mathbb{C}^n$ is complex Gaussian noise with independent entries
$$
\epsilon_{i,j} = \gamma_1 \Delta^{(i)}_{j,1} + \imath\gamma_2 \Delta^{(i)}_{j,2}.
$$

GSMMI formulates the alignment as the global synchronization problem
\begin{equation*}
	\argmin_{\bm{O}^{(1)},\dots,\bm{O}^{(m)}\in \mathrm{SO}(2)}
\sum_{i,j\in[m]: \mathcal{S}_{i,j} \neq \varnothing}
\left\| \hat{\bm{X}}^{(i)}_{\langle\mathcal{S}_{i,j}\rangle} \bm{O}^{(i)} - \hat{\bm{X}}^{(j)}_{\langle\mathcal{S}_{i,j}\rangle} \bm{O}^{(j)} \right\|_F^2.
\end{equation*} 
Since $\|\bm{M}\|_F^2 = \|\bm{M}\bm{c}\|^2$ for any real $n\times 2$ matrix $\bm{M}$, and using $\hat{\bm{z}}^{(i)}=\hat{\bm{X}}^{(i)}\bm{c}$ together with the fact that any $\bm{O}^{(i)}\in\mathrm{SO}(2)$ satisfies $\bm{O}^{(i)}\bm{c}=o^{(i)}\bm{c}$ for some $o^{(i)}\in\mathbb{S}^1$, the problem is equivalent to
$$
\argmin_{o^{(1)},\dots,o^{(m)}\in \mathbb{S}^1}
\sum_{i,j\in[m]: \mathcal{S}_{i,j} \neq \varnothing}
\left\| \hat{\bm{z}}^{(i)}_{\langle\mathcal{S}_{i,j}\rangle} o^{(i)} - \hat{\bm{z}}^{(j)}_{\langle\mathcal{S}_{i,j}\rangle} o^{(j)} \right\|^2.
$$
Expanding the objective with $\|\bm{m}\|^2 = \bm{m}^\dagger \bm{m}$ for $\bm{m} \in \mathbb{C}^n$, where $(\cdot)^\dagger$ denotes the conjugate transpose, this is equivalent to
$$
\max_{o^{(i)} \in \mathbb{S}^1} \sum_{i<j: \mathcal{S}_{i,j} \neq \varnothing} \Re\left( A_{i,j}\, \overline{o^{(i)}} o^{(j)} \right),
$$
where $\Re(\cdot)$ denotes the real part and $A_{i,j} := (\hat{\bm{z}}^{(i)}_{\langle\mathcal{S}_{i,j}\rangle})^\dagger \hat{\bm{z}}^{(j)}_{\langle\mathcal{S}_{i,j}\rangle} \in \mathbb{C}$ for $i \neq j$. To analyze GSMMI, we consider a spectral relaxation. Let $\bm{A} = (A_{i,j})_{i,j \in [m]}$ be the Hermitian matrix with $A_{i,i} = 0$ and $A_{j,i} = \overline{A_{i,j}}$, and let $\hat{\bm{v}} \in \mathbb{C}^m$ be its leading eigenvector. The estimated phases are then $\hat{w}^{(i)} = \hat{\bm{v}}_i / |\hat{\bm{v}}_i|$ for $i \in [m]$.

As in the $d=1$ case, the behavior of this relaxation depends on the overlaps, and we define the overlap matrix $\bm{S} = (S_{i,j})_{i,j\in[m]}$ by
\begin{equation}\label{eq:S_d2}
	S_{i,j} := \|\bm{z}_{\mathcal{S}_{i,j}}\|^2 \ \text{ for } i \neq j, \qquad S_{i,i} := 0.
\end{equation}
Since $\bm{S}$ is a nonnegative symmetric matrix with the same structure as in the $d=1$ case, its spectral properties are analogous, and we again impose Assumption~\ref{asmp:overlap}. The following theorem establishes the error guarantee for the spectral relaxation of GSMMI.

\begin{Theorem}[Analysis of GSMMI for $d=2$]\label{thm:gsmmi_error_d2}
Consider model~\eqref{eq:complex_model}, where $\bm{z}=\bm{X}\bm{c}$ with $\bm{X}=\bm{U}\bm{\Lambda}^{1/2}$, $\bm{\Lambda}=\operatorname{diag}(\lambda_1,\lambda_2)$, and $\bm{c}=[1,\imath]^\top$, so that $\|\bm{z}\|^2=\lambda_1+\lambda_2$. 
Suppose $\lambda_1\asymp\lambda_2$, $\rho_k=\sqrt{\frac{\alpha\lambda_k^2}{\sigma^2 N}}\to\rho_k^*>1$ for $k=1,2$, $\max_{i,j}|\mathcal{S}_{i,j}|\lesssim\alpha^2 N$, $\max_{i,j}\|\bm{z}_{\mathcal{S}_{i,j}}\|^2\lesssim (\lambda_1+\lambda_2)\alpha^2$, and that the overlap matrix $\bm{S}$ defined in \eqref{eq:S_d2} satisfies Assumption~\ref{asmp:overlap}. If $\alpha^2 N=\omega(1)$ and $m\alpha^2 N=\omega(\log^2 N)$, then the spectral relaxation of GSMMI satisfies, for all $i\in[m]$,
$$
\|\hat{\bm{W}}^{(i)} - \bm{W}^{(i)}\|_F 
\lesssim \frac{1}{\alpha N^{1/2}} + \frac{\log N}{\alpha N^{1/2} m^{1/2}}
$$
with high probability.
\end{Theorem}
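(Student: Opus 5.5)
The plan is a perturbation analysis of the Hermitian matrix $\bm{A}$ around a rank-one-like signal, followed by an entrywise eigenvector bound. Write $\bm{D}_w:=\diag(w^{(1)},\dots,w^{(m)})$. Substituting \eqref{eq:complex_model} into $A_{i,j}$ gives
$$
A_{i,j}= w^{(i)}\overline{w^{(j)}}\,S_{i,j} + E_{i,j},
$$
where $E_{i,j}$ collects two cross terms and one noise-noise term:
$$
E_{i,j}=w^{(i)}\bm{z}_{\mathcal{S}_{i,j}}^\dagger\bm{\epsilon}_{j,\langle\mathcal{S}_{i,j}\rangle}+\overline{w^{(j)}}\,\bm{\epsilon}_{i,\langle\mathcal{S}_{i,j}\rangle}^\dagger\bm{z}_{\mathcal{S}_{i,j}}+\bm{\epsilon}_{i,\langle\mathcal{S}_{i,j}\rangle}^\dagger\bm{\epsilon}_{j,\langle\mathcal{S}_{i,j}\rangle}.
$$
Hence $\bm{A}=\bm{D}_w\bm{S}\bm{D}_w^\dagger+\bm{E}$. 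By Assumption~\ref{asmp:overlap}, $\bm{D}_w\bm{S}\bm{D}_w^\dagger$ has leading eigenvector $\bm{D}_w\bm{s}$, eigenvalue $\lambda_1(\bm{S})\asymp(\lambda_1+\lambda_2)\alpha^2 m$, and a gap of the same order. Since $\rho_k^*>1$ and $\lambda_1\asymp\lambda_2$, we have $\gamma_k^2\asymp\sigma^2/(\alpha\lambda_k)\asymp \lambda_1/N$. So the noise variance per coordinate is comparable to the signal mass per coordinate. This is the key scaling used below.

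\textbf{Step 1: bounding the noise.} First I bound $\|\bm{E}\|$ and the rows of $\bm{E}$. Conditional on the overlap structure, each cross term is Gaussian with variance $\lesssim \gamma^2 S_{i,j}\lesssim \lambda_1^2\alpha^2/N$. The noise-noise term is a bilinear Gaussian chaos of variance $\lesssim \gamma^4|\mathcal{S}_{i,j}|\lesssim\lambda_1^2\alpha^2/N$. Thus every entry has standard deviation of order $\lambda_1\alpha/\sqrt N$, and the entries are $\log N$-subexponential. The entries are not independent, since $\bm{\epsilon}_i$ is reused across all pairs $j$.

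To handle the dependence, I would write the cross-term matrix as $\bm{D}_w\bm{Z}^\dagger\bm{\mathcal{E}}$-type products. Here $\bm{\mathcal{E}}$ stacks the $\bm{\epsilon}_i$, and $\bm{Z}$ is block-sparse with columns $\bm{z}_{\mathcal{S}_{i,j}}$ placed appropriately. A matrix Bernstein or Gaussian-series argument then gives $\|\bm{E}\|\lesssim \lambda_1\alpha\sqrt{m/N}\,\log N$ (up to the bound $\max|\mathcal{S}_{i,j}|\lesssim\alpha^2N$). Similarly, $\|\bm{E}_{i,\cdot}\|\lesssim \lambda_1\alpha\sqrt{m/N}\log N$ for each row.

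\textbf{Step 2: eigenvector perturbation.} Davis--Kahan gives
$$
\min_{\phi}\|\hat{\bm v}e^{\imath\phi}-\bm{D}_w\bm{s}\|\lesssim \|\bm{E}\|/\lambda_1(\bm{S})\lesssim \frac{\log N}{\alpha\sqrt{mN}}.
$$
This is $o(1)$ because $m\alpha^2N=\omega(\log^2N)$.

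\textbf{Step 3: entrywise (leave-one-out) control.} Using $\hat{\bm v}=\bm{A}\hat{\bm v}/\hat\lambda$, I expand entry $i$ as
$$
\hat v_i=\frac{(\bm{D}_w\bm{S}\bm{D}_w^\dagger\hat{\bm v})_i+(\bm{E}\hat{\bm v})_i}{\hat\lambda}.
$$
I split $(\bm{E}\hat{\bm v})_i$ into $(\bm{E}\bm{D}_w\bm{s})_i$ plus a remainder. The first piece is a single weighted sum with $s_j\asymp m^{-1/2}$. Its cross-term part $w^{(i)}\sum_j s_j\overline{w^{(j)}}\,\bm{z}_{\mathcal{S}_{i,j}}^\dagger\bm{\epsilon}_{j}+\sum_j s_j\,\bm{\epsilon}_{i}^\dagger\bm{z}_{\mathcal{S}_{i,j}}$ has size about $\lambda_1\alpha\sqrt{m}/\sqrt N\cdot m^{-1/2}\cdot\sqrt m$. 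Note the $\bm{\epsilon}_i$ part does \emph{not} average over $j$, since the same $\bm{\epsilon}_i$ is hit by the aggregated vector $\sum_j s_j\bm{z}_{\mathcal{S}_{i,j}}$ of squared norm $\asymp m\lambda_1\alpha^2$. This gives a contribution relative to $\lambda_1(\bm{S})s_i$ of order $\gamma\sqrt{m\lambda_1}\alpha/(\lambda_1\alpha^2 m)\asymp 1/(\alpha\sqrt N)$, and no $\log$ factor is needed since it is a single Gaussian per $i$ (a union bound would add $\sqrt{\log}$, which I absorb or avoid via the ``whp'' form). This is exactly the first, $m$-independent term of the bound. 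It reflects that source $i$'s own noise cannot be averaged away. The remaining pieces (the $\bm{\epsilon}_j$ sums and the chaos terms) average over $j$ and give $\log N/(\alpha\sqrt{Nm})$.

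The remainder $(\bm{E}(\hat{\bm v}-\bm{D}_w\bm{s}))_i$ is handled by a leave-one-out construction: remove the dependence on $\bm{\epsilon}_i$ to decouple, then combine with Step 2. This yields $|\hat v_i e^{\imath\phi}-w^{(i)}s_i|\lesssim m^{-1/2}\bigl(\frac1{\alpha\sqrt N}+\frac{\log N}{\alpha\sqrt{Nm}}\bigr)$.

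\textbf{Step 4: from entries to rotations.} Since $s_i\gtrsim m^{-1/2}$, normalizing $\hat v_i/|\hat v_i|$ is Lipschitz near $w^{(i)}s_i$. This transfers the bound to $|\hat w^{(i)}e^{\imath\phi}-w^{(i)}|$. Fixing the global phase by the reference $\bm{W}^{(1)}=\bm I$ costs at most a factor of two. Finally, $\|\hat{\bm W}^{(i)}-\bm W^{(i)}\|_F=\sqrt2|\hat w^{(i)}-w^{(i)}|$ completes the proof.

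The main obstacle is Step 3. The dependence of $\hat{\bm v}$ on each $\bm{\epsilon}_i$, and the reuse of $\bm{\epsilon}_i$ across a whole row, make the entrywise bound delicate, so the leave-one-out analysis must be carried out carefully. This is where I would spend most of the effort.
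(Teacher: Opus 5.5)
Your Steps 1--2 parallel the paper's proof, but the paper stops there. It bounds $\|\bm{E}\|\lesssim \gamma\lambda^{1/2}\alpha(m^{1/2}+\log^{1/2}N)+\gamma^2\alpha N^{1/2}(m^{1/2}+\log N)$, applies Wedin's theorem, and then uses only $|\hat v_i-v_i|\le\|\hat{\bm{v}}-\bm{v}\|$ together with $s_i\gtrsim m^{-1/2}$. Because the logarithm there is additive, the factor $\sqrt{m}$ lost in this crude step yields exactly the stated rate, and no entrywise analysis is needed. Your Step 1 has $\log N$ multiplying $\sqrt{m}$, so the same route would give only $\log N/(\alpha N^{1/2})$; that is why you are pushed into Step 3.

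More seriously, Step 1 is false under the theorem's hypotheses, and your own Step 3 explains why. Since $\bm{D}_{\bm w}\bm{s}$ is a unit vector, $\|\bm{E}\|\ge\|\bm{E}\bm{D}_{\bm w}\bm{s}\|$. The coordinates of $\bm{E}\bm{D}_{\bm w}\bm{s}$ contain the terms $\bm{\epsilon}_i^\dagger\sum_j s_j\bm{z}_{\mathcal{S}_{i,j}}$, which are independent over $i$ and have size $\gamma\alpha\sqrt{m\lambda_1}$ whenever $\|\sum_j s_j\bm{z}_{\mathcal{S}_{i,j}}\|^2\asymp m\lambda_1\alpha^2$, as you assert. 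Concretely, take $\mathcal{U}_i=T_0\cup R_i$ with $|T_0|=\alpha^2N$, the $R_i$ disjoint, $\alpha=N^{-1/3}$, $m\asymp N^{1/3}$, and $\|\bm{z}_{T_0}\|^2\asymp(\lambda_1+\lambda_2)\alpha^2$. Then:
every $\mathcal{S}_{i,j}$ equals $T_0$;
$\bm{S}$ is a balanced design satisfying Assumption~\ref{asmp:overlap};
$\bm{E}_1+\bm{E}_2$ is, up to its diagonal, $\bm{g}\bm{w}^\dagger+\bm{w}\bm{g}^\dagger$ with independent $g_i=\bm{\epsilon}_{i,\langle T_0\rangle}^\dagger\bm{z}_{T_0}$.
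Its norm is $\asymp\sqrt{m}\,\|\bm{g}\|\asymp m\gamma\alpha\sqrt{\lambda_1}\asymp\lambda_1\alpha m/\sqrt{N}$, while $\bm{E}_3$ is of smaller order. That is a factor $\sqrt{m}/\log N$ above your Step 1.

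The paper's bound also fails here, by a factor $\sqrt{m}$. It is imported from results for matrices with independent entries, which is exactly the independence you correctly note is absent. So Step 1 cannot be repaired by borrowing it. In this example the $\ell_2$ error of $\hat{\bm{v}}$ really is of order $1/(\alpha N^{1/2})$, not $\log N/(\alpha N^{1/2}m^{1/2})$.

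Hence Step 3, the only place where the claimed rate is produced, lacks both its execution and its input. The leave-one-out construction is only named. The remainder $(\bm{E}(\hat{\bm{v}}-\bm{D}_{\bm w}\bm{s}))_i$ cannot be closed by ``combining with Step 2,'' since Step 2's rate is wrong. With the correct $\ell_2$ rate, the naive bound $\|\bm{E}_{i,\cdot}\|\,\|\hat{\bm{v}}-\bm{D}_{\bm w}\bm{s}\|$ requires $\alpha^2N$ to grow at least like $\log N$, which is not assumed.

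Moreover, for fixed $i$ your leading term (after normalizing by $\lambda_1(\bm{S})s_i$) is a single Gaussian with standard deviation of order $1/(\alpha N^{1/2})$. The paper defines ``with high probability'' as probability at least $1-N^{-c}$ for every $c>0$. Under that definition the term is only $O(\sqrt{\log N}/(\alpha N^{1/2}))$, and uniformity over $i\in[m]$ only adds to this. This cannot be absorbed into $\frac{1}{\alpha N^{1/2}}+\frac{\log N}{\alpha N^{1/2}m^{1/2}}$ once $m\gg\log N$. In the example above this Gaussian is exactly the first-order error of $\hat w^{(i)}$, so the extra factor is not an artifact of your method. Completed, your argument would prove the bound with first term $\sqrt{\log N}/(\alpha N^{1/2})$, or with a weaker probability guarantee, but not the statement as written.
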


Comparing Theorems~\ref{thm:cmmi_error_d2} and~\ref{thm:gsmmi_error_d2} reveals a clear advantage of GSMMI over CMMI, especially when the number of sources $m$ is large  under suitable assumptions. For CMMI, the alignment error for each source $i$ depends on its depth $D_i$ in the spanning tree and is of order $\mathcal{O}_p(D_i \log N / (\alpha N^{1/2}))$, where $D_i \leq D \leq m-1$ and $D$ denotes the maximum tree depth; that is, sources deep in the tree may incur substantial errors from the sequential propagation of pairwise alignment errors. In contrast, GSMMI achieves the uniform bound $\mathcal{O}_p((\alpha N^{1/2})^{-1}+\log N/(\alpha N^{1/2}m^{1/2}))$ simultaneously for all sources. By jointly estimating all alignments, GSMMI avoids this error accumulation and yields a sharper bound as $m$ grows.

This confirms the following intuition. Under a suitably rich overlap structure, as $m$ increases, CMMI cannot fully benefit from the additional overlap information, since its alignment procedure uses only the edges of a spanning tree, and the tree depth may even increase with $m$, causing alignment errors to accumulate along tree paths and thereby degrading accuracy. In contrast, GSMMI exploits the increasing redundancy in the overlap structure through global synchronization, so that its alignment error can decrease as $m$ grows. This underlies the growing advantage of global synchronization over spanning-tree-based alignment when integrating a larger number of sources.

\section{Experiments on Synthetic Data}\label{sec:simu}

In this section, we investigate the empirical performance of GSMMI through simulation studies with synthetic datasets. 
We first compare the estimation accuracy and computational cost of GSMMI and CMMI across a variety of configurations, showing that GSMMI achieves uniformly improved accuracy with comparable computational efficiency (Section~\ref{sec:simu_error}).
We then examine how the advantage of GSMMI over CMMI grows as the number of sources $m$ increases, empirically confirming the theoretical advantages of GSMMI established in Section~\ref{sec:theory} (Section~\ref{sec:simu_vary_m}). Finally, we consider a setting that mimics realistic multi-source data integration, involving a block-wise missing pattern and a suitably rich overlap structure. In this setting, we compare GSMMI and CMMI against several existing matrix-completion methods to further illustrate the advantages of GSMMI (Section~\ref{sec:simu_block}).

\subsection{Improved accuracy with comparable efficiency}\label{sec:simu_error}

We conduct simulation experiments under a variety of configurations, specifically across a broad range of signal strengths and block sizes, to comprehensively compare GSMMI and CMMI in terms of estimation accuracy and computational cost.

We generate the population positive semidefinite matrix $\bm{P} = \bm{U}\bm{\Lambda}\bm{U}^\top$ of size $N = 200$ and rank $d = 3$, where $\bm{U} \in \mathcal{O}_{N \times d}$ is randomly generated and $\bm{\Lambda} = \diag(\lambda, \frac{3}{4}\lambda, \frac{1}{2}\lambda)$ with signal strength parameter $\lambda$. We observe $m = 20$ submatrices, each of size $n = \lfloor\alpha N \rfloor$, where the ratio $\alpha$ controls the size of each observed submatrix relative to the entire matrix. For each submatrix $i \in [m]$, we randomly select $n$ entities from $[N]$ to form the index set $\mathcal{U}_i$ and observe the noisy matrix $\bm{A}^{(i)} = \bm{P}_{\mathcal{U}_i, \mathcal{U}_i} + \bm{N}^{(i)}$, where $\bm{N}^{(i)}$ is a symmetric Gaussian noise matrix with off-diagonal entries $N^{(i)}_{j,k} \sim N(0, \sigma^2)$ for $j < k$ and diagonal entries $N^{(i)}_{j,j} \sim N(0, 2\sigma^2)$, with noise level $\sigma = 0.1$. 
We vary the signal strength $\lambda$ and the submatrix size ratio $\alpha$ and apply both GSMMI and CMMI to compute the recovered matrix $\hat{\bm{P}}$.

We evaluate estimation accuracy using two metrics: matrix correlation $\langle \bm{P}, \hat{\bm{P}} \rangle_F / (\|\bm{P}\|_F \|\hat{\bm{P}}\|_F)$ and relative Frobenius error $\|\bm{P} - \hat{\bm{P}}\|_F / \|\bm{P}\|_F$. Figure~\ref{fig:simu_accuracy} shows these results, averaged over $20$ Monte Carlo replicates, for different parameter combinations. GSMMI consistently outperforms CMMI across all settings, with particularly notable improvements when $\alpha$ is small and $\lambda$ is moderate. Despite these accuracy gains, the computational overhead of GSMMI remains modest, only a few times slower than CMMI in runtime, as shown in Table~\ref{tab:time_comparison}.

\begin{figure}[htbp]
    \centering
    \includegraphics[width=\textwidth]{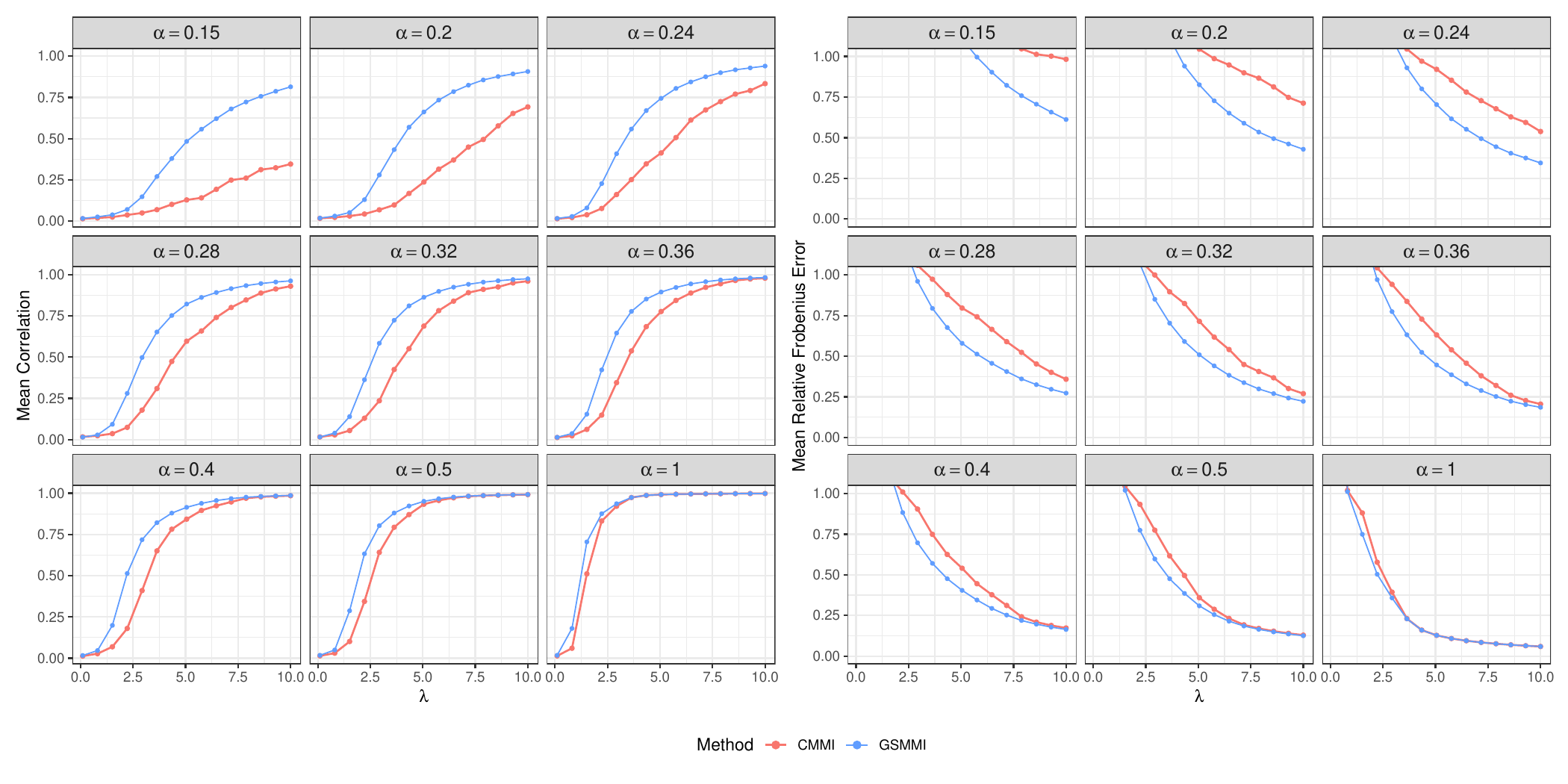}
    \caption{Estimation accuracy of GSMMI and CMMI as we vary the signal strength $\lambda \in [0.1, 10.0]$ (15 equally spaced values) and the submatrix size ratio $\alpha \in \{0.15, 0.2, 0.24, 0.28, 0.32, 0.36, 0.4, 0.5, 1\}$, with $N=200$, $m=20$, $\sigma=0.1$. Left: correlation $\langle \bm{P}, \hat{\bm{P}} \rangle_F / (\|\bm{P}\|_F \|\hat{\bm{P}}\|_F)$. Right: relative Frobenius error $\|\bm{P} - \hat{\bm{P}}\|_F / \|\bm{P}\|_F$. Results are averaged over $20$ independent Monte Carlo replicates.}
    \label{fig:simu_accuracy}
\end{figure}

\begin{table}[htbp]
\centering
\begin{tabular}{lr}
\toprule
Method & Total Time (s) \\
\midrule
CMMI & 286.75\\
GSMMI & 960.61\\
\bottomrule
\end{tabular}
\caption{Computation time comparison for GSMMI and CMMI in the experiment of Section~\ref{sec:simu_error}.}
\label{tab:time_comparison}
\end{table}

\subsection{Effect of the number of sources}\label{sec:simu_vary_m}

Our theoretical results suggest that, under suitable assumptions, the advantage of GSMMI over CMMI in alignment accuracy increases with the number of sources $m$. To empirically verify this, we conduct an experiment in which we vary $m$.

We generate the population matrix $\bm{P}$ as described in Section~\ref{sec:simu_error}, with $N = 200$, rank $d = 3$, $\bm{\Lambda} = \diag(\lambda, \tfrac{3}{4}\lambda, \tfrac{1}{2}\lambda)$, signal strength $\lambda = 4$, and noise level $\sigma = 0.1$. We fix the submatrix-size ratio at $\alpha = 0.3$, so that each block contains $n = \lfloor \alpha N \rfloor = 60$ entities, and vary the number of sources over $m \in \{5, 8, 12, 20, 30, 50, 80\}$. For each $m$, we apply both GSMMI and CMMI and report the matrix correlation, the relative Frobenius error, and the running time, averaged over $100$ Monte Carlo replicates.

Figure~\ref{fig:simu_varym} shows that the accuracy of CMMI stagnates once $m$ becomes sufficiently large, with its correlation plateauing at approximately $0.45$ and even decreasing slightly for larger $m$, whereas the accuracy of GSMMI improves substantially with $m$, its correlation rising from $0.38$ at $m = 5$ to $0.91$ at $m = 80$. The performance gap between the two methods thus widens steadily with $m$, providing clear empirical support for the theoretical advantage of global synchronization over spanning-tree-based alignment when integrating a large number of sources.

\begin{figure}[htbp]
    \centering
    \includegraphics[width=0.85\textwidth]{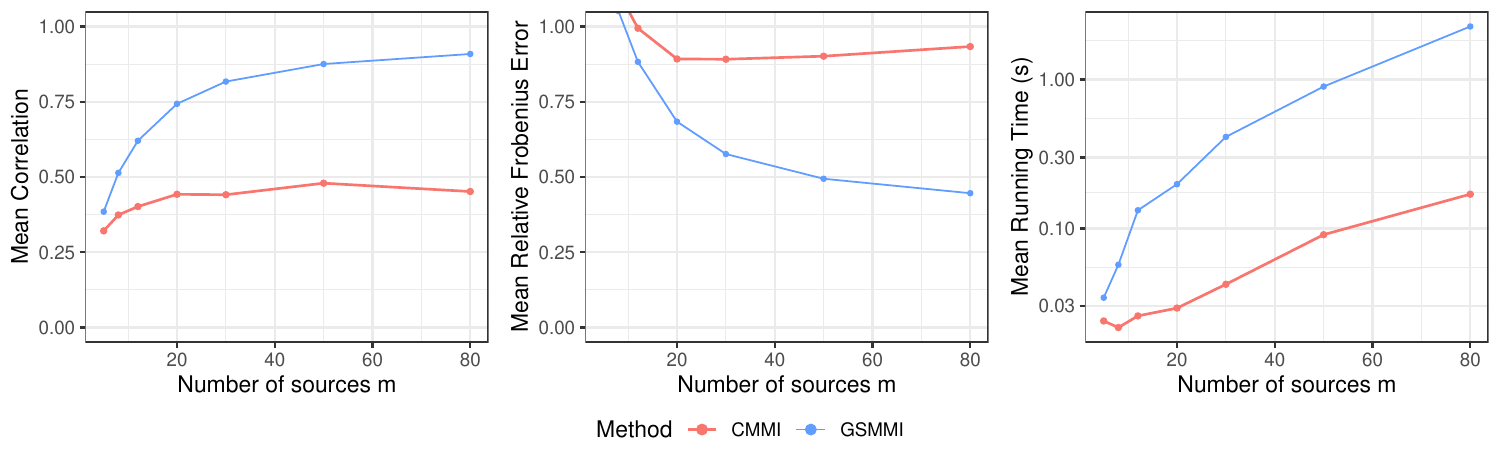}
    \caption{Performance of GSMMI and CMMI as the number of sources $m$ varies over $\{5, 8, 12, 20, 30, 50, 80\}$, with the block-size ratio fixed at $\alpha = 0.3$, $N = 200$, $d = 3$, $\lambda = 4$, and $\sigma = 0.1$. Left: matrix correlation $\langle \bm{P}, \hat{\bm{P}} \rangle_F / (\|\bm{P}\|_F \|\hat{\bm{P}}\|_F)$. Middle: relative Frobenius error $\|\bm{P} - \hat{\bm{P}}\|_F / \|\bm{P}\|_F$. Right: running time in seconds (log scale). Results are averaged over $100$ independent Monte Carlo replicates.}
    \label{fig:simu_varym}
\end{figure}

As shown in the right panel, the running time of GSMMI grows faster than that of CMMI as $m$ increases. This is largely because GSMMI is initialized from the CMMI solution, which becomes increasingly suboptimal as $m$ grows, so that more iterations of the alternating optimization are needed to converge to a substantially better solution. In other words, the additional computation is spent precisely where it yields the largest accuracy gains. Even so, the overhead remains small: at $m = 80$, for instance, GSMMI takes under $2.5$ seconds, about $13$ times the running time of CMMI, yet roughly doubles its correlation (from $0.45$ to $0.91$).

\subsection{Advantages over existing imputation methods for block-wise missing data}
\label{sec:simu_block}

We now consider a setting that mimics realistic multi-source data integration involving block-wise missing patterns. The experiments in \citet{zheng2026chain} demonstrate the superiority of CMMI over several existing matrix completion methods for block-wise missing data, but they focus on the case where the overlap graph $\mathcal{G}$ of the observed blocks (the graph encoding overlaps among observed blocks; see Section~\ref{sec:rev_CMMI} for its precise definition) forms a simple chain; that is, the observed blocks overlap only sequentially, with no more complex overlap structure (see Section~4.2 of \citet{zheng2026chain} for details). In practice, however, observed blocks from multiple sources need not follow a purely chain-type structure and may instead exhibit a somewhat richer overlap pattern. To mimic this situation, we simulate observed blocks whose overlap graph contains some cycles, and compare GSMMI, CMMI, and several existing matrix completion methods, including generalized spectral regularization (GSR) \citep{mazumder2010spectral}, fast alternating least squares (FALS) \citep{hastie2015matrix}, singular value thresholding (SVT) \citep{cai2010singular}, and universal singular value thresholding (USVT) \citep{chatterjee2015matrix}.

We construct the observed submatrices as follows. The $m$ blocks of size $n$ are arranged in a ring: adjacent blocks share $s$ entities, so that the last $s$ entities of block $i$ coincide with the first $s$ entities of block $i+1$, and the ring is closed by identifying the last $s$ entities of block $m$ with the first $s$ entities of block $1$. This ring structure yields a total of $N = mn - ms$ entities. On top of the ring, to enrich the overlap structure, we add $m$ random cross-block shortcuts: for each shortcut, we pick a block at random and append $s$ of its entities to another randomly chosen non-adjacent block. The resulting design retains a clear block structure while endowing the overlap graph with a moderately rich cyclic connectivity. Figure~\ref{fig:missing_pattern_block} illustrates the resulting missing pattern for $m = 5$: the ring links produce the block-diagonal structure together with the off-diagonal blocks connecting the first and last sources, while the random shortcuts appear as additional off-diagonal overlaps between non-adjacent blocks. Note that we take $s$ much smaller than $n$ (specifically $s = 0.1n$), to examine whether GSMMI and CMMI can succeed even when the number of entities shared between overlapping blocks is small. Moreover, notice that we add only $m$ random shortcuts, a small number compared with the $m(m-3)/2$ non-adjacent block pairs, especially when $m$ is large, so the enrichment of the overlap structure is deliberately modest; our goal is to explore whether GSMMI can improve over CMMI using only a moderately rich overlap structure.


\begin{figure}[htbp]
    \centering
    \includegraphics[width=0.31\textwidth]{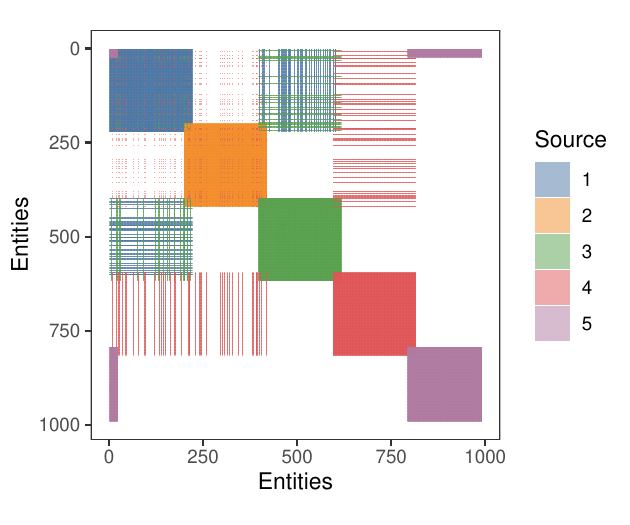}
    \caption{Observation pattern for the ring-structured block-wise design with random shortcuts, illustrated for $m = 5$ as an example. Each color corresponds to one source, with overlapping regions shown by blended colors where blocks share entities.}
    \label{fig:missing_pattern_block}
\end{figure}

We generate the population matrix $\bm{P} = \bm{U}\bm{\Lambda}\bm{U}^\top$ of size $N$ and rank $d = 3$, where $\bm{U} \in \mathcal{O}_{N \times d}$ is randomly generated, $\bm{\Lambda} = \diag(\lambda, \tfrac{3}{4}\lambda, \tfrac{1}{2}\lambda)$, and $\lambda = N$. For each block $i$ we observe the noisy submatrix $\bm{A}^{(i)} = \bm{P}_{\mathcal{U}_i, \mathcal{U}_i} + \bm{N}^{(i)}$, where $\bm{N}^{(i)}$ is a symmetric Gaussian noise matrix generated as in Section~\ref{sec:simu_error} with noise level $\sigma = 1$. We fix the overlap size $s = 0.1 n$, corresponding to an overlap ratio of about $0.1$, and vary the number of sources $m \in \{5, 10, 15, 20, 25\}$, choosing $n$ so that the total size $N$ remains approximately $1000$ across all settings. Thus, as $m$ increases, the fraction of observed entries decreases, making the recovery task progressively more challenging.

The block-wise missing pattern leaves a large fraction of the entries of $\bm{P}$ never observed by any source, and we focus on the accuracy of recovering these unobserved entries. Specifically, we report the {unobserved relative error}, defined as $\|(\hat{\bm{P}} - \bm{P}) \circ \mathbb{I}_{\Omega^c}\|_F / \|\bm{P} \circ \mathbb{I}_{\Omega^c}\|_F$, where $\Omega^c$ denotes the set of entries not observed by any submatrix, together with the running time. Figure~\ref{fig:simu_block} reports both quantities, averaged over $100$ Monte Carlo replicates.

As shown in the left panel of Figure~\ref{fig:simu_block}, GSMMI and CMMI substantially outperform all four matrix-completion baselines. This confirms that, as in \citet{zheng2026chain}, methods that exploit the block structure are far more accurate than generic completion methods when the missing pattern is block-wise. The left panel also shows that GSMMI is uniformly more accurate, with the gap widening sharply as $m$ grows: at $m = 5$ the two methods are nearly identical ($0.116$ versus $0.120$), whereas at $m = 25$ GSMMI attains an unobserved relative error of $0.524$ against $0.848$ for CMMI, a relative reduction of about $38\%$. This is consistent with the mechanism identified in Section~\ref{sec:simu_vary_m}: as $m$ increases and the overlap graph becomes richer in cycles, the advantage of GSMMI over CMMI becomes more pronounced. Finally, the right panel shows that both source-based methods remain computationally efficient, with running times comparable to the fastest baseline (FALS) and orders of magnitude smaller than the slower ones such as GSR and SVT.

\begin{figure}[htbp]
    \centering
    \includegraphics[width=0.6\textwidth]{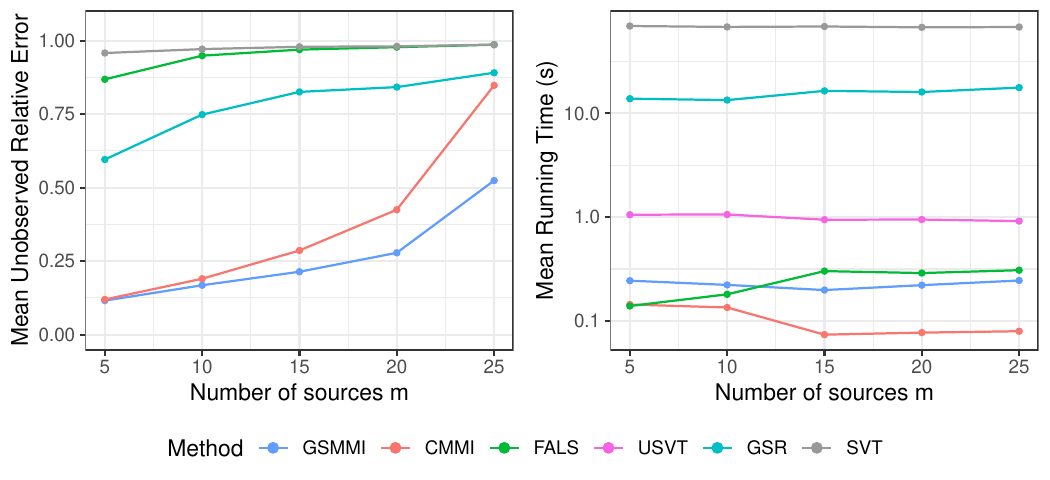}
    \caption{Performance of GSMMI, CMMI, and four matrix-completion baselines (FALS, USVT, GSR, SVT) under the ring-structured design with shortcuts described in Section~\ref{sec:simu_block}, as the number of sources $m$ varies over $\{5, 10, 15, 20, 25\}$, with overlap size $s = 0.1 n$, total size $N \approx 1000$, $d = 3$, and $\sigma = 1$. Left: unobserved relative Frobenius error, $\|(\hat{\bm{P}} - \bm{P}) \circ \mathbb{I}_{\Omega^c}\|_F / \|\bm{P} \circ \mathbb{I}_{\Omega^c}\|_F$; the $y$-axis is capped at $1.05$ for readability, and USVT (with errors up to $2.2$) exceeds this range. Right: running time in seconds (log scale). Results are averaged over $100$ independent Monte Carlo replicates.}
    \label{fig:simu_block}
\end{figure}


\section{Experiments on Realistic Data}
\label{sec:real}

In this section, we use two real data examples to demonstrate that GSMMI can further improve upon CMMI in practice. We first consider the MNIST database of handwritten digit images (Section~\ref{sec:MNIST}), and then the MEDLINE co-occurrence database (Section~\ref{sec:MEDLINE}).

\subsection{MNIST handwritten digits}
\label{sec:MNIST}
We illustrate the advantage of GSMMI over CMMI and several existing matrix-completion methods on the MNIST database of handwritten digit images \citep{lecun1998mnist}. The MNIST database consists of grayscale images of handwritten digits, each of size $28 \times 28$ pixels and viewed as a vector in $\mathbb{R}^{784}$. We consider a population of $N$ entities, each assigned a fixed label drawn uniformly from the digits $\{3, 5, 8\}$, which are relatively difficult to distinguish. The $m$ sources are arranged in the same ring-structured block design with random shortcuts as in Section~\ref{sec:simu_block}. For each number of observed blocks $m$, we choose the block size $n$ so that the total number of entities $N$ remains approximately $3000$ across all settings. For each entity in a given source, we sample an image uniformly at random from the MNIST images sharing its label and normalize it to unit norm. As a result, the same entity is represented by different images of the same digit across sources, which serves as a source of noise. We then form the observed block $\bm{A}^{(i)} = \bm{M}^{(i)} \bm{M}^{(i)\top} \in \mathbb{R}^{n \times n}$, where $\bm{M}^{(i)} \in \mathbb{R}^{n \times 784}$ stacks the $n$ unit-normalized images for the entities in source $i$.

Given the collection $\{\bm{A}^{(i)}\}_{i=1}^{m}$, we compare how accurately different methods can cluster the entities by their underlying digit. We apply CMMI and GSMMI to recover the latent positions of all entities, using embedding dimension $d = 36$, chosen by applying the dimensionality selection procedure of \cite{zhu2006automatic} to the MNIST images. For comparison, we also include three of the matrix-completion baselines from Section~\ref{sec:simu_block}, namely GSR, FALS, and USVT; SVT is excluded here because it is prohibitively slow on matrices of this larger size.

We evaluate the clustering accuracy of each method as follows. CMMI and GSMMI directly return $d$-dimensional latent positions, which we cluster into three groups using $K$-means. The matrix-completion baselines instead return a completed matrix: for these, we first assemble a single partially observed $N \times N$ similarity matrix from the $m$ source blocks, in which unobserved pairs are left missing and entries observed by multiple sources are averaged; we then apply each completion method to fill in the full matrix, extract its leading $d$-dimensional eigen-embedding, and cluster the embedding with $K$-means. We measure accuracy against the true digit labels using the Adjusted Rand Index (ARI), which ranges from $-1$ to $1$, with higher values indicating closer agreement. We vary the number of sources over $m \in \{3, 4, 5, 6, 7, 8, 9\}$ and report results averaged over $25$ Monte Carlo replicates.

As seen in the left panel of Figure~\ref{fig:realdata_MNIST}, GSMMI and CMMI substantially outperform all three completion baselines, whose ARIs remain below $0.11$ across every value of $m$, confirming that methods exploiting the block structure are far more effective than generic completion methods again. It also shows that GSMMI is uniformly more accurate than CMMI, and the gap widens sharply as $m$ grows. The clustering accuracy of CMMI peaks at about $0.41$ for $m = 4$ and then deteriorates steadily, falling to $0.13$ at $m = 9$, whereas GSMMI remains stable and high, staying in the range $0.40$--$0.49$ throughout. 
As shown in the right panel, GSMMI and CMMI have comparable running times, both orders of magnitude smaller than those of the completion baselines. Thus the substantial accuracy gain of GSMMI over CMMI here is obtained at negligible additional computational cost.

\begin{figure}[htbp]
    \centering
    \includegraphics[width=0.6\textwidth]{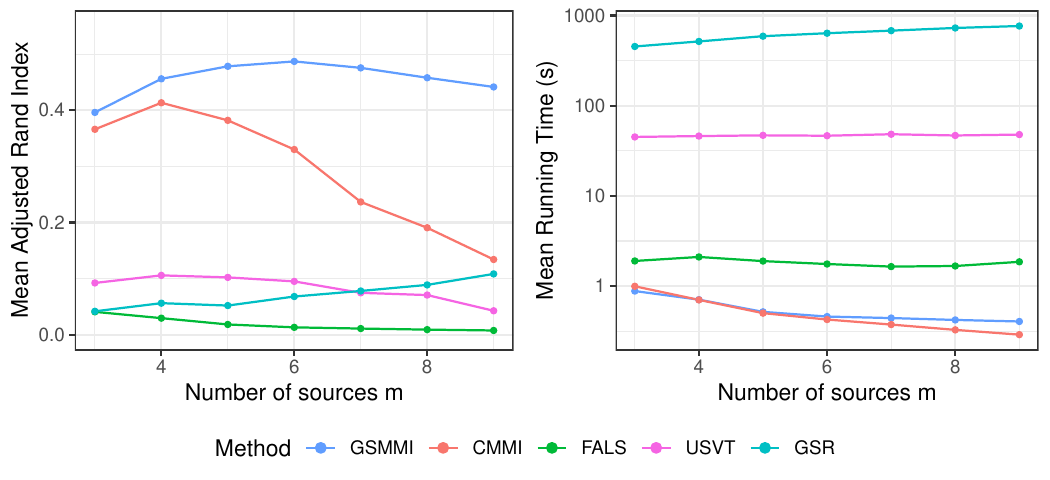}
    \caption{
    \footnotesize
    Performance of GSMMI, CMMI, and three matrix-completion baselines (FALS, USVT, GSR) on the MNIST experiment under the ring-structured design with shortcuts described in Section~\ref{sec:simu_block}, as the number of sources $m$ varies over $\{3, 4, 5, 6, 7, 8, 9\}$, with entities labeled by the digits $\{3, 5, 8\}$, overlap size $s = 0.1 n$, total size $N \approx 3000$, and embedding dimension $d = 36$. Left: Adjusted Rand Index for clustering against the true digit labels. Right: running time in seconds (log scale). Results are averaged over $25$ independent Monte Carlo replicates.
    }
    \label{fig:realdata_MNIST}
\end{figure}

\subsection{MEDLINE co-occurrences}
\label{sec:MEDLINE}

Our second example uses the MEDLINE co-occurrence database \citep{medline1a}, following the same setting as in Section~B.1 of the supplementary material of \citet{zheng2026chain}. In that work, CMMI was compared against GSR, FALS, SVT, and USVT, and was shown to substantially outperform all of them in both accuracy and computational efficiency. We therefore omit these baselines and focus our comparison on CMMI, since establishing that GSMMI improves upon CMMI is sufficient to demonstrate its advantage over these competing methods as well.

The MEDLINE co-occurrence database summarizes the MeSH descriptors that co-occur in MEDLINE/PubMed citations over a period of years. A standard approach for handling such data is to transform the (normalized) co-occurrence counts into pointwise mutual information (PMI), an association measure widely used in natural language processing. Specifically, the PMI between two concepts $x$ and $y$ is defined as
$\mathrm{PMI}(x,y)=\log\tfrac{\mathbb{P}(x,y)}{\mathbb{P}(x)\mathbb{P}(y)}$,
where $\mathbb{P}(x)$ and $\mathbb{P}(y)$ are the marginal occurrence probabilities of $x$ and $y$, and $\mathbb{P}(x,y)$ is their joint co-occurrence probability.

If we partition the citations by year, each year tends to feature a somewhat different set of frequently occurring clinical concepts, and the PMIs computed among the high-frequency concepts within a given year tend to be more accurate than those involving rarely occurring concepts. Motivated by this observation, we extract the PMIs for the top-frequency concepts in each year and integrate them using CMMI or GSMMI, aiming to recover more accurate co-occurrence relationships between the clinical concepts than the PMIs computed directly from data aggregated across all the years.

Concretely, we consider MEDLINE co-occurrence data from the years 1993 to 2022. For each year, we extract a PMI submatrix based on the co-occurrences of the top $1000$ most frequent clinical concepts. Given a number of observed years $K$, equivalently the number of observed PMI submatrices, our goal is to integrate these submatrices to recover the full PMI matrix over the union of concepts. In this experiment, we always select the most temporally distant years for the integration task. For example, for $K = 2$ we integrate the PMI submatrices corresponding to the years $1993$ and $2022$, which together involve $N = 1540$ unique clinical concepts, and we aim to recover the unobserved entries (approximately $25\%$) in the resulting $1540 \times 1540$ PMI matrix.
We determine the embedding dimension $d$ by applying the automatic dimensionality selection procedure of \cite{zhu2006automatic} to each observed submatrix and taking the largest resulting value to retain sufficient information. For example, when $K = 2$, the procedure yields dimensions $12$ and $16$ for the two submatrices, so we set $d = 16$. We then apply both CMMI and GSMMI to recover the full PMI matrix.

To evaluate the recovery quality, we use the pre-trained clinical concept embeddings from \cite{beam2020clinical}, which are learned from massive multimodal medical data. Given pre-trained embedding vectors $\bm{v}_1, \dots, \bm{v}_N$ for the $N$ clinical concepts, we construct a similarity matrix $\bm{P} \in \mathbb{R}^{N \times N}$ whose entry $\bm{P}_{ij}$ is the cosine similarity between $\bm{v}_i$ and $\bm{v}_j$. We then measure the similarity between the estimated PMIs of the unobserved entries and the corresponding entries in $\bm{P}$ using Spearman's rank correlation $\rho$, which takes values in $[-1, 1]$, with larger values indicating a stronger monotone relationship.

We vary $K$ from $2$ to $15$ and report the results in Figure~\ref{fig:realdata_MEDLINE_dense}. The dashed black line (baseline) in the left panel represents the performance obtained by directly computing PMIs from the co-occurrence data aggregated across the selected years. The results show that integrating per-year PMIs via either CMMI or GSMMI yields substantially more faithful co-occurrence relationships than directly computing PMIs from aggregated data. For instance, when $K = 2$, the baseline achieves a rank correlation of only $0.048$, whereas both CMMI and GSMMI reach $0.273$. In addition, and more importantly for our purpose, GSMMI consistently matches or exceeds CMMI across all values of $K$. When $K = 2$, the two methods coincide exactly, since a single pair of overlapping blocks admits no cycles and the spanning tree already exploits all available overlap information. For larger $K$, GSMMI yields a consistent improvement by leveraging the full overlapping structure rather than only a spanning tree, as shown more clearly in the middle panel, which zooms in on the two methods. Moreover, as shown in the right panel, GSMMI attains this improvement at a computational cost very close to that of CMMI.

\begin{figure}[htbp]
\centering
\subfigure{\includegraphics[width=0.9\textwidth]{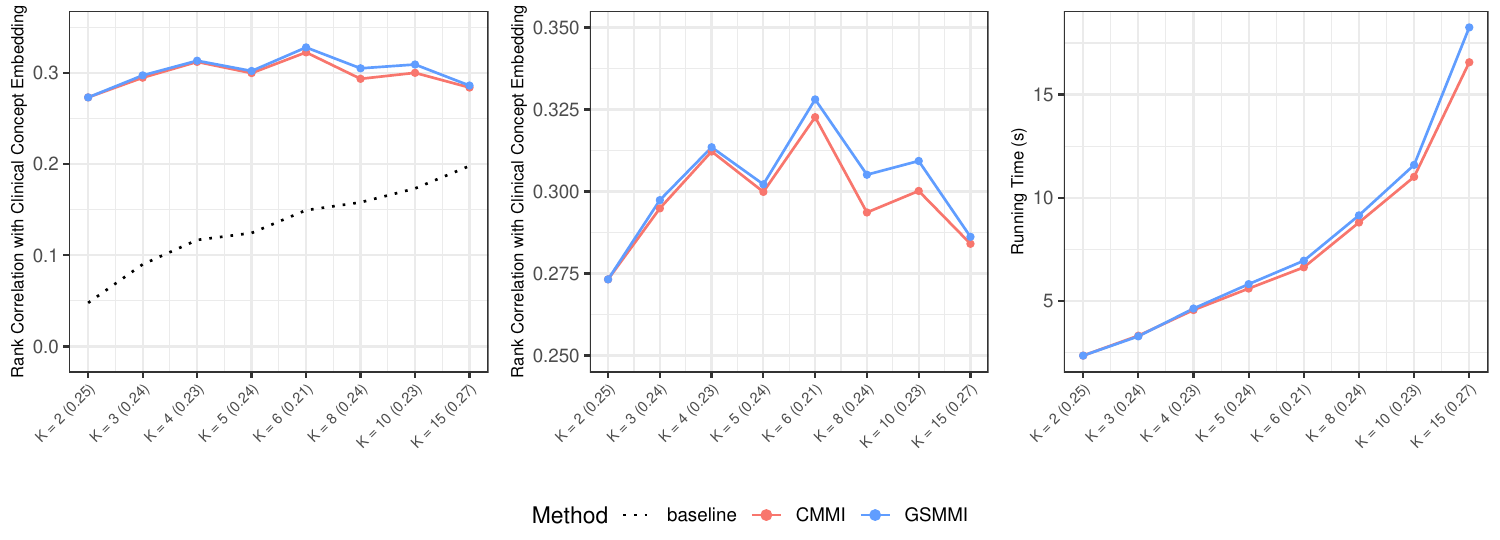}}
\caption{
\footnotesize
Performance of GSMMI and CMMI on MEDLINE co-occurrences experiment, integrating per-year PMI submatrices as $K$ varies from $2$ to $15$. On the x-axis, each $K$ is followed (in parentheses) by the proportion of unobserved entries to be recovered. Left: Spearman's rank correlation between the recovered PMIs and the clinical concept embeddings of \cite{beam2020clinical}, for the baseline (PMIs computed directly from the co-occurrence data aggregated across the selected years), CMMI, and GSMMI. Middle: the same correlation, zoomed in to compare CMMI and GSMMI. Right: running time in seconds.
}
\label{fig:realdata_MEDLINE_dense}
\end{figure}

\section{Extensions to Indefinite and Rectangular Matrices}
\label{sec:extension}

The GSMMI algorithm presented in Section~\ref{sec:method} assumes that the population matrix $\mpp$ is positive semidefinite, a setting we used to introduce the main idea of the method and to develop its analysis. In many real-world applications, however, the matrix of interest can instead be symmetric but indefinite, or asymmetric and possibly rectangular. Symmetric indefinite matrices arise, for example, in generalized random dot product graph models \citep{rubin2022statistical} and in similarity or kernel matrices induced by indefinite kernels, whereas asymmetric or rectangular matrices are central to applications such as genomic data integration \citep{letunic2004smart}, single-cell data integration \citep{stuart2019comprehensive}, and recommendation systems. Integrating multiple such matrices from different sources naturally leads to multiple matrix integration problems in these non-positive-semidefinite settings.

In this section, we describe how GSMMI can be extended to the cases of symmetric indefinite matrices and asymmetric or rectangular matrices. As in the positive semidefinite setting, the main idea is that GSMMI replaces the spanning-tree alignment of CMMI with a global synchronization over all pairwise overlaps, while providing modifications appropriate to each case. These modifications are also important for the practical feasibility and effectiveness of the algorithm. For example, for symmetric indefinite matrices, the alignment matrices should be indefinite orthogonal matrices, but no direct fast optimizer over the indefinite orthogonal group is available; we therefore approximate the per-iteration solution by relaxing the problem into unconstrained least-squares problems and then projecting the result back onto the indefinite orthogonal group. As another example, for asymmetric or rectangular matrices, the alignment matrices are general invertible matrices, and without care the iterations may accumulate scale drift; we therefore rescale the estimates during the iterations to keep them well-conditioned and improve numerical stability. Other modifications, including appropriate choices of the related weights in each setting, are introduced to make GSMMI effective across these different matrix types.

\subsection{GSMMI for symmetric indefinite matrices}
\label{sec:extension_indef}

Suppose $\mpp \in \mathbb{R}^{N \times N}$ is a symmetric indefinite low-rank matrix. Let $d_+$ and $d_-$ denote the number of positive and negative eigenvalues of $\mpp$, and set $d = d_+ + d_- \ll N$. We denote the non-zero eigenvalues of $\mpp$ by $\lambda_1(\mpp) \geq \dots \geq \lambda_{d_+}(\mpp) > 0 > \lambda_{N-d_-+1}(\mpp) \geq \dots \geq \lambda_N(\mpp)$. Let $\mLambda_+ := \diag(\lambda_1(\mpp), \dots, \lambda_{d_+}(\mpp))$, $\mLambda_- := \diag(\lambda_{N-d_-+1}(\mpp), \dots, \lambda_N(\mpp))$, and let the orthonormal columns of $\muu_+ \in \mathbb{R}^{N \times d_+}$ and $\muu_- \in \mathbb{R}^{N \times d_-}$ constitute the corresponding eigenvectors. The eigendecomposition of $\mpp$ is then $\muu \mLambda \muu^\top$ with $\mLambda := \diag(\mLambda_+, \mLambda_-)$ and $\muu := [\muu_+|\muu_-]$. The latent positions are given by $\mx = \muu |\mLambda|^{1/2}$, so that $\mpp = \mx \mi_{d_+, d_-} \mx^\top$, where $\mi_{d_+, d_-} = \diag(\mi_{d_+}, -\mi_{d_-})$. For each source $i$, we have $\mpp^{(i)} = \mpp_{\mathcal{U}_i, \mathcal{U}_i} = \mx_{\mathcal{U}_i} \mi_{d_+, d_-} \mx_{\mathcal{U}_i}^\top$.

From each observed submatrix $\ma^{(i)}$, we obtain an estimated latent position matrix $\hat{\mx}^{(i)} \in \mathbb{R}^{n_i \times d}$ such that $\ma^{(i)} \approx \hat{\mx}^{(i)} \mi_{d_+, d_-} \hat{\mx}^{(i)\top}$. As in the positive semidefinite case (see Remark~\ref{rem:individual est}), various methods can be used to obtain such an estimate; for example, one may set $\hat{\mx}^{(i)} = \hat{\muu}^{(i)} |\hat{\mLambda}^{(i)}|^{1/2}$, where $\hat{\mLambda}^{(i)}$ collects the $d_+$ largest positive and $d_-$ largest (in magnitude) negative eigenvalues of $\ma^{(i)}$, and $\hat{\muu}^{(i)}$ contains the corresponding eigenvectors.

Each $\hat{\mx}^{(i)}$ estimates $\mx_{\mathcal{U}_i}$ only up to a transformation, which is now an element of the indefinite orthogonal group $\mathcal{O}_{d_+, d_-} := \{\mo \in \mathbb{R}^{d \times d} : \mo \mi_{d_+, d_-} \mo^\top = \mi_{d_+, d_-}\}$. To exploit all pairwise overlaps, we again formulate the alignment as a synchronization problem, seeking transformation matrices $\{\hat{\mw}^{(i)}\}$ that jointly align all estimated latent positions across all overlapping entities:
\[
\{\hat{\mw}^{(i)}\}_{i=1}^m = \argmin_{\mo^{(1)}, \ldots, \mo^{(m)} \in \mathcal{O}_{d_+, d_-}}
\sum_{i < j : \mathcal{U}_i \cap \mathcal{U}_j \neq \varnothing}
\pi_{i,j} \, \|\hat{\mx}^{(i)}_{\langle \mathcal{U}_i \cap \mathcal{U}_j \rangle} \mo^{(i)} - \hat{\mx}^{(j)}_{\langle \mathcal{U}_i \cap \mathcal{U}_j \rangle} \mo^{(j)}\|_F^2.
\]
We again fix $\hat{\mw}^{(1)} = \mi_d$ as the reference and iteratively update each $\hat{\mw}^{(i)}$ for $i = 2, \ldots, m$, initializing from the CMMI solution. For a fixed set of transformation matrices $\{\hat{\mw}^{(j)}\}_{j \neq i}$, using the indefinite orthogonality of $\mo \in \mathcal{O}_{d_+, d_-}$, the optimal update for $\hat{\mw}^{(i)}$ can be written equivalently as
\begin{equation*}
\begin{aligned}
\hat{\mw}^{(i)} &= \argmin_{\mo \in \mathcal{O}_{d_+, d_-}} \sum_{j \neq i : \mathcal{U}_i \cap \mathcal{U}_j \neq \varnothing} \pi_{i,j} \, \|\hat{\mx}^{(i)}_{\langle \mathcal{U}_i \cap \mathcal{U}_j \rangle} \mo - \hat{\mx}^{(j)}_{\langle \mathcal{U}_i \cap \mathcal{U}_j \rangle} \hat{\mw}^{(j)}\|_F^2 \\
&= \argmin_{\mo \in \mathcal{O}_{d_+, d_-}} \sum_{j \neq i : \mathcal{U}_i \cap \mathcal{U}_j \neq \varnothing} \pi_{i,j} \, \|\hat{\mx}^{(i)}_{\langle \mathcal{U}_i \cap \mathcal{U}_j \rangle}\mi_{d_+,d_-}- \hat{\mx}^{(j)}_{\langle \mathcal{U}_i \cap \mathcal{U}_j \rangle}\hat{\mw}^{(j)}\mi_{d_+,d_-}  \mo^\top \|_F^2.
\end{aligned}
\end{equation*}
But there is no longer an analytical solution due to the indefinite orthogonality constraint, so we approximate the solution by relaxing the problem into two unconstrained least squares problems
\begin{align}
\hat{\mw}_L^{(i)} &= \argmin_{\mo \in \mathbb{R}^{d \times d}} \sum_{j \neq i : \mathcal{U}_i \cap \mathcal{U}_j \neq \varnothing} \pi_{i,j} \, \|\hat{\mx}^{(i)}_{\langle \mathcal{U}_i \cap \mathcal{U}_j \rangle} \mo - \hat{\mx}^{(j)}_{\langle \mathcal{U}_i \cap \mathcal{U}_j \rangle} \hat{\mw}^{(j)}\|_F^2, \label{eq:update_indef_L}\\
\hat{\mw}_R^{(i)} &= \argmin_{\mo \in \mathbb{R}^{d \times d}} \sum_{j \neq i : \mathcal{U}_i \cap \mathcal{U}_j \neq \varnothing} \pi_{i,j} \, \|\hat{\mx}^{(i)}_{\langle \mathcal{U}_i \cap \mathcal{U}_j \rangle}\mi_{d_+,d_-}- \hat{\mx}^{(j)}_{\langle \mathcal{U}_i \cap \mathcal{U}_j \rangle}\hat{\mw}^{(j)}\mi_{d_+,d_-}  \mo^\top \|_F^2.\label{eq:update_indef_R}
\end{align}
The solution to \eqref{eq:update_indef_L} is 
\[
\hat{\mw}_L^{(i)} = \Biggl( \sum_{j \neq i : \mathcal{U}_i \cap \mathcal{U}_j \neq \varnothing} \pi_{i,j} (\hat{\mx}^{(i)}_{\langle \mathcal{U}_i \cap \mathcal{U}_j \rangle})^\top \hat{\mx}^{(i)}_{\langle \mathcal{U}_i \cap \mathcal{U}_j \rangle} \Biggr)^{+}\Biggl( \sum_{j \neq i : \mathcal{U}_i \cap \mathcal{U}_j \neq \varnothing} \pi_{i,j} (\hat{\mx}^{(i)}_{\langle \mathcal{U}_i \cap \mathcal{U}_j \rangle})^\top \hat{\mx}^{(j)}_{\langle \mathcal{U}_i \cap \mathcal{U}_j \rangle} \hat{\mw}^{(j)} \Biggr),
\]
where $(\cdot)^{+}$ denotes the Moore-Penrose pseudoinverse. Similarly, the solution to \eqref{eq:update_indef_R} is
\begin{equation*}
\begin{aligned}
\hat{\mw}_R^{(i)} = &\Biggl( \sum_{j \neq i : \mathcal{U}_i \cap \mathcal{U}_j \neq \varnothing} \pi_{i,j} \, \mi_{d_+,d_-} (\hat{\mx}^{(i)}_{\langle \mathcal{U}_i \cap \mathcal{U}_j \rangle})^\top \hat{\mx}^{(j)}_{\langle \mathcal{U}_i \cap \mathcal{U}_j \rangle} \hat{\mw}^{(j)} \mi_{d_+,d_-} \Biggr)\\
& \Biggl( \sum_{j \neq i : \mathcal{U}_i \cap \mathcal{U}_j \neq \varnothing} \pi_{i,j} \, \mi_{d_+,d_-} (\hat{\mw}^{(j)})^\top (\hat{\mx}^{(j)}_{\langle \mathcal{U}_i \cap \mathcal{U}_j \rangle})^\top \hat{\mx}^{(j)}_{\langle \mathcal{U}_i \cap \mathcal{U}_j \rangle} \hat{\mw}^{(j)} \mi_{d_+,d_-} \Biggr)^{+}.
\end{aligned}
\end{equation*}
We combine them by setting $\hat{\mw}^{(i)} = \frac{1}{2}(\hat{\mw}_L^{(i)} + \hat{\mw}_R^{(i)})$. Since the relaxation drops the indefinite orthogonality constraint, so we project it back onto $\mathcal{O}_{d_+, d_-}$ before the next update, which improves numerical stability and accuracy. The projection is carried out by taking a few gradient-type steps that iteratively reduce the constraint violation $\|\mo\, \mi_{d_+,d_-}\, \mo^\top - \mi_{d_+,d_-}\|_F$, starting from $\mo = \hat{\mw}^{(i)}$. Only a few iterations are needed in practice.
We cycle through the updates for $i = 2, \ldots, m$ until convergence.
After obtaining $\{\hat{\mw}^{(i)}\}$, we aggregate the aligned latent positions to form $\hat{\mx}$, and estimate $\hat{\mpp} = \hat{\mx} \mi_{d_+, d_-} \hat{\mx}^\top$. The complete procedure is summarized in Algorithm~\ref{alg:gsmmi_indef}.

\begin{algorithm}[htbp]
\caption{GSMMI for symmetric indefinite matrices}
\label{alg:gsmmi_indef}
\begingroup
\setlength{\baselineskip}{10pt}
\setlength{\itemsep}{0pt}
\setlength{\parskip}{0pt}
\setlength{\abovedisplayskip}{4pt}
\setlength{\belowdisplayskip}{4pt}
\begin{algorithmic}
\REQUIRE Observed submatrices $\{\ma^{(i)}\}_{i\in[m]}$ with index sets $\{\mathcal{U}_i\}_{i\in[m]} \subseteq [N]$, embedding dimensions $d_+$ and $d_-$ with $d = d_+ + d_-$.

\STATE \textbf{Step 1: Obtain local latent position estimations $\{\hat{\mx}^{(i)}\}_{i\in[m]}$}
\STATE Obtain $\hat{\mx}^{(i)} \in \mathbb{R}^{n_i \times d}$ such that $\ma^{(i)} \approx \hat{\mx}^{(i)} \mi_{d_+,d_-} \hat{\mx}^{(i)\top}$ for each $i \in [m]$, where $n_i = |\mathcal{U}_i|$.

\STATE \textbf{Step 2: Obtain alignment matrices $\{\hat{\mw}^{(i)}\}_{i\in[m]}$ via synchronization}
\STATE Compute the error measure $c_i := \|(\ma^{(i)} - \hat{\mpp}^{(i)})\hat{\mx}^{(i)}(\hat{\mx}^{(i)\top}\hat{\mx}^{(i)})^{-1}\mi_{d_+,d_-}\|_F / n_i^{1/2}$ for each $i \in [m]$, where $\hat{\mpp}^{(i)} := \hat{\mx}^{(i)}\mi_{d_+,d_-}\hat{\mx}^{(i)\top}$.
\STATE Compute the weight $\pi_{i,j} := (c_i^2 + c_j^2)^{-1}$ for each overlapping pair $(i,j)$ with $\mathcal{U}_i \cap \mathcal{U}_j \neq \varnothing$.
\STATE Solve for $\{\hat{\mw}^{(i)}\}_{i\in[m]}$ by
\[
\{\hat{\mw}^{(i)}\}_{i=1}^m = \argmin_{\mo^{(1)}, \ldots, \mo^{(m)} \in \mathcal{O}_{d_+,d_-}}
\sum_{\substack{(i,j): \\ \mathcal{U}_i \cap \mathcal{U}_j \neq \varnothing}}
\pi_{i,j} \, \|\hat{\mx}^{(i)}_{\langle \mathcal{U}_i \cap \mathcal{U}_j \rangle} \mo^{(i)} - \hat{\mx}^{(j)}_{\langle \mathcal{U}_i \cap \mathcal{U}_j \rangle} \mo^{(j)}\|_F^2
\]
using the alternating optimization based on \eqref{eq:update_indef_L} and \eqref{eq:update_indef_R} with averaging and projection onto $\mathcal{O}_{d_+,d_-}$, $\hat{\mw}^{(1)} = \mi_d$ fixed, and initialization from CMMI.

\STATE \textbf{Step 3: Aggregate $\{\hat{\mx}^{(i)} \hat{\mw}^{(i)}\}_{i\in[m]}$}
\STATE Compute the weight $\tau_i := c_i^{-2}$ for each $i\in[m]$.
\STATE Compute the weighted average
$\hat{\bm{x}}_k = \frac{\sum_{i: k\in\mathcal{U}_i} \tau_i (\hat{\mx}^{(i)}\hat{\mw}^{(i)})_{\langle k\rangle}}{\sum_{i: k\in\mathcal{U}_i} \tau_i}$ for each $k \in [N]$.
\STATE Form $\hat{\mx} = [\hat{\bm{x}}_1|\cdots|\hat{\bm{x}}_N]^\top$ and compute $\hat{\mpp} = \hat{\mx} \mi_{d_+,d_-} \hat{\mx}^\top$.

\ENSURE Estimated latent position matrix $\hat{\bm{X}}$ and estimated complete matrix $\hat{\bm{P}}$.
\end{algorithmic}
\endgroup
\end{algorithm}

We compare GSMMI for symmetric indefinite matrices in Algorithm~\ref{alg:gsmmi_indef} with CMMI and the four matrix-completion baselines under the same ring-structured block design as in Section~\ref{sec:simu_block}, the difference being that we now take $\bm{\Lambda} = \diag(\lambda, \tfrac{3}{4}\lambda, -\tfrac{1}{2}\lambda)$, so that $\bm{P}$ is indefinite with $d_+ = 2$ positive and $d_- = 1$ negative eigenvalues. Other settings, including the evaluation by the unobserved relative error, are as in Section~\ref{sec:simu_block}. As shown in Figure~\ref{fig:simu_varym_indef}, GSMMI and CMMI again substantially outperform all baselines, and GSMMI is uniformly more accurate than CMMI, with the gap widening sharply as $m$ grows: at $m = 21$, GSMMI attains an error of $0.324$ against $0.792$ for CMMI, a relative reduction of about $59\%$. The running times of GSMMI and CMMI remain comparable and orders of magnitude smaller than the slower baselines.

\begin{figure}[htbp]
    \centering
    \includegraphics[width=0.6\textwidth]{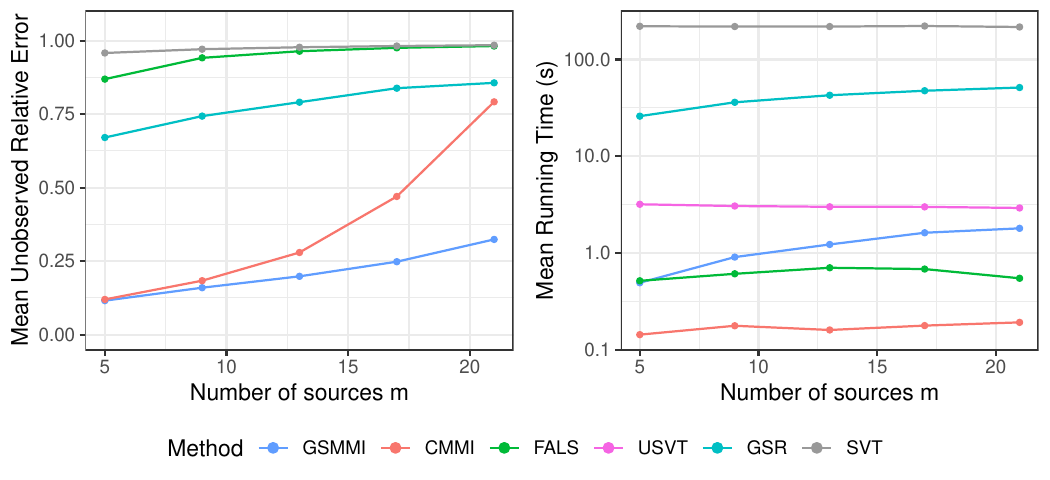}
    \caption{Performance of GSMMI, CMMI, and four matrix-completion baselines (FALS, USVT, GSR, SVT) for symmetric indefinite matrices under the ring-structured design with shortcuts described in Section~\ref{sec:simu_block}, with $\bm{\Lambda} = \diag(\lambda, \tfrac{3}{4}\lambda, -\tfrac{1}{2}\lambda)$ ($d_+ = 2$, $d_- = 1$), as $m$ varies over $\{5, 9, 13, 17, 21\}$. Other settings are as in Figure~\ref{fig:simu_block}. Left: unobserved relative Frobenius error; the $y$-axis is capped at $1.05$ for readability, and USVT (with errors up to $1.8$) exceeds this range. Right: running time in seconds (log scale). Results are averaged over $100$ independent Monte Carlo replicates.}
    \label{fig:simu_varym_indef}
\end{figure}

 \subsection{GSMMI for asymmetric or rectangular matrices}
\label{sec:extension_asy}

Suppose $\mpp \in \mathbb{R}^{N \times M}$ is a low-rank matrix of rank $d$, with singular value decomposition $\mpp = \muu \mSigma \mv^\top$, where $\mSigma \in \mathbb{R}^{d \times d}$ contains the singular values in descending order, and the orthonormal columns of $\muu \in \mathbb{R}^{N \times d}$ and $\mv \in \mathbb{R}^{M \times d}$ are the left and right singular vectors. The left and right latent positions are $\mx = \muu \mSigma^{1/2} \in \mathbb{R}^{N \times d}$ and $\my = \mv \mSigma^{1/2} \in \mathbb{R}^{M \times d}$, so that $\mpp = \mx \my^\top$. For source $i$, let $\mathcal{U}_i \subseteq [N]$ and $\mathcal{V}_i \subseteq [M]$ denote the row and column index sets, so that $\mpp^{(i)} = \mpp_{\mathcal{U}_i, \mathcal{V}_i} = \mx_{\mathcal{U}_i} \my_{\mathcal{V}_i}^\top$.

From each observed $\ma^{(i)}$, we obtain estimated left and right latent positions $(\hat{\mx}^{(i)}, \hat{\my}^{(i)})$ such that $\ma^{(i)} \approx \hat{\mx}^{(i)} \hat{\my}^{(i)\top}$. For example, one may set $\hat{\mx}^{(i)} = \hat{\muu}^{(i)} (\hat{\mSigma}^{(i)})^{1/2}$ and $\hat{\my}^{(i)} = \hat{\mv}^{(i)} (\hat{\mSigma}^{(i)})^{1/2}$ from the leading singular vectors of $\ma^{(i)}$. Each estimate $(\hat{\mx}^{(i)}, \hat{\my}^{(i)})$ recovers $(\mx_{\mathcal{U}_i}, \my_{\mathcal{V}_i})$ only up to an invertible transformation $\mw^{(i)} \in \mathbb{R}^{d \times d}$, which acts on the left latent positions as $\mx_{\mathcal{U}_i} = \hat{\mx}^{(i)} \mw^{(i)}$ and on the right latent positions as $\my_{\mathcal{V}_i} = \hat{\my}^{(i)} (\mw^{(i)\top})^{-1}$.

To globally synchronize all sources, we align the left latent positions across overlapping rows and the right latent positions across overlapping columns simultaneously. A single transformation $\mw^{(i)}$ governs both, acting on the left latent positions as $\mw^{(i)}$ and on the right latent positions as $(\mw^{(i)\top})^{-1}$. We thus seek transformation matrices $\{\hat{\mw}^{(i)}\}$ that jointly minimize
\begin{equation}\label{eq:sync_asy}
\begin{aligned}
\{\hat{\mw}^{(i)}\}_{i=1}^m = \argmin_{\mo^{(1)}, \ldots, \mo^{(m)} \in \mathbb{R}^{d \times d}}
&\sum_{i < j : \mathcal{U}_i \cap \mathcal{U}_j \neq \varnothing}
\pi^{\mx}_{i,j} \, \|\hat{\mx}^{(i)}_{\langle \mathcal{U}_i \cap \mathcal{U}_j \rangle} \mo^{(i)} - \hat{\mx}^{(j)}_{\langle \mathcal{U}_i \cap \mathcal{U}_j \rangle} \mo^{(j)}\|_F^2 \\
&+ \sum_{i < j : \mathcal{V}_i \cap \mathcal{V}_j \neq \varnothing}
\pi^{\my}_{i,j} \, \|\hat{\my}^{(i)}_{\langle \mathcal{V}_i \cap \mathcal{V}_j \rangle} (\mo^{(i)\top})^{-1} - \hat{\my}^{(j)}_{\langle \mathcal{V}_i \cap \mathcal{V}_j \rangle} (\mo^{(j)\top})^{-1}\|_F^2.
\end{aligned}
\end{equation}
Here $\pi^{\mx}_{i,j}$ and $\pi^{\my}_{i,j}$ are the analogues of the weight $\pi_{i,j}$ used in the previous settings, now computed separately for the row and column overlaps from the corresponding error measures $c^{\mx}_i$ and $c^{\my}_i$ (defined in Algorithm~\ref{alg:gsmmi_asy}), since the left and right latent positions are estimated with different accuracies.
The two terms depend on the transformation differently, since the second involves it through $(\mo^{(i)\top})^{-1}$. We therefore solve \eqref{eq:sync_asy} by an alternating scheme initialized from the CMMI solution, with $\hat{\mw}^{(1)} = \mi_d$ fixed as the reference. At each step, to update $\hat{\mw}^{(i)}$ for $i = 2, \ldots, m$ with the remaining transformations held fixed, we compute two separate least squares estimates. Aligning the left latent positions gives
\begin{equation}\label{eq:update_asy_x}
\hat{\mw}_{\mx}^{(i)} = \argmin_{\mo \in \mathbb{R}^{d \times d}} \sum_{j \neq i : \mathcal{U}_i \cap \mathcal{U}_j \neq \varnothing} \pi^{\mx}_{i,j} \, \|\hat{\mx}^{(i)}_{\langle \mathcal{U}_i \cap \mathcal{U}_j \rangle} \mo - \hat{\mx}^{(j)}_{\langle \mathcal{U}_i \cap \mathcal{U}_j \rangle} \hat{\mw}^{(j)}\|_F^2,
\end{equation}
which estimates $\hat{\mw}^{(i)}$ from the row overlaps, and aligning the right latent positions gives
\begin{equation}\label{eq:update_asy_y}
\hat{\mw}_{\my}^{(i)} = \argmin_{\mo \in \mathbb{R}^{d \times d}} \sum_{j \neq i : \mathcal{V}_i \cap \mathcal{V}_j \neq \varnothing} \pi^{\my}_{i,j} \, \|\hat{\my}^{(i)}_{\langle \mathcal{V}_i \cap \mathcal{V}_j \rangle} \mo - \hat{\my}^{(j)}_{\langle \mathcal{V}_i \cap \mathcal{V}_j \rangle} (\hat{\mw}^{(j)\top})^{-1}\|_F^2,
\end{equation}
which estimates $(\hat{\mw}^{(i)\top})^{-1}$ from the column overlaps. Both \eqref{eq:update_asy_x} and \eqref{eq:update_asy_y} are standard least squares problems with closed-form solutions via the Moore-Penrose pseudoinverse.

Since $\hat{\mw}_{\mx}^{(i)}$ and $(\hat{\mw}_{\my}^{(i)\top})^{-1}$ are both estimates of $\hat{\mw}^{(i)}$, we combine them into a single update. Unlike a single-block error, the reliability of each estimate is governed by the amount of overlap information constraining it: $\hat{\mw}_{\mx}^{(i)}$ is more reliable when source $i$ shares more and larger row overlaps with the others, and analogously for $(\hat{\mw}_{\my}^{(i)\top})^{-1}$ and the column overlaps. We therefore weight the two estimates by the total overlap information they use,
\begin{equation}\label{eq:asy_weights}
\kappa_{\mx}^{(i)} := \sum_{j \neq i : \mathcal{U}_i \cap \mathcal{U}_j \neq \varnothing} \pi^{\mx}_{i,j} \, \|\hat{\mx}^{(i)}_{\langle \mathcal{U}_i \cap \mathcal{U}_j \rangle}\|_F^2, \qquad
\kappa_{\my}^{(i)} := \sum_{j \neq i : \mathcal{V}_i \cap \mathcal{V}_j \neq \varnothing} \pi^{\my}_{i,j} \, \|\hat{\my}^{(i)}_{\langle \mathcal{V}_i \cap \mathcal{V}_j \rangle}\|_F^2,
\end{equation}
which equal the traces of the Gram matrices in \eqref{eq:update_asy_x} and \eqref{eq:update_asy_y} and quantify the total amount of overlap information available in each direction, and update
\begin{equation}\label{eq:asy_combine}
\hat{\mw}^{(i)} = \frac{\kappa_{\mx}^{(i)}\, \hat{\mw}_{\mx}^{(i)} + \kappa_{\my}^{(i)}\, (\hat{\mw}_{\my}^{(i)\top})^{-1}}{\kappa_{\mx}^{(i)} + \kappa_{\my}^{(i)}}.
\end{equation}
Since the two least squares estimates are unconstrained, the combined $\hat{\mw}^{(i)}$ may become poorly scaled and accumulate scale drift over the iterations. We therefore rescale its singular values toward unity after each update to keep it well-conditioned, which improves numerical stability. We repeat these updates for $i = 2, \ldots, m$ until convergence. After obtaining $\{\hat{\mw}^{(i)}\}$, we aggregate the aligned left and right latent positions separately, as in \eqref{eq:aggregation}, to form $\hat{\mx}$ and $\hat{\my}$, and estimate $\hat{\mpp} = \hat{\mx} \hat{\my}^\top$. The complete procedure is summarized in Algorithm~\ref{alg:gsmmi_asy}.

We note, as also mentioned for CMMI in \citet{zheng2026chain}, that two sources of an asymmetric matrix can be synchronized as long as they share sufficiently many row indices \emph{or} sufficiently many column indices; overlap in both is not required. This is naturally accommodated by the weighted update \eqref{eq:asy_combine}. If source $i$ has no column overlap with any other source, then $\kappa_{\my}^{(i)} = 0$ and the update reduces to $\hat{\mw}^{(i)} = \hat{\mw}_{\mx}^{(i)}$, relying solely on the row overlaps; symmetrically, if it has no row overlap, then $\kappa_{\mx}^{(i)} = 0$ and the update reduces to $\hat{\mw}^{(i)} = (\hat{\mw}_{\my}^{(i)\top})^{-1}$.

\begin{algorithm}[htbp]
\caption{GSMMI for asymmetric matrices}
\label{alg:gsmmi_asy}
\begingroup
\setlength{\baselineskip}{10pt}
\setlength{\itemsep}{0pt}
\setlength{\parskip}{0pt}
\setlength{\abovedisplayskip}{4pt}
\setlength{\belowdisplayskip}{4pt}
\begin{algorithmic}
\REQUIRE Observed submatrices $\{\ma^{(i)}\}_{i\in[m]}$ with row and column index sets $\{(\mathcal{U}_i, \mathcal{V}_i)\}_{i\in[m]} \subseteq [N] \times [M]$, embedding dimension $d$.

\STATE \textbf{Step 1: Obtain local latent position estimations $\{(\hat{\mx}^{(i)}, \hat{\my}^{(i)})\}_{i\in[m]}$}
\STATE Obtain $\hat{\mx}^{(i)} \in \mathbb{R}^{|\mathcal{U}_i| \times d}$ and $\hat{\my}^{(i)} \in \mathbb{R}^{|\mathcal{V}_i| \times d}$ such that $\ma^{(i)} \approx \hat{\mx}^{(i)} \hat{\my}^{(i)\top}$ for each $i \in [m]$.

\STATE \textbf{Step 2: Obtain alignment matrices $\{\hat{\mw}^{(i)}\}_{i\in[m]}$ via synchronization}
\STATE Compute the error measures
$
c^{\mx}_i := \|(\ma^{(i)} - \hat{\mpp}^{(i)})\hat{\my}^{(i)}(\hat{\my}^{(i)\top}\hat{\my}^{(i)})^{-1}\|_F/|\mathcal{U}_i|^{1/2}$ and 
$c^{\my}_i := \|(\ma^{(i)} - \hat{\mpp}^{(i)})^\top\hat{\mx}^{(i)}(\hat{\mx}^{(i)\top}\hat{\mx}^{(i)})^{-1}\|_F/|\mathcal{V}_i|^{1/2}$ for each $i\in[m]$, where $\hat{\mpp}^{(i)} := \hat{\mx}^{(i)}\hat{\my}^{(i)\top}$.
\STATE Compute the weight $\pi^{\mx}_{i,j} := ((c^{\mx}_i)^2 + (c^{\mx}_j)^2)^{-1}$ for each pair with $\mathcal{U}_i \cap \mathcal{U}_j \neq \varnothing$ and the weight $\pi^{\my}_{i,j} := ((c^{\my}_i)^2 + (c^{\my}_j)^2)^{-1}$ for each pair with $\mathcal{V}_i \cap \mathcal{V}_j \neq \varnothing$.
\STATE Compute the weights
$
\kappa^{\mx}_i := \sum_{j \neq i : \mathcal{U}_i \cap \mathcal{U}_j \neq \varnothing} \pi^{\mx}_{i,j} \, \|\hat{\mx}^{(i)}_{\langle \mathcal{U}_i \cap \mathcal{U}_j \rangle}\|_F^2$ and
$\kappa^{\my}_i := \sum_{j \neq i : \mathcal{V}_i \cap \mathcal{V}_j \neq \varnothing} \pi^{\my}_{i,j} \, \|\hat{\my}^{(i)}_{\langle \mathcal{V}_i \cap \mathcal{V}_j \rangle}\|_F^2$ for each $i\in[m]$.
\STATE Solve for $\{\hat{\mw}^{(i)}\}_{i\in[m]}$ by
\[
\begin{aligned}
\{\hat{\mw}^{(i)}\}_{i=1}^m = 	\argmin_{\mo^{(1)}, \ldots, \mo^{(m)} \in \mathbb{R}^{d \times d}}
\sum_{\substack{i < j : \\ \mathcal{U}_i \cap \mathcal{U}_j \neq \varnothing}}
&\pi^{\mx}_{i,j} \, \|\hat{\mx}^{(i)}_{\langle \mathcal{U}_i \cap \mathcal{U}_j \rangle} \mo^{(i)} - \hat{\mx}^{(j)}_{\langle \mathcal{U}_i \cap \mathcal{U}_j \rangle} \mo^{(j)}\|_F^2
\\&+ \sum_{\substack{i < j : \\ \mathcal{V}_i \cap \mathcal{V}_j \neq \varnothing}}
\pi^{\my}_{i,j} \, \|\hat{\my}^{(i)}_{\langle \mathcal{V}_i \cap \mathcal{V}_j \rangle} (\mo^{(i)\top})^{-1} - \hat{\my}^{(j)}_{\langle \mathcal{V}_i \cap \mathcal{V}_j \rangle} (\mo^{(j)\top})^{-1}\|_F^2,
\end{aligned}
\]
using the alternating optimization based on \eqref{eq:update_asy_x} and \eqref{eq:update_asy_y} with the weighted combination \eqref{eq:asy_combine} using $\kappa^{\mx}_i, \kappa^{\my}_i$ and rescaling of the singular values toward unity, with $\hat{\mw}^{(1)} = \mi_d$ fixed and initialization from CMMI.

\STATE \textbf{Step 3: Aggregate the aligned latent positions}
\STATE Compute the weights $\tau^{\mx}_i := (c^{\mx}_i)^{-2}$ and $\tau^{\my}_i := (c^{\my}_i)^{-2}$ for each $i\in[m]$.
\STATE For each row entity $k \in [N]$, compute
$\hat{\bm{x}}_k = \frac{\sum_{i: k\in\mathcal{U}_i} \tau^{\mx}_i (\hat{\mx}^{(i)}\hat{\mw}^{(i)})_{\langle k\rangle}}{\sum_{i: k\in\mathcal{U}_i} \tau^{\mx}_i}$, and for each column entity $l \in [M]$, compute
$\hat{\bm{y}}_l = \frac{\sum_{i: l\in\mathcal{V}_i} \tau^{\my}_i (\hat{\my}^{(i)}(\hat{\mw}^{(i)\top})^{-1})_{\langle l\rangle}}{\sum_{i: l\in\mathcal{V}_i} \tau^{\my}_i}$.
\STATE Form $\hat{\mx} = [\hat{\bm{x}}_1|\cdots|\hat{\bm{x}}_N]^\top$ and $\hat{\my} = [\hat{\bm{y}}_1|\cdots|\hat{\bm{y}}_M]^\top$, and compute $\hat{\mpp} = \hat{\mx}\hat{\my}^\top$.

\ENSURE Estimated latent position matrices $\hat{\bm{X}}$ and $\hat{\bm{Y}}$, and estimated complete matrix $\hat{\bm{P}}$.
\end{algorithmic}
\endgroup
\end{algorithm}

We compare GSMMI for asymmetric matrices in Algorithm~\ref{alg:gsmmi_asy} with CMMI and the four matrix-completion baselines under a ring-structured block design analogous to that in Section~\ref{sec:simu_block}, the only difference being that $\bm{P} = \bm{U}\bm{\Sigma}\bm{V}^\top \in \mathbb{R}^{N \times M}$ is now rectangular, with $\bm{\Sigma} = \diag(\lambda, \tfrac{3}{4}\lambda, \tfrac{1}{2}\lambda)$, and the ring-and-shortcut structure is applied independently to the row and column index sets, so that each source observes a submatrix of overlapping rows and columns; the left panel of Figure~\ref{fig:simu_varym_asy} illustrates the resulting observation pattern, taking $m = 5$ as an example. We fix $N \approx 1000$ and $M \approx 1400$, and all other settings, including the evaluation by the unobserved relative error, are as in Section~\ref{sec:simu_block}. As shown in Figure~\ref{fig:simu_varym_asy}, GSMMI and CMMI again substantially outperform all baselines, and GSMMI is uniformly more accurate than CMMI, with the gap widening as $m$ grows: at $m = 25$, GSMMI attains an error of $0.326$ against $0.732$ for CMMI, a relative reduction of about $55\%$. The running times of GSMMI and CMMI remain comparable and orders of magnitude smaller than the slower baselines.

\begin{figure}[htbp]
    \centering
    \includegraphics[width=0.35\textwidth]{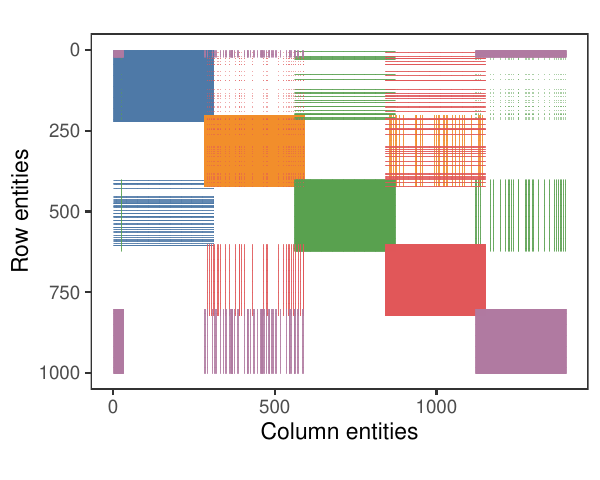}
    \includegraphics[width=0.6\textwidth]{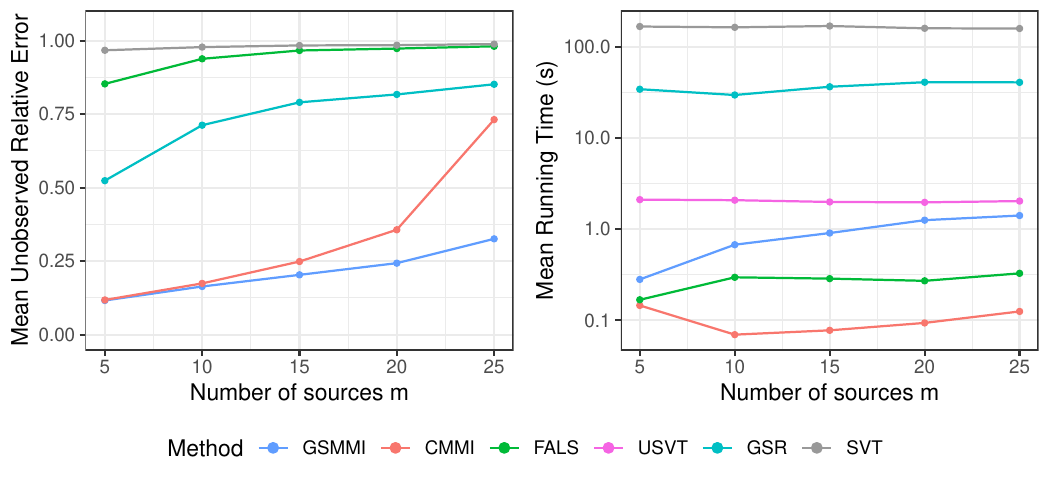}
    \caption{Performance of GSMMI, CMMI, and four matrix-completion baselines (FALS, USVT, GSR, SVT) for asymmetric matrices under a ring-structured design with shortcuts analogous to Section~\ref{sec:simu_block}, applied independently to rows and columns, with $\bm{\Sigma} = \diag(\lambda, \tfrac{3}{4}\lambda, \tfrac{1}{2}\lambda)$, as $m$ varies over $\{5, 10, 15, 20, 25\}$, and $N \approx 1000$, $M \approx 1400$. Left: the observation pattern, shown for $m = 5$ as an example, where each color corresponds to one source and overlapping regions are shown by blended colors where blocks share entities. Middle: unobserved relative Frobenius error; the $y$-axis is capped at $1.05$ for readability, and USVT (with errors up to $1.16$) exceeds this range. Right: running time in seconds (log scale). Results are averaged over $100$ independent Monte Carlo replicates.}
    \label{fig:simu_varym_asy}
\end{figure}

As noted above, two sources can be synchronized as long as they share sufficiently many row indices \emph{or} column indices, so overlap in only one direction already suffices. To illustrate this empirically, we consider a more challenging variant in which the sources share \emph{no} columns at all: the rows still follow the ring-and-shortcut structure, but the column index sets are disjoint, so that each column entity is observed by exactly one source. Alignment therefore relies entirely on the shared rows, and the column latent positions are recovered indirectly through the row-based synchronization. All other settings are as in the previous asymmetric experiment; the left panel of Figure~\ref{fig:simu_varym_asy_coldisjoint} illustrates the resulting observation pattern for $m = 5$, where the dashed vertical lines mark the boundaries between the disjoint column blocks. As shown in Figure~\ref{fig:simu_varym_asy_coldisjoint}, GSMMI and CMMI still substantially outperform all baselines even under this column-disjoint design, and GSMMI remains uniformly more accurate than CMMI, with the gap again widening as $m$ grows: at $m = 24$, GSMMI attains an error of $0.379$ against $0.704$ for CMMI, a relative reduction of about $46\%$. This confirms that GSMMI continues to benefit from global synchronization even when alignment relies entirely on row overlaps. The running times of GSMMI and CMMI remain comparable and orders of magnitude smaller than the slower baselines.

\begin{figure}[htbp]
    \centering
    \includegraphics[width=0.35\textwidth]{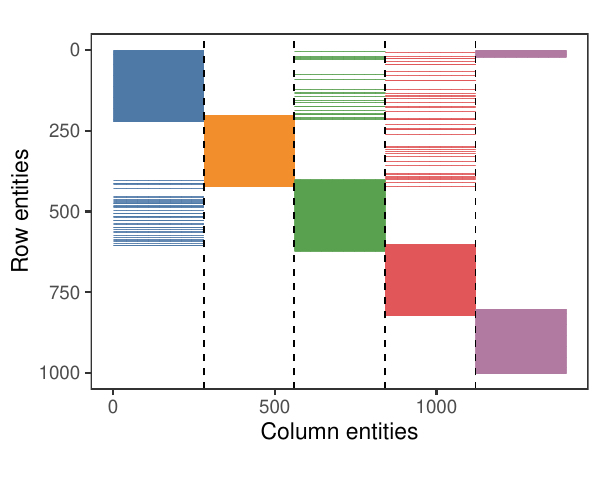}
    \includegraphics[width=0.6\textwidth]{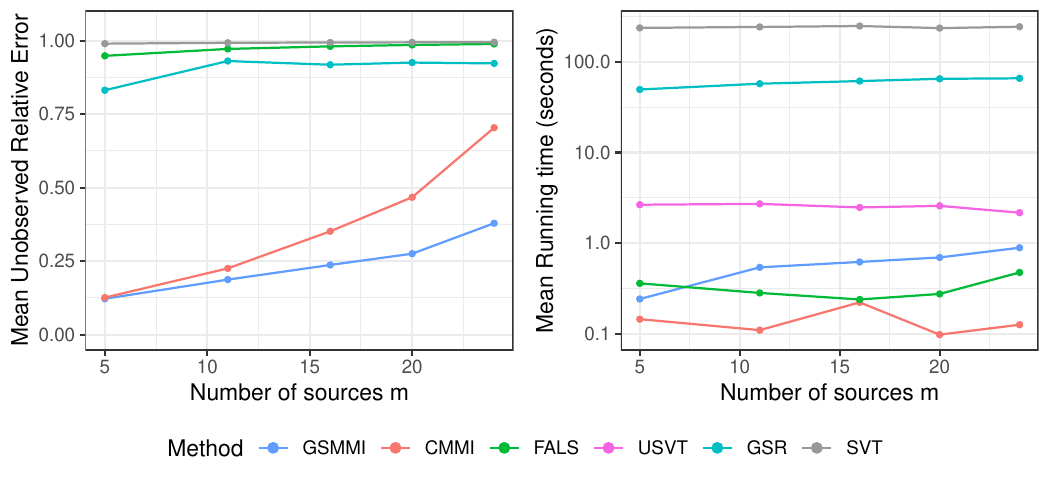}
    \caption{Performance of GSMMI, CMMI, and four matrix-completion baselines (FALS, USVT, GSR, SVT) for asymmetric matrices under the column-disjoint ring-structured design, where the rows follow the ring-structured design with shortcuts but the column blocks are disjoint, with $\bm{\Sigma} = \diag(\lambda, \tfrac{3}{4}\lambda, \tfrac{1}{2}\lambda)$, as $m$ varies over $\{5, 11, 16, 20, 24\}$, and $N \approx 1000$, $M \approx 1400$. Left: the observation pattern, shown for $m = 5$ as an example, where each color corresponds to one source and dashed vertical lines mark the boundaries between disjoint column blocks. Middle: unobserved relative Frobenius error; the $y$-axis is capped at $1.05$ for readability, and USVT (with errors up to $1.16$) exceeds this range. Right: running time in seconds (log scale). Results are averaged over $100$ independent Monte Carlo replicates.}
    \label{fig:simu_varym_asy_coldisjoint}
\end{figure}


\newpage

\bibliographystyle{unsrtnat}
\bibliography{ref}

\newpage

\appendix

\counterwithin{figure}{section}
\counterwithin{table}{section}

\section{Additional Numerical Comparisons}\label{sec:init CMMI}

As mentioned in Remark~\ref{rem:comp}, high-quality initialization from CMMI is crucial for the computational efficiency of GSMMI. To demonstrate this, we compare the proposed GSMMI with CMMI initialization against GSMMI with cold start (GSMMI-cold), where the initialization is automatically determined by the CVXR solver from a generic feasible point without using any problem-specific initialization.

We also consider an alternative approach to obtaining a fast algorithm for positive semidefinite matrix integration. Notice that for each $(i,j)$ with $\mathcal{U}_i \cap \mathcal{U}_j \neq \varnothing$, we have
$$
\begin{aligned}
	\|\hat{\mx}^{(i)}_{\langle\mathcal{U}_i \cap \mathcal{U}_j\rangle}\mo^{(i)}-\hat{\mx}^{(j)}_{\langle\mathcal{U}_i \cap \mathcal{U}_j\rangle}\mo^{(j)}\|_F^2
&=\operatorname{tr}\!\left(\hat{\mx}^{(i)\top}_{\langle\mathcal{U}_i \cap \mathcal{U}_j\rangle}\hat{\mx}^{(i)}_{\langle\mathcal{U}_i \cap \mathcal{U}_j\rangle}\right) +
\operatorname{tr}\!\left(\hat{\mx}^{(j)\top}_{\langle\mathcal{U}_i \cap \mathcal{U}_j\rangle}\hat{\mx}^{(j)}_{\langle\mathcal{U}_i \cap \mathcal{U}_j\rangle}\right)\\
&\quad-2\operatorname{tr}\!\left(\hat{\mx}^{(i)\top}_{\langle\mathcal{U}_i \cap \mathcal{U}_j\rangle}\hat{\mx}^{(j)}_{\langle\mathcal{U}_i \cap \mathcal{U}_j\rangle}\mo^{(j)}\mo^{(i)\top}\right).
\end{aligned}
$$
The first two traces do not depend on $\mo^{(i)},\mo^{(j)}$, so \eqref{eq:partial MLE} is equivalent to
\begin{equation}\label{eq:argmax}
	\argmax_{\mo^{(1)},\dots,\mo^{(m)}\in \mathcal{O}_d}
	\sum_{i<j:\,\mathcal{U}_i \cap \mathcal{U}_j \neq \varnothing}
\operatorname{tr}\!\left(	\pi_{i,j}\hat{\mx}^{(j)\top}_{\langle\mathcal{U}_i \cap \mathcal{U}_j\rangle}\hat{\mx}^{(i)}_{\langle\mathcal{U}_i \cap \mathcal{U}_j\rangle}\,\mo^{(i)}\mo^{(j)\top}\right).
\end{equation}
This is a little Grothendieck problem over the orthogonal group. For such problems, \citet{bandeira2016approximating} propose a fast approximation algorithm (SDP-approx) based on a semidefinite programming relaxation with Gaussian random projections to recover feasible orthogonal matrices.

Under the same experimental setting as Section~\ref{sec:simu_error}, Figure~\ref{fig:simu_accuracy_full} shows the correlation and relative error results comparing GSMMI (with CMMI initialization), GSMMI-cold, SDP-approx, and CMMI. GSMMI and GSMMI-cold achieve nearly identical accuracy, both substantially outperforming CMMI. SDP-approx performs slightly worse than the GSMMI methods due to its approximate solution, but still outperforms CMMI. Table~\ref{tab:time_comparison_full} presents the total runtime comparison, showing that GSMMI requires only $0.03$ times the runtime of GSMMI-cold and $0.19$ times that of SDP-approx, while achieving comparable or better accuracy.

\begin{figure}[htbp]
    \centering
    \includegraphics[width=\textwidth]{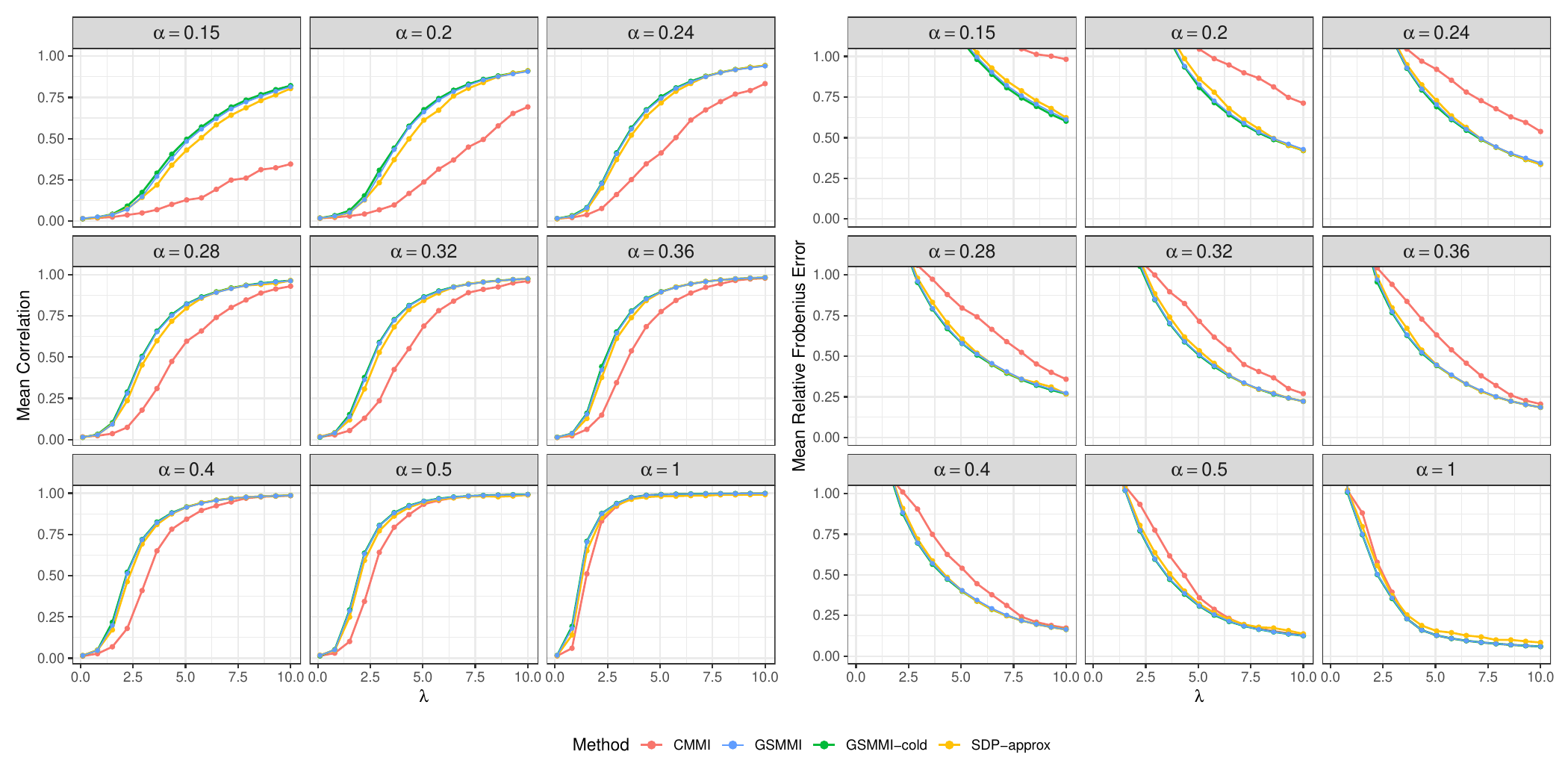}
    \caption{Estimation accuracy of GSMMI, GSMMI-cold, SDP-approx, and CMMI as we vary $\lambda$ and $\alpha$, under the same setting as Figure~\ref{fig:simu_accuracy}. Left: correlation $\langle \bm{P}, \hat{\bm{P}} \rangle_F / (\|\bm{P}\|_F \|\hat{\bm{P}}\|_F)$. Right: relative Frobenius error $\|\bm{P} - \hat{\bm{P}}\|_F / \|\bm{P}\|_F$.}
    \label{fig:simu_accuracy_full}
\end{figure}

\begin{table}[htbp]
\centering
\caption{Computation time comparison for GSMMI, GSMMI-cold, SDP-approx, and CMMI in the experiment of Section~\ref{sec:init CMMI}.}
\label{tab:time_comparison_full}
\begin{tabular}{lrr}
\toprule
Method & Total Time (s) & Speedup vs GSMMI \\
\midrule
CMMI & 286.75 & 3.35$\times$ \\
GSMMI & 960.61 & 1.00$\times$ \\
GSMMI-cold & 28861.27 & 0.03$\times$ \\
SDP-approx & 5124.51 & 0.19$\times$ \\
\bottomrule
\end{tabular}
\end{table}


\section{Proof of Main Results}

\subsection{Proof of Theorem~\ref{thm:xi}}

Define $\rho_k^{(i)} := \frac{\lambda_k^{(i)}}{\sqrt{\sigma_i^2 n_i}}$, and assume $\rho_k^{(i)} \to (\rho_k^{(i)} )^*$ as $n_i \to \infty$,  for a constant $(\rho_k^{(i)} )^* \geq 0$. By Lemma~\ref{lemma:lambdai phase transition} and Lemma~\ref{lemma:ui phase transition}, for any $k\in[d]$ and any finite ordered set of indices $\mathcal{I}\subset[n_i]$, we have
\begin{itemize}
		\item If $(\rho_k^{(i)} )^*\leq 1$, then
		$$
		\begin{aligned}
			&\hat\lambda^{(i)}_k\to 2\sqrt{\sigma_i^2 n_i},\\
			&
\sqrt{n_i}\left[\hat{\bm{u}}_k^{(i)}\right]_{\mathcal{I}}
\xrightarrow{\mathcal{D}} 
  \mathcal{N}(\bm{0},\mi_{|\mathcal{I}|}).
		\end{aligned}
		$$
		\item If $(\rho_k^{(i)} )^* > 1$, then
		$$
		\begin{aligned}
			&\hat\lambda^{(i)}_k\to \lambda^{(i)}_k+\frac{{\sigma_i^2 n_i}}{\lambda^{(i)}_k}>2\sqrt{\sigma_i^2 n_i},\\
			&
{\frac{\lambda_k^{(i)}}{\sigma_i}}\left[\hat{\bm{u}}_k^{(i)}
-\sqrt{1 - \frac{\sigma_i^2 n_i}{(\lambda_k^{(i)})^2}}\bm{u}_k^{(i)}
\right]_{\mathcal{I}}
\xrightarrow{\mathcal{D}} 
  \mathcal{N}(\bm{0},\mi_{|\mathcal{I}|}).
		\end{aligned}
		$$
	\end{itemize}

Notice that when $(\rho_k^{(i)})^* > 1$, we have $\hat\lambda^{(i)}_k\to \lambda^{(i)}_k+\frac{\sigma_i^2 n_i}{\lambda^{(i)}_k}$. To correct this bias, after obtaining $\hat\lambda^{(i)}_k$, we solve the equation $\hat\lambda^{(i)}_k=\phi(\hat\lambda^{(i)}_k)+\frac{\hat\sigma_i^2 n_i}{\phi(\hat\lambda^{(i)}_k)}$ for $\phi(\hat\lambda^{(i)}_k)$, which yields
	$$\phi(\hat\lambda^{(i)}_k):=\frac{\hat\lambda^{(i)}_k+\sqrt{(\hat\lambda^{(i)}_k)^2-4\hat\sigma_i^2n_i}}{2},$$  
where $\hat\sigma_i^2:=\|\ma^{(i)}-\hat\mpp^{(i)}\|_F^2/n_i^2$ is the estimate of $\sigma_i^2$. This debiasing transformation ensures that
	$$
	\phi(\hat\lambda^{(i)}_k)\to \lambda^{(i)}_k.
	$$
We also define the debiased eigenvector
	$$
	\psi(\hat{\bm{u}}_k^{(i)})
	:=\frac{1}{\sqrt{1 - \frac{\hat\sigma_i^2 n_i}{(\phi(\hat\lambda^{(i)}_k))^2}}}\cdot \hat{\bm{u}}_k^{(i)},
	$$
ensuring that
	$$
	\frac{\lambda_k^{(i)}}{\sigma_i}\sqrt{1 - \frac{\sigma_i^2 n_i}{(\lambda_k^{(i)})^2}}
	\left[\psi(\hat{\bm{u}}_k^{(i)})
-\bm{u}_k^{(i)}
\right]_{\mathcal{I}}
\xrightarrow{\mathcal{D}} 
  \mathcal{N}(\bm{0},\bm{I}_{|\mathcal{I}|}).
	$$
	
Recall that for simplicity, we assume $n_i\equiv n =\lfloor\alpha N\rfloor$ and $\sigma_i^2\equiv \sigma^2$ for all $i\in[m]$, and we define $\rho_k:=\sqrt{\frac{\alpha\lambda_k^2}{\sigma^2N}}$ with $\rho_k\to\rho_k^*$ for a constant $\rho_k^*>0$.
Then with {\color{black}$\lambda_k^{(i)}
        \;\xrightarrow{\mathrm{a.s.}}\;
        \alpha\,\lambda_k$}, we have
	\begin{itemize}
		\item If $\rho_k^*\leq 1$, then
		$$
		\begin{aligned}
			&\hat\lambda^{(i)}_k\to 2\sqrt{\sigma^2N\alpha},\\
			&
\sqrt{\alpha N}\left[\hat{\bm{u}}_k^{(i)}\right]_{\mathcal{I}}
\xrightarrow{\mathcal{D}} 
  \mathcal{N}(\bm{0},\mi_{|\mathcal{I}|}).
		\end{aligned}
		$$
		\item If $\rho_k^*> 1$, then
		$$
		\begin{aligned}
			&\phi(\hat\lambda^{(i)}_k)\to \alpha\lambda_k,\\
  &
{\frac{\alpha\lambda_k}{\sigma}}\sqrt{1-\frac{\sigma^2N}{\alpha\lambda_k^2}}
\left[\psi(\hat{\bm{u}}_k^{(i)})
-
\bm{u}_k^{(i)}
\right]_{\mathcal{I}}
\xrightarrow{\mathcal{D}} 
  \mathcal{N}(\bm{0},\mi_{|\mathcal{I}|}).
		\end{aligned}
		$$
	\end{itemize}
	Therefore the desired result for $\hat{\bm{x}}^{(i)}_k=\phi^{1/2}(\hat\lambda^{(i)}_k)\cdot \psi(\hat{\bm{u}}_k^{(i)})$ follows.
	
	\subsection{Proof of Theorem~\ref{thm:cmmi_error}}

For $i\neq j$, define
$A_{i,j} := \langle \hat{\bm{x}}^{(i)}_{\langle\mathcal{S}_{i,j}\rangle}, \hat{\bm{x}}^{(j)}_{\langle\mathcal{S}_{i,j}\rangle} \rangle$,
so that $\hat{w}^{(i,j)} = \operatorname{sgn}(A_{i,j})$. We decompose $A_{i,j}$ into a signal term and noise terms:
$$
A_{i,j} = M_{i,j} + (E_1)_{i,j} + (E_2)_{i,j} + (E_3)_{i,j},
$$
where
\begin{align*}
M_{i,j} &:= w^{(i)} w^{(j)} \langle \bm{x}_{\mathcal{S}_{i,j}}, \bm{x}_{\mathcal{S}_{i,j}} \rangle = w^{(i)} w^{(j)} \| \bm{x}_{\mathcal{S}_{i,j}} \|^2, \\
(E_1)_{i,j} &:= \gamma  w^{(i)} \langle \bm{x}_{\mathcal{S}_{i,j}}, (\bm{\delta}_j)_{\langle\mathcal{S}_{i,j}\rangle} \rangle, \\
(E_2)_{i,j} &:= \gamma   w^{(j)} \langle (\bm{\delta}_i)_{\langle\mathcal{S}_{i,j}\rangle}, \bm{x}_{\mathcal{S}_{i,j}} \rangle, \\
(E_3)_{i,j} &:= \gamma^2 \langle (\bm{\delta}_i)_{\langle\mathcal{S}_{i,j}\rangle}, (\bm{\delta}_j)_{\langle\mathcal{S}_{i,j}\rangle} \rangle.
\end{align*}
The expectation is
$$
\mathbb{E}[A_{i,j}] = \mathbb{E}[M_{i,j}] = w^{(i)} w^{(j)} \| \bm{x}_{\mathcal{S}_{i,j}} \|^2,
$$
which has the same sign as $w^{(i)} w^{(j)}$. Define the total noise term $E_{i,j} := (E_1)_{i,j} + (E_2)_{i,j} + (E_3)_{i,j}$. The sign estimate $\hat{w}^{(i,j)} = \operatorname{sgn}(A_{i,j})$ is incorrect when the noise overwhelms the signal, i.e., when either (1) $M_{i,j} > 0$ and $E_{i,j} < -|M_{i,j}|$, or (2) $M_{i,j} < 0$ and $E_{i,j} > |M_{i,j}|$.

We analyze case (1), where $w^{(i)} w^{(j)} = 1$ and $M_{i,j} = \| \bm{x}_{\mathcal{S}_{i,j}} \|^2$. Writing $s := |\mathcal{S}_{i,j}|$ and indexing the overlap coordinates by $\ell \in [s]$, let $x_\ell$ denote the corresponding component of $\bm{x}_{\mathcal{S}_{i,j}}$, and $\delta_{i,\ell}, \delta_{j,\ell}$ the corresponding components of $(\bm{\delta}_i)_{\langle\mathcal{S}_{i,j}\rangle}$ and $(\bm{\delta}_j)_{\langle\mathcal{S}_{i,j}\rangle}$. Then
$$
E_{i,j} = \sum_{\ell=1}^{s} Y_\ell, \quad Y_\ell = \gamma w^{(i)} x_\ell \delta_{j,\ell} + \gamma w^{(j)} x_\ell \delta_{i,\ell} + \gamma^2 \delta_{i,\ell} \delta_{j,\ell},
$$
where the $Y_\ell$ are independent, centered random variables with $\operatorname{Var}[Y_\ell] = 2\gamma^2 x_\ell^2 + \gamma^4$. {\color{black}By the assumption $\max_\ell x_\ell^2 = o(\lambda_1\alpha^2) = o(\|\bm{x}_{\mathcal{S}_{i,j}}\|^2)$}, no single $Y_\ell$ dominates, so the Lyapunov condition for the central limit theorem holds. Hence, {\color{black}when $\alpha^2 N = \omega(1)$},
$$
\frac{E_{i,j}}{\sqrt{2\gamma^2 \|\bm{x}_{\mathcal{S}_{i,j}}\|^2 + \gamma^4 s}} \xrightarrow{\mathcal{D}} \mathcal{N}(0,1).
$$
Therefore,
$$
\mathbb{P}(E_{i,j} < -|M_{i,j}|) \to  \Phi\!\left(-\frac{\|\bm{x}_{\mathcal{S}_{i,j}}\|^2}{\sqrt{2 \gamma^2 \|\bm{x}_{\mathcal{S}_{i,j}}\|^2 + \gamma^4 s}}\right).
$$
Case (2), where $w^{(i)} w^{(j)} = -1$, yields the same probability. Thus
$$
\mathbb{P}\left( \hat{w}^{(i,j)} \neq w^{(i)} w^{(j)} \right) \to  \Phi\!\left(-\frac{\|\bm{x}_{\mathcal{S}_{i,j}}\|^2}{\sqrt{2 \gamma^2 \|\bm{x}_{\mathcal{S}_{i,j}}\|^2 + \gamma^4 s}}\right).
$$
{\color{black}Under the assumptions $s \asymp \alpha^2N$ and $\|\bm{x}_{\mathcal{S}_{i,j}}\|^2 \asymp\lambda_1\alpha^2$}, this becomes
$$
\begin{aligned}
	\mathbb{P}\left( \hat{w}^{(i,j)} \neq w^{(i)} w^{(j)} \right) 
&\to   \Phi\!\left(-\frac{\lambda_1\alpha^2}{\sqrt{2 \gamma^2 \lambda_1\alpha^2 + \gamma^4 \alpha^2 N}}\right)
=  \Phi\!\left(-\frac{\lambda_1\alpha}{\sqrt{2 \gamma^2 \lambda_1 + \gamma^4  N}}\right)\\
&\leq \exp\!\left(-\frac{\lambda_1^2 \alpha^2}{4 \gamma^2 \lambda_1 + 2\gamma^4 N}\right),
\end{aligned}
$$
where the last step uses the Gaussian tail bound $\Phi(-x) \leq \exp(-x^2/2)$ for $x > 0$.

Finally, we establish the claim about perfect alignment. Let $\epsilon := \mathbb{P}(\hat{w}^{(i,j)} \neq w^{(i)} w^{(j)})$ denote the pairwise error probability, and let $D \le m-1$ be the depth of the spanning tree, i.e., the length of the longest path from the root to any leaf. By a union bound over the edges along any root-to-leaf path,
$$
\mathbb{P}(\text{at least one edge error in the spanning tree}) \leq D\epsilon \leq D\exp\!\left(-\frac{\lambda_1^2 \alpha^2}{4 \gamma^2 \lambda_1 + 2\gamma^4 N}\right).
$$
For CMMI to achieve perfect alignment with high probability, it therefore suffices that
$$
\frac{\lambda_1^2 \alpha^2}{4 \gamma^2 \lambda_1 + 2\gamma^4  N} = \omega(\log D +\log N).
$$
{\color{black}Substituting $\gamma=\sigma\sqrt{\frac{\lambda_1}{\alpha\lambda_1^2-\sigma^2N}}$ from Theorem~\ref{thm:xi}, together with the condition $\rho_1 = \sqrt{\frac{\alpha\lambda_1^2}{\sigma^2N}}\to\rho_1^*>1$}, this simplifies to
$$
\alpha^2N=\omega(\log D +\log N).
$$

 \subsection{Proof of Theorem~\ref{thm:gsmmi_error}}

We decompose the pairwise correlation matrix as $\bm{A} = \bm{M} + \bm{E}$, where the signal matrix $\bm{M}$ has entries
$$
M_{i,j} = w^{(i)} w^{(j)} \|\bm{x}_{\mathcal{S}_{i,j}}\|^2 \quad \text{for }i\neq j, \qquad M_{i,i} = 0.
$$
Let $\bm{D}_{\bm{w}} := \operatorname{diag}(w^{(1)},\dots,w^{(m)})$, which is orthogonal. With the overlap matrix $\bm{S}$ from \eqref{eq:S}, the signal matrix factorizes as
$$
\bm{M} = \bm{D}_{\bm{w}} \bm{S} \bm{D}_{\bm{w}}.
$$
By the discussion preceding Assumption~\ref{asmp:overlap}, $\bm{S}$ has a simple leading eigenvalue $\lambda_1(\bm{S})>0$ with entrywise-positive eigenvector $\bm{s}$, and we write its eigendecomposition as
$$
\bm{S} = \lambda_1(\bm{S})\, \bm{s} \bm{s}^\top + \bm{R}
$$
with $\|\bm{R}\| = |\lambda_2(\bm{S})|$.
Using $\bm{D}_{\bm{w}} \bm{s} = \bm{w} \circ \bm{s}$, where $\circ$ is the Hadamard product, we obtain
$$
\bm{M} = \lambda_1(\bm{S})\, (\bm{w} \circ \bm{s})(\bm{w} \circ \bm{s})^\top + \bm{D}_{\bm{w}} \bm{R} \bm{D}_{\bm{w}}.
$$
Hence the leading eigenvector of $\bm{M}$ is
$$
\bm{v} := \bm{w} \circ \bm{s} \quad (\text{up to sign}),
$$
with $$\lambda_1(\bm{M}) = \lambda_1(\bm{S}),\quad
\lambda_1(\bm{M}) - \lambda_2(\bm{M}) = \lambda_1(\bm{S}) - \lambda_2(\bm{S}).
$$
{\color{black}By Assumption~\ref{asmp:overlap}(i)}, this gap satisfies $\lambda_1(\bm{M}) - \lambda_2(\bm{M}) \gtrsim \lambda_1 \alpha^2 m$, and the leading eigenvector $\bm{v} = \bm{w} \circ \bm{s}$ satisfies $\operatorname{sgn}(\bm{v}) = \bm{w}$ with $\min_i |v_i| = \min_i s_i \gtrsim 1/\sqrt{m}$ {\color{black}by Assumption~\ref{asmp:overlap}(ii)}.

We next bound the noise matrix $\bm{E} = \bm{A} - \bm{M}$, whose entries decompose as $E_{i,j} = (E_1)_{i,j} + (E_2)_{i,j} + (E_3)_{i,j}$, where
\begin{align*}
(E_1)_{i,j} &= \gamma w^{(i)} \langle \bm{x}_{\mathcal{S}_{i,j}}, (\bm{\delta}_j)_{\langle\mathcal{S}_{i,j}\rangle} \rangle, \\
(E_2)_{i,j} &= \gamma w^{(j)} \langle (\bm{\delta}_i)_{\langle\mathcal{S}_{i,j}\rangle}, \bm{x}_{\mathcal{S}_{i,j}} \rangle, \\
(E_3)_{i,j} &= \gamma^2 \langle (\bm{\delta}_i)_{\langle\mathcal{S}_{i,j}\rangle}, (\bm{\delta}_j)_{\langle\mathcal{S}_{i,j}\rangle} \rangle.
\end{align*}
Let $\bm{E}_k = ((E_k)_{i,j})$ for $k=1,2,3$. The entries $(E_1)_{i,j}$ and $(E_2)_{i,j}$ are centered Gaussian with variance $\gamma^2\|\bm{x}_{\mathcal{S}_{i,j}}\|^2$, while $(E_3)_{i,j}$ is a sum of $|\mathcal{S}_{i,j}|$ independent products of standard normals, which is sub-exponential with variance $\gamma^4|\mathcal{S}_{i,j}|$. {\color{black}Under the assumptions $\max_{i,j}\|\bm{x}_{\mathcal{S}_{i,j}}\|^2 \lesssim \lambda_1\alpha^2$ and $\max_{i,j}|\mathcal{S}_{i,j}| \lesssim \alpha^2 N$}, Corollary~4.4.8 in \cite{vershynin2018high} gives
$$
\|\bm{E}_1 + \bm{E}_2\| \lesssim \gamma \lambda_1^{1/2} \alpha (m^{1/2} + t)
$$
with probability at least $1 - 4\exp(-t^2)$; taking $t = C\log^{1/2} N$ for large enough $C>0$ yields
$$
\|\bm{E}_1 + \bm{E}_2\| \lesssim \gamma \lambda_1^{1/2} \alpha (m^{1/2} + \log^{1/2} N)
$$
with high probability. For the matrix $\bm{E}_3$ with sub-exponential entries, Theorem~1.1 in \cite{dai2024tail} gives
$$
\|\bm{E}_3\| \lesssim \gamma^2 (\alpha^2 N)^{1/2} (m^{1/2} + t) = \gamma^2 \alpha N^{1/2} (m^{1/2} + t)
$$
with probability at least $1 - O(\exp(-c\min\{t^2, t\}))$ for some constant $c>0$; taking $t = C\log N$ for large enough $C>0$ yields
$$
\|\bm{E}_3\| \lesssim \gamma^2 \alpha N^{1/2} (m^{1/2} + \log N)
$$
with high probability. Combining these,
$$
\|\bm{E}\| \lesssim \gamma \lambda_1^{1/2} \alpha (m^{1/2} + \log^{1/2} N) + \gamma^2 \alpha N^{1/2} (m^{1/2} + \log N)
$$
with high probability.

By Wedin's $\sin\Theta$ theorem, using $\lambda_1(\bm{M}) - \lambda_2(\bm{M}) \gtrsim \lambda_1 \alpha^2 m$, we have
$$
|\sin \Theta(\hat{\bm{v}}, \bm{v})|
\leq \frac{\|\bm{E}\|}{\lambda_1(\bm{M}) - \lambda_2(\bm{M})}
\lesssim \frac{\gamma}{\lambda_1^{1/2} \alpha m^{1/2}}
+ \frac{\gamma \log^{1/2} N}{\lambda_1^{1/2} \alpha m}
+ \frac{\gamma^2 N^{1/2}}{\lambda_1 \alpha m^{1/2}}
+ \frac{\gamma^2 N^{1/2} \log N}{\lambda_1 \alpha m}
$$
with high probability. Consequently,
$$
\min\{\|\hat{\bm{v}} - \bm{v}\|, \|\hat{\bm{v}} + \bm{v}\|\} \leq 2|\sin \Theta(\hat{\bm{v}}, \bm{v})|,
$$
and without loss of generality we take $\|\hat{\bm{v}} - \bm{v}\|$ to be the minimum.

Finally, we analyze sign recovery. Recall $\hat{\bm{w}} = \operatorname{sgn}(\hat{\bm{v}})$ and $\operatorname{sgn}(\bm{v}) = \bm{w}$. A coordinate $i$ is misclassified, i.e., $\hat{w}^{(i)} \neq w^{(i)}$, only if $|\hat{v}_i - v_i| \geq |v_i| = s_i$. Therefore,
$$
\#\{i : \hat{w}^{(i)} \neq w^{(i)}\}
\leq \frac{\|\hat{\bm{v}} - \bm{v}\|^2}{\min_{i} s_i^2}
\lesssim m\, \|\hat{\bm{v}} - \bm{v}\|^2,
$$
where the last step uses {\color{black}$\min_i s_i \gtrsim 1/\sqrt{m}$ from Assumption~\ref{asmp:overlap}(ii)}. Hence $\hat{\bm{w}} = \bm{w}$ provided $m\,\|\hat{\bm{v}} - \bm{v}\|^2 = o(1)$, which holds with high probability under
$$
\frac{\gamma}{\lambda_1^{1/2} \alpha}
+ \frac{\gamma \log^{1/2} N}{\lambda_1^{1/2} \alpha m^{1/2}}
+ \frac{\gamma^2 N^{1/2}}{\lambda_1 \alpha}
+ \frac{\gamma^2 N^{1/2} \log N}{\lambda_1 \alpha m^{1/2}} = o(1).
$$
{\color{black}Substituting $\gamma = \sigma\sqrt{\frac{\lambda_1}{\alpha\lambda_1^2 - \sigma^2 N}}$ from Theorem~\ref{thm:xi}, together with $\rho_1 = \sqrt{\frac{\alpha\lambda_1^2}{\sigma^2 N}} \to \rho_1^* > 1$}, this simplifies to
$$
\alpha^2 N = \omega(1)
\quad \text{and} \quad
m \alpha^2 N = \omega(\log^2 N).
$$

\subsection{Proof of Theorem~\ref{thm:cmmi_error_d2}}

Under model~\eqref{eq:simple_d2}, $\hat{\bm{X}}^{(i)} = \bm{X}_{\mathcal{U}_i}\bm{W}^{(i)\top} + \bm{\Delta}^{(i)}\bm{\Gamma}$, so the pairwise alignment $\hat{\bm{W}}^{(i,j)}$ estimates $\bm{W}^{(i)}\bm{W}^{(j)\top}$. Note that
$$
	\bm{W}^{(i)\top}\hat{\bm{W}}^{(i,j)}\bm{W}^{(j)} = \argmin_{\bm{O} \in \mathcal{O}_d} 
	\left\| \hat{\bm{X}}^{(i)}_{\langle\mathcal{S}_{i,j}\rangle} \bm{W}^{(i)}\bm{O} - \hat{\bm{X}}^{(j)}_{\langle\mathcal{S}_{i,j}\rangle} \bm{W}^{(j)}\right\|_F,
$$
and denote
$$
\bm{F}:=\bm{W}^{(i)\top}\hat{\bm{X}}^{(i)\top}_{\langle\mathcal{S}_{i,j}\rangle}\hat{\bm{X}}^{(j)}_{\langle\mathcal{S}_{i,j}\rangle}\bm{W}^{(j)}
-\bm{X}_{\mathcal{S}_{i,j}}^\top \bm{X}_{\mathcal{S}_{i,j}}.
$$
By perturbation bounds for polar decompositions, we have
\begin{equation}\label{eq:CMMI_d=2_W-WW}
\|\hat{\bm{W}}^{(i,j)}-\bm{W}^{(i)}\bm{W}^{(j)\top}\|
=\|\bm{W}^{(i)\top}\hat{\bm{W}}^{(i,j)}\bm{W}^{(j)}-\bm{I}\|
\leq \frac{2\|\bm{F}\|}{\sigma_{\min}(\bm{X}_{\mathcal{S}_{i,j}}^\top \bm{X}_{\mathcal{S}_{i,j}})},
\end{equation}
provided $\|\bm{F}\|<\sigma_{\min}(\bm{X}_{\mathcal{S}_{i,j}}^\top \bm{X}_{\mathcal{S}_{i,j}})$. {\color{black}Since the eigenvalues of $\bm{U}_{\mathcal{S}_{i,j}}^\top \bm{U}_{\mathcal{S}_{i,j}}$ are of order $\alpha^2$}, we have $\bm{X}_{\mathcal{S}_{i,j}}^\top \bm{X}_{\mathcal{S}_{i,j}} = \bm{\Lambda}^{1/2}\bm{U}_{\mathcal{S}_{i,j}}^\top \bm{U}_{\mathcal{S}_{i,j}}\bm{\Lambda}^{1/2}$ with $\sigma_{\min}(\bm{X}_{\mathcal{S}_{i,j}}^\top \bm{X}_{\mathcal{S}_{i,j}}) \asymp \alpha^2\lambda_2 \asymp \alpha^2\lambda$, where $\lambda:=\lambda_1\asymp \lambda_2$.

We now bound $\|\bm{F}\|$. Substituting $\hat{\bm{X}}^{(i)} = \bm{X}_{\mathcal{U}_i}\bm{W}^{(i)\top}+\bm{\Delta}^{(i)}\bm{\Gamma}$, we have
$$
\begin{aligned}
	\bm{F}
	&=\bm{W}^{(i)\top}
(\bm{X}_{\mathcal{U}_i} \bm{W}^{(i)\top}+ \bm{\Delta}^{(i)} \bm{\Gamma})^\top_{\langle\mathcal{S}_{i,j}\rangle}
(\bm{X}_{\mathcal{U}_i} \bm{W}^{(j)\top}+ \bm{\Delta}^{(j)} \bm{\Gamma})_{\langle\mathcal{S}_{i,j}\rangle}
\bm{W}^{(j)}
-\bm{X}_{\mathcal{S}_{i,j}}^\top \bm{X}_{\mathcal{S}_{i,j}}\\
&=\underbrace{\bm{X}_{\mathcal{S}_{i,j}}^\top
\bm{\Delta}^{(j)}_{\langle\mathcal{S}_{i,j}\rangle}
 \bm{\Gamma}
\bm{W}^{(j)}}_{\bm{F}_1}
+ \underbrace{\bm{W}^{(i)\top}\bm{\Gamma}\bm{\Delta}^{(i)\top} _{\langle\mathcal{S}_{i,j}\rangle}\bm{X}_{\mathcal{S}_{i,j}}}_{\bm{F}_2}
+\underbrace{\bm{W}^{(i)\top}\bm{\Gamma}\bm{\Delta}^{(i)\top} _{\langle\mathcal{S}_{i,j}\rangle} 
\bm{\Delta}^{(j)}_{\langle\mathcal{S}_{i,j}\rangle}
 \bm{\Gamma}
\bm{W}^{(j)}}_{\bm{F}_3}.
\end{aligned}
$$
We bound each term separately.

For $\bm{F}_1$, since $\bm{X} = \bm{U}\bm{\Lambda}^{1/2}$, we have $\bm{F}_1=\bm{\Lambda}^{1/2} \bm{U}_{\mathcal{S}_{i,j}}^\top \bm{\Delta}^{(j)}_{\langle\mathcal{S}_{i,j}\rangle} \bm{\Gamma}\bm{W}^{(j)}$. Consider
$$
\bm{G} := \bm{U}_{\mathcal{S}_{i,j}}^\top \bm{\Delta}^{(j)}_{\langle\mathcal{S}_{i,j}\rangle}
=\sum_{s\in\mathcal{S}_{i,j}}\bm{u}_{s} \bm{\delta}^{(j)\top}_{s}
=:\sum_{s\in\mathcal{S}_{i,j}}\bm{M}_s,
$$
where $\bm{u}_s$ and $\bm{\delta}^{(j)}_s$ are the $s$th rows of $\bm{U}$ and $\bm{\Delta}^{(j)}$, respectively. Recall that $\bm{\Delta}^{(j)}$ has i.i.d. $\mathcal{N}(0,1)$ entries.
The $\{\bm{M}_s\}$ are independent mean-zero random matrices. We have
$$
\begin{aligned}
	\Big\|\sum_{s\in\mathcal{S}_{i,j}}\bm{M}_s^\top\bm{M}_s\Big\|
	&=\Big\|\sum_{s\in\mathcal{S}_{i,j}}\bm{\delta}^{(j)}_{s}\bm{u}_{s}^\top \bm{u}_{s} \bm{\delta}^{(j)\top}_{s}\Big\|=	\Big\|\sum_{s\in\mathcal{S}_{i,j}}\|\bm{u}_{s}\|^2 \bm{\delta}^{(j)}_{s}\bm{\delta}^{(j)\top}_{s}\Big\|\\
	&\lesssim N^{-1}	\Big\|\sum_{s\in\mathcal{S}_{i,j}}\bm{\delta}^{(j)}_{s}\bm{\delta}^{(j)\top}_{s}\Big\|
	\\&\lesssim N^{-1} \cdot \alpha^2 N
	= \alpha^2
\end{aligned}
$$
with high probability, where the first inequality uses {\color{black}$\|\bm{U}\|_{2\to\infty}\lesssim d^{1/2}N^{-1/2}$} and the second inequality uses the property for Wishart matrices with {\color{black}$|\mathcal{S}_{i,j}|\asymp\alpha^2 N$}. We also have
$$
\begin{aligned}
	\Big\|\sum_{s\in\mathcal{S}_{i,j}}\bm{M}_s\bm{M}_s^\top\Big\|
	&=\Big\|\sum_{s\in\mathcal{S}_{i,j}}\bm{u}_{s} \bm{\delta}^{(j)\top}_{s}\bm{\delta}^{(j)}_{s}\bm{u}_{s}^\top \Big\|
	=\Big\|\sum_{s\in\mathcal{S}_{i,j}}\|\bm{\delta}^{(j)}_{s}\|^2 \bm{u}_{s}\bm{u}_{s}^\top\Big\|\\
	&\leq \max_{s\in\mathcal{S}_{i,j}}\|\bm{\delta}^{(j)}_{s}\|^2 \cdot \left\|\sum_{s\in\mathcal{S}_{i,j}}\bm{u}_{s}\bm{u}_{s}^\top \right\|\\
	&\lesssim \log N\Big\|\sum_{s\in\mathcal{S}_{i,j}}\bm{u}_{s}\bm{u}_{s}^\top \Big\|
	= \log N\Big\|\muu^\top_{\mathcal{S}_{i,j}}\muu_{\mathcal{S}_{i,j}}\Big\|
	\\&\lesssim \alpha^2 \log N
\end{aligned}
$$
with high probability, where the second inequality uses that $\|\bm{\delta}^{(j)}_{s}\|^2 \sim \chi^2_d$ has $\|\bm{\delta}^{(j)}_{s}\|^2\lesssim d+\log N$ with high probability, and the last inequality uses {\color{black}$\|\bm{U}_{\mathcal{S}_{i,j}}^\top\bm{U}_{\mathcal{S}_{i,j}}\|\asymp\alpha^2$}. By Matrix Hoeﬀding for rectangular matrices, we have
$$
\|\bm{G}\|\lesssim \alpha\log N
$$
with high probability. Note that {\color{black}under the condition $\rho_k=\sqrt{\frac{\alpha\lambda_k^2}{\sigma^2 N}}\to\rho_k^*>1$}, we have  $\lambda\asymp\frac{\sigma N^{1/2}}{\alpha^{1/2}}$ and $\gamma := \max_k\gamma_k \asymp \frac{\sigma^{1/2}}{\alpha^{1/4}N^{1/4}}$.
Therefore
$$
\begin{aligned}
	\|\bm{F}_1\|
	\lesssim \|\bm{\Lambda}\|^{1/2}\cdot \|\mg\| \cdot \|\bm{\Gamma}\|\lesssim \lambda^{1/2} \cdot \alpha\log N \cdot \gamma
	\lesssim \frac{\sigma^{1/2} N^{1/4}}{\alpha^{1/4}}
	\cdot \alpha\log N \cdot \frac{\sigma^{1/2}}{\alpha^{1/4}N^{1/4}}\lesssim \sigma \alpha^{1/2}\log N
\end{aligned}
$$
with high probability. By a symmetric argument, $\|\bm{F}_2\|\lesssim \sigma \alpha^{1/2}\log N$ with high probability.

For $\bm{F}_3$, the entries of $\bm{\Delta}^{(i)\top}_{\langle\mathcal{S}_{i,j}\rangle}\bm{\Delta}^{(j)}_{\langle\mathcal{S}_{i,j}\rangle}\in\mathbb{R}^{2\times 2}$ are sums of $|\mathcal{S}_{i,j}|\asymp\alpha^2 N$ independent mean-zero sub-exponential variables with $\psi_1$-norm $\lesssim 1$. By Theorem~B in \cite{pinelis2022improved}, {\color{black}provided $\alpha^2N=\Omega(\log N)$}, we have
$$
\big|[\bm{\Delta}^{(i)\top}_{\langle\mathcal{S}_{i,j}\rangle}\bm{\Delta}^{(j)}_{\langle\mathcal{S}_{i,j}\rangle}]_{k,\ell}\big|
\lesssim (\alpha^2N\cdot 1^2)^{1/2}\log^{1/2} N + \log N
\lesssim \alpha N^{1/2} \log^{1/2} N
$$
with high probability, so $\|\bm{\Delta}^{(i)\top}_{\langle\mathcal{S}_{i,j}\rangle}\bm{\Delta}^{(j)}_{\langle\mathcal{S}_{i,j}\rangle}\|\lesssim \alpha N^{1/2} \log^{1/2} N$. Hence
$$
\|\bm{F}_3\| \lesssim \gamma^2 \cdot \alpha N^{1/2} \log^{1/2} N
\lesssim \frac{\sigma}{\alpha^{1/2}N^{1/2}} \cdot \alpha N^{1/2} \log^{1/2} N
= \sigma \alpha^{1/2} \log^{1/2} N
$$
with high probability. Combining the bounds for $\bm{F}_1, \bm{F}_2, \bm{F}_3$, we have
\begin{equation}\label{eq:CMMI_d=2_F}
	\|\bm{F}\|\lesssim \sigma\alpha^{1/2}\log N
\end{equation}
with high probability.

Thus {\color{black}when $\alpha^2 N=\omega(\log^2 N)$}, we have $\|\bm{F}\| = o(\sigma_{\min}(\bm{X}_{\mathcal{S}_{i,j}}^\top \bm{X}_{\mathcal{S}_{i,j}}))$, so the condition of \eqref{eq:CMMI_d=2_W-WW} holds. Combining \eqref{eq:CMMI_d=2_W-WW}, \eqref{eq:CMMI_d=2_F}, and $\sigma_{\min}(\bm{X}_{\mathcal{S}_{i,j}}^\top \bm{X}_{\mathcal{S}_{i,j}})\asymp\alpha^2\lambda$, we have
$$
\|\hat{\bm{W}}^{(i,j)}-\bm{W}^{(i)}\bm{W}^{(j)\top}\|
\lesssim \frac{\sigma\alpha^{1/2}\log N}{\alpha^2\lambda}
\lesssim \frac{\sigma\alpha^{1/2}\log N}{\alpha^2 \cdot \frac{\sigma N^{1/2}}{\alpha^{1/2}}}
= \frac{\log N}{\alpha N^{1/2}}
$$
with high probability.

Finally, in CMMI, after constructing a spanning tree with root at submatrix $1$ (which amounts to fixing $\hat{\bm{W}}^{(1)} = \bm{I}$), the alignment matrix $\hat{\bm{W}}^{(i)}$ for each submatrix $i$ is computed by composing the pairwise alignment matrices along the unique path connecting the root to node $i$. Since each pairwise estimate $\hat{\bm{W}}^{(i,j)}$ approximates $\bm{W}^{(i)}\bm{W}^{(j)\top}$, composing along the path from the root telescopes to $\hat{\bm{W}}^{(i)} \approx \bm{W}^{(i)}\bm{W}^{(1)\top} = \bm{W}^{(i)}$, where we fix the reference coordinate system by setting $\bm{W}^{(1)} = \bm{I}$.
If this path has length $D_i$, then by the analysis of Lemma~D.3 from \citet{zheng2026chain}, the accumulated error satisfies
$$
\|\hat{\bm{W}}^{(i)} - \bm{W}^{(i)}\|_F \lesssim D_i \cdot \frac{\log N}{\alpha N^{1/2}}
$$
with high probability.

\subsection{Proof of Theorem~\ref{thm:gsmmi_error_d2}}

We follow the strategy of the proof of Theorem~\ref{thm:gsmmi_error}, adapted to the complex model~\eqref{eq:complex_model}. Throughout, write $\lambda := \lambda_1+\lambda_2 = \|\bm{z}\|^2$ and $\gamma^2 := \gamma_1^2 + \gamma_2^2$, so that $\lambda_1\asymp\lambda_2\asymp\lambda$ and $\gamma_1\asymp\gamma_2\asymp\gamma$. The condition $\rho_k=\sqrt{\frac{\alpha\lambda_k^2}{\sigma^2 N}}\to\rho_k^*>1$ gives $\lambda\asymp\frac{\sigma N^{1/2}}{\alpha^{1/2}}$ and $\gamma := \max_k\gamma_k \asymp \frac{\sigma^{1/2}}{\alpha^{1/4} N^{1/4}}$.

We decompose $\bm{A} = \bm{M} + \bm{E}$, where the signal matrix $\bm{M}$ has entries
$$
M_{i,j} = \|\bm{z}_{\mathcal{S}_{i,j}}\|^2\, w^{(i)} \overline{w^{(j)}} \quad \text{ for }i\neq j, \qquad M_{i,i} = 0.
$$
Indeed, substituting $\hat{\bm{z}}^{(i)}=\overline{w^{(i)}}\bm{z}_{\mathcal{U}_i}+\bm{\epsilon}_i$ into $A_{i,j} = (\hat{\bm{z}}^{(i)}_{\langle\mathcal{S}_{i,j}\rangle})^\dagger \hat{\bm{z}}^{(j)}_{\langle\mathcal{S}_{i,j}\rangle}$ and keeping the signal-signal term gives $w^{(i)}\overline{w^{(j)}}\,\|\bm{z}_{\mathcal{S}_{i,j}}\|^2$. With the overlap matrix $\bm{S}$ from \eqref{eq:S_d2} and the phase matrix $\bm{D}_{\bm{w}} := \operatorname{diag}(w^{(1)}, \dots, w^{(m)})$, the signal matrix factorizes as
$$
\bm{M} = \bm{D}_{\bm{w}} \bm{S} \bm{D}_{\bm{w}}^\dagger.
$$
As in the $d=1$ case, $\bm{S}$ is real, nonnegative, and symmetric, so again by the Perron--Frobenius theorem its leading eigenvalue $\lambda_1(\bm{S})>0$ is simple with entrywise-positive eigenvector $\bm{s}$. Let $\bm{w} := (w^{(1)},\dots,w^{(m)})^\top$ collect the true phases, so that $\bm{D}_{\bm{w}} = \operatorname{diag}(\bm{w})$ and $\bm{D}_{\bm{w}}\bm{s} = \bm{w}\circ\bm{s}$. Writing $\bm{S} = \lambda_1(\bm{S})\, \bm{s}\bm{s}^\top + \bm{R}$ with $\|\bm{R}\| = |\lambda_2(\bm{S})|$, the leading eigenvector of $\bm{M} = \bm{D}_{\bm{w}}\bm{S}\bm{D}_{\bm{w}}^\dagger$ is
$$
\bm{v} := \bm{w} \circ \bm{s} \quad (\text{up to a global phase}),
$$
with $\lambda_1(\bm{M}) = \lambda_1(\bm{S})$ and $\lambda_1(\bm{M}) - \lambda_2(\bm{M}) = \lambda_1(\bm{S}) - \lambda_2(\bm{S})$. {\color{black}By Assumption~\ref{asmp:overlap}(i)}, $\lambda_1(\bm{M}) - \lambda_2(\bm{M}) \gtrsim \lambda\alpha^2 m$. For each $i\in[m]$, since $\bm{s}$ is entrywise positive and $|w^{(i)}|=1$, we have $|\bm{v}_i| = s_i$ and $\bm{v}_i/|\bm{v}_i| = w^{(i)}$, and by {\color{black}Assumption~\ref{asmp:overlap}(ii)}, we further have $|\bm{v}_i| = s_i \gtrsim 1/\sqrt{m}$.

For the noise matrix $\bm{E} = \bm{A} - \bm{M}$, we decompose $E_{i,j} = (E_1)_{i,j} + (E_2)_{i,j} + (E_3)_{i,j}$ as in the $d=1$ case. {\color{black}Using $\max_{i,j}\|\bm{z}_{\mathcal{S}_{i,j}}\|^2 \lesssim \lambda\alpha^2$ and $\max_{i,j}|\mathcal{S}_{i,j}| \lesssim \alpha^2 N$}, the same argument as for $d=1$ yields
$$
\|\bm{E}\| \lesssim \gamma \lambda^{1/2} \alpha (m^{1/2} + \log^{1/2} N) + \gamma^2 \alpha N^{1/2} (m^{1/2} + \log N)
$$
with high probability. By Wedin's $\sin\Theta$ theorem with $\lambda_1(\bm{M}) - \lambda_2(\bm{M})\gtrsim \lambda\alpha^2 m$, and substituting $\gamma\asymp\frac{\sigma^{1/2}}{\alpha^{1/4}N^{1/4}}$ and $\lambda\asymp\frac{\sigma N^{1/2}}{\alpha^{1/2}}$, we have
$$
\|\hat{\bm{v}} - \bm{v}\| \leq \sqrt{2}\, |\sin \Theta(\hat{\bm{v}}, \bm{v})|
\lesssim \frac{1}{\alpha N^{1/2} m^{1/2}} + \frac{\log N}{\alpha N^{1/2} m}
$$
with high probability, where the global phase ambiguity is fixed by setting $\bm{W}^{(1)} = \hat{\bm{W}}^{(1)} = \bm{I}$.

We first show that for each $i\in[m]$, $\hat{\bm{v}}_i$ does not vanish. By the reverse triangle inequality,
$$
\bigl|\,|\hat{\bm{v}}_i| - |\bm{v}_i|\,\bigr| \leq \|\hat{\bm{v}} - \bm{v}\|
\lesssim \frac{1}{\alpha N^{1/2} m^{1/2}} + \frac{\log N}{\alpha N^{1/2} m}
$$
with high probability. 
Recall $|\bm{v}_i| \gtrsim 1/\sqrt{m}$. Thus {\color{black}when $\alpha^2 N = \omega(1)$ and $m\alpha^2 N = \omega(\log^2 N)$}, the right-hand side is $o(1/\sqrt{m})$, and
$$
|\hat{\bm{v}}_i| \geq |\bm{v}_i| - o(1/\sqrt{m}) \gtrsim 1/\sqrt{m}
$$
with high probability. Since $\hat{w}^{(i)} = \hat{\bm{v}}_i/|\hat{\bm{v}}_i|$ estimates $w^{(i)}=\bm{v}_i/|\bm{v}_i|$, with $|\hat{\bm{v}}_i|, |\bm{v}_i| \gtrsim 1/\sqrt{m}$,
$$
|w^{(i)} - \hat{w}^{(i)}|
\lesssim \frac{|\hat{\bm{v}}_i - \bm{v}_i|}{\min\{|\hat{\bm{v}}_i|, |\bm{v}_i|\}}
\lesssim \sqrt{m}\,\|\hat{\bm{v}} - \bm{v}\|
\lesssim \frac{1}{\alpha N^{1/2}} + \frac{\log N}{\alpha N^{1/2} m^{1/2}}
$$
with high probability. Since $\hat{\bm{W}}^{(i)}\bm{c} = \hat{w}^{(i)}\bm{c}$ and $\bm{W}^{(i)}\bm{c} = w^{(i)}\bm{c}$, we have $\|\hat{\bm{W}}^{(i)} - \bm{W}^{(i)}\|_F = \sqrt{2}\,|\hat{w}^{(i)} - w^{(i)}|$, so
$$
\|\hat{\bm{W}}^{(i)} - \bm{W}^{(i)}\|_F
\lesssim \frac{1}{\alpha N^{1/2}} + \frac{\log N}{\alpha N^{1/2} m^{1/2}}
$$
with high probability.

\subsection{Auxiliary lemmas}

\begin{lemma}\label{lemma:lambdai phase transition}
	Consider the setting of Theorem~\ref{thm:xi}. Then for any $k\in[d]$, we have
	\begin{itemize}
		\item If $(\rho_k^{(i)})^*\leq 1$, then
		$$
		\begin{aligned}
			&\hat\lambda^{(i)}_k\to 2\sqrt{\sigma_i^2 n_i},\\
			&\langle \hat{\bm{u}}^{(i)}_k, {\bm{u}}^{(i)}_k \rangle^2 \to 0.
		\end{aligned}
		$$
		\item If $(\rho_k^{(i)})^*> 1$, then
		$$
		\begin{aligned}
			&\hat\lambda^{(i)}_k\to \lambda^{(i)}_k+\frac{\sigma_i^2 n_i}{\lambda^{(i)}_k}>2\sqrt{\sigma_i^2 n_i},\\
			&\langle \hat{\bm{u}}^{(i)}_k, {\bm{u}}^{(i)}_k \rangle^2 \to 1-\frac{\sigma_i^2 n_i}{(\lambda^{(i)}_k)^2}>0.
		\end{aligned}
		$$
	\end{itemize}
\end{lemma}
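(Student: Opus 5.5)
This is the BBP phase transition for a finite-rank additive perturbation of a Wigner matrix, and I would prove it by the standard resolvent/secular-equation argument. Rescale by $\sqrt{n_i}$: set $\widetilde{\bm A}=\ma^{(i)}/\sqrt{n_i}$, $\widetilde{\bm N}=\mn^{(i)}/\sqrt{n_i}$, and $\theta_k=\lambda_k^{(i)}/\sqrt{n_i}$. Then $\widetilde{\bm N}$ is a GOE-type Wigner matrix with entry variance $\sigma_i^2/n_i$ (diagonal $2\sigma_i^2/n_i$), the perturbation is $\sum_{k\le d}\theta_k\bm u_k^{(i)}\bm u_k^{(i)\top}$, and the threshold $\rho_k^{(i)}=1$ becomes $\theta_k=\sigma_i$. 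Three standard inputs are needed:
\begin{itemize}
\item the semicircle law for $\widetilde{\bm N}$ on $[-2\sigma_i,2\sigma_i]$;
\item almost sure convergence of $\|\widetilde{\bm N}\|$ to $2\sigma_i$ (no outliers);
\item the isotropic local law, which gives
\[
\bm u^\top(z-\widetilde{\bm N})^{-1}\bm v \to g(z)\,\bm u^\top\bm v
\]
uniformly on compacts away from $[-2\sigma_i,2\sigma_i]$, for deterministic unit vectors $\bm u,\bm v$. Here $g$ is the Stieltjes transform of the semicircle law, and it satisfies $\sigma_i^2 g^2-zg+1=0$.
\end{itemize}
The eigenvectors $\bm u_k^{(i)}$ are deterministic, since $\mpp$ and $\mathcal U_i$ are fixed, so the isotropic law applies to them.

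\textbf{Eigenvalues.} For $z$ outside the spectrum of $\widetilde{\bm N}$, $z$ is an eigenvalue of $\widetilde{\bm A}$ if and only if
\[
\det\bigl(\bm I_d-\bm\Theta\,\muu^{(i)\top}(z-\widetilde{\bm N})^{-1}\muu^{(i)}\bigr)=0 .
\]
By the isotropic law this $d\times d$ matrix converges to $\bm I_d-g(z)\bm\Theta$, whose zeros are at $g(z)=1/\theta_k$. Solving the quadratic for $g$ gives $z=\theta_k+\sigma_i^2/\theta_k$, and this solution is admissible exactly when $\theta_k>\sigma_i$, because $g$ maps $(2\sigma_i,\infty)$ onto $(0,1/\sigma_i)$.

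The next step is to turn convergence of these analytic functions into convergence of their zeros, via Rouché/Hurwitz on small circles around each limit. This shows that each supercritical $k$ produces exactly one outlier near $\theta_k+\sigma_i^2/\theta_k>2\sigma_i$. Multiplying by $\sqrt{n_i}$ gives $\hat\lambda_k^{(i)}\approx\lambda_k^{(i)}+\sigma_i^2n_i/\lambda_k^{(i)}$. For subcritical $k$ there is no outlier near that location. Weyl interlacing, together with the fact that the rank-$d$ perturbation cannot push more than $d$ eigenvalues out, then pins $\hat\lambda_k^{(i)}/\sqrt{n_i}\to2\sigma_i$.

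\textbf{Eigenvector overlaps, supercritical case.} Let $\hat\mu_k$ be the outlier eigenvalue of $\widetilde{\bm A}$. The eigenvector equation gives
\[
\hat{\bm u}_k=(\hat\mu_k-\widetilde{\bm N})^{-1}\muu^{(i)}\bm\Theta\muu^{(i)\top}\hat{\bm u}_k .
\]
Normalizing $\|\hat{\bm u}_k\|=1$ requires the quadratic forms $\muu^{(i)\top}(\hat\mu_k-\widetilde{\bm N})^{-2}\muu^{(i)}$, and these converge to $g'(\mu_k)\bm I_d$ by differentiating the isotropic law. The distinct-eigenvalue assumption makes the kernel of $\bm I-g\bm\Theta$ one-dimensional, spanned by $\bm e_k$, so $\hat{\bm u}_k$ aligns with $\bm u_k^{(i)}$ only. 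This yields
\[
\langle\hat{\bm u}_k,\bm u_k^{(i)}\rangle^2\to\frac{1}{\theta_k^2\,(-g'(\mu_k))}.
\]
Differentiating the quadratic for $g$ and substituting $g=1/\theta_k$ gives $-g'=1/(\theta_k^2-\sigma_i^2)$, so the limit equals $1-\sigma_i^2/\theta_k^2=1-\sigma_i^2n_i/(\lambda_k^{(i)})^2$.

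\textbf{Eigenvector overlaps, subcritical case.} Here I would use the standard argument that eigenvalues sticking to the spectral edge have vanishing projection onto $\bm u_k^{(i)}$. The mechanism is that $-g'$ blows up at the edge $2\sigma_i$ (at $\theta_k=\sigma_i$ exactly it diverges, and below threshold one uses a spectral-measure argument, as in Benaych-Georges and Nadakuditi). Concretely, I would bound the projection by the mass that the spectral measure of $\widetilde{\bm A}$ along $\bm u_k^{(i)}$ places near the edge, which tends to zero.

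\textbf{Main obstacle.} The hardest step is making the subcritical eigenvector statement and the critical case $\theta_k=\sigma_i$ rigorous. The cleanest route is to cite Benaych-Georges and Nadakuditi (2011) directly, since their Theorems 2.1 and 2.3 cover exactly this setting: a GOE matrix plus a finite-rank deterministic perturbation, with $\rho^*$ converging to a constant. I would reduce to that reference rather than reprove the local laws.

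The remaining issue is the drifting parameter: $\theta_k$ depends on $n_i$, while the cited theorems are stated for a fixed perturbation. I would handle this by continuity, noting that the limits are continuous in $\theta$ and that the perturbation changes by $o(1)$ in operator norm.
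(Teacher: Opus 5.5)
Your proposal is correct and follows the same route as the paper, whose entire proof is a citation of Section~3.1 of Benaych-Georges and Nadakuditi (2011). Your secular-equation computation (outliers at $g(z)=1/\theta_k$, squared overlap $1/(\theta_k^2(-g'))$ with $-g'=1/(\theta_k^2-\sigma_i^2)$) is the mechanism behind that reference, and your handling of the $n$-dependence of $\theta_k=\lambda_k^{(i)}/\sqrt{n_i}$ covers a point the paper passes over. One caution on that step: for the subcritical eigenvector, route the continuity argument through your edge spectral-measure bound rather than eigenvector perturbation, because there is no spectral gap at $2\sigma_i$.
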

\begin{proof}
	This lemma is according to Section~3.1 in \cite{benaych2011eigenvalues}.
\end{proof}

\begin{lemma}\label{lemma:ui phase transition}
	Consider the setting of Theorem~\ref{thm:xi}. For any $k\in[d]$, let
		$$
\zeta_k^{(i)}
  :=
  \begin{cases}
    1 - \frac{\sigma_i^2 n_i}{(\lambda_k^{(i)})^2}, &\text{if }(\rho_k^{(i)})^*> 1,\\[6pt]
    0, & \text{if }(\rho_k^{(i)})^*\leq 1.
  \end{cases}
$$
Then for any finite ordered set of indices $\mathcal{I}\subset[n]$, we have
$$
\frac{\sqrt{n_i}\left[\hat{\bm{u}}_k^{(i)}-\sqrt{\zeta_k^{(i)}}\bm{u}_k^{(i)}\right]_{\mathcal{I}}}
{\sqrt{1-\zeta_k^{(i)}}}
\xrightarrow{\mathcal{D}} 
  \mathcal{N}(\bm{0},\mi_{|\mathcal{I}|})
$$
with $\hat{\bm{u}}_k^{(i)}$ such that $\hat{\bm{u}}_k^{(i)\top}{\bm{u}}_k^{(i)}\geq 0$ (otherwise, consider $-\hat{\bm{u}}_k^{(i)}$).
\end{lemma}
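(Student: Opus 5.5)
The plan is to exploit the orthogonal invariance of the Gaussian noise model. The noise $\mn^{(i)}$ has off-diagonal entries of variance $\sigma_i^2$ and diagonal entries of variance $2\sigma_i^2$, so it is a scaled GOE matrix. Hence $\mathbf{Q}\mn^{(i)}\mathbf{Q}^\top \overset{d}{=} \mn^{(i)}$ for every deterministic $\mathbf{Q}\in\mathcal{O}_{n_i}$. The key consequence is that the component of $\hat{\bm{u}}_k^{(i)}$ orthogonal to the signal subspace is a uniformly random direction in that complement. Its finitely many coordinates then look Gaussian after scaling by $\sqrt{n_i}$. The parallel components are controlled by Lemma~\ref{lemma:lambdai phase transition}.

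\emph{Step 1 (decomposition).} Let $\mathcal{V}=\operatorname{span}(\muu^{(i)})$, which has dimension $d$. Write
\[
\hat{\bm{u}}_k^{(i)}=\sum_{l=1}^d a_l\,\bm{u}_l^{(i)}+b\,\bm{r},
\qquad \bm{r}\perp\mathcal{V},\ \|\bm{r}\|=1,\ b\ge 0.
\]
Fix the sign of $\hat{\bm{u}}_k^{(i)}$ so that $a_k\ge 0$, as in the statement.

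\emph{Step 2 (conditional uniformity of $\bm{r}$).} Take any orthogonal $\mathbf{Q}$ with $\mathbf{Q}\bm{u}_l^{(i)}=\bm{u}_l^{(i)}$ for all $l$, i.e.\ $\mathbf{Q}$ acts only on $\mathcal{V}^\perp$. Then $\mathbf{Q}\ma^{(i)}\mathbf{Q}^\top\overset{d}{=}\ma^{(i)}$, because $\mpp^{(i)}$ is fixed by this conjugation and $\mn^{(i)}$ is invariant. Since the eigenvalues are distinct, the eigenvector map is equivariant. It follows that $(a_1,\dots,a_d,b,\mathbf{Q}\bm{r})\overset{d}{=}(a_1,\dots,a_d,b,\bm{r})$. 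Averaging over Haar-distributed $\mathbf{Q}$ on $\mathcal{V}^\perp$ shows that $\bm{r}$ is uniform on the unit sphere of $\mathcal{V}^\perp$ and independent of $(a,b)$. So I may realize
\[
\bm{r}=\frac{(\mi-\muu^{(i)}\muu^{(i)\top})\bm{g}}{\|(\mi-\muu^{(i)}\muu^{(i)\top})\bm{g}\|},
\qquad \bm{g}\sim\mathcal{N}(\bm 0,\mi_{n_i}) \text{ independent of } (a,b).
\]
Then $\|(\mi-\muu^{(i)}\muu^{(i)\top})\bm{g}\|/\sqrt{n_i}\to 1$ almost surely, and $[\muu^{(i)}\muu^{(i)\top}\bm{g}]_{\mathcal{I}}$ has entries of size $O_p(\|\muu^{(i)}\|_{2\to\infty})$. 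Consequently $\sqrt{n_i}\,[\bm{r}]_{\mathcal{I}}\xrightarrow{\mathcal{D}}\mathcal{N}(\bm 0,\mi_{|\mathcal{I}|})$.

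\emph{Step 3 (parallel coefficients).} By Lemma~\ref{lemma:lambdai phase transition}, $a_k^2\to\zeta_k^{(i)}$. With $a_k\ge0$ this gives $a_k\to\sqrt{\zeta_k^{(i)}}$. For $l\neq k$ I would invoke the same result of \cite{benaych2011eigenvalues}, which for distinct spikes gives $a_l\to 0$. Together these yield $b^2=1-\sum_l a_l^2\to 1-\zeta_k^{(i)}$.

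\emph{Step 4 (assembly).} Write
\[
\sqrt{n_i}\big[\hat{\bm u}_k^{(i)}-\sqrt{\zeta_k^{(i)}}\bm u_k^{(i)}\big]_{\mathcal I}
=(a_k-\sqrt{\zeta_k^{(i)}})\sqrt{n_i}[\bm u_k^{(i)}]_{\mathcal I}
+\sum_{l\ne k}a_l\sqrt{n_i}[\bm u_l^{(i)}]_{\mathcal I}
+b\sqrt{n_i}[\bm r]_{\mathcal I}.
\]
The first two terms vanish in probability provided $\sqrt{n_i}\,\|\muu^{(i)}\|_{2\to\infty}=O(1)$. The last term converges to $\sqrt{1-\zeta_k^{(i)}}\,\mathcal{N}(\bm 0,\mi_{|\mathcal{I}|})$ by Slutsky's lemma and the independence from Step 2. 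Dividing by $\sqrt{1-\zeta_k^{(i)}}$, which is positive in both regimes, gives the claim. In the subcritical regime $\zeta_k^{(i)}=0$ the same argument applies with $a_k\to 0$, and there the sign convention is immaterial.

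\emph{Main obstacle.} Step 4 needs the population eigenvectors to be delocalized, $\sqrt{n_i}\,\|\muu^{(i)}\|_{2\to\infty}=O(1)$. This is not stated explicitly in the setting of Theorem~\ref{thm:xi}. I would add it as the incoherence condition already used in Remark~\ref{rmk:cond_xi} and Lemma~\ref{lemma:lambdai and lambda}. Without it, a localized $\bm u_l^{(i)}$ could carry a nonvanishing coordinate into $\mathcal{I}$ and the stated limit would fail.

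A secondary check is the claim $a_l\to0$ for $l\neq k$ in Step 3. If the cited source only states the result for $l=k$, I would derive it from the resolvent equation $\hat{\bm u}_k^{(i)}\propto(\hat\lambda_k^{(i)}\mi-\mn^{(i)})^{-1}\muu^{(i)}\mLambda^{(i)}\muu^{(i)\top}\hat{\bm u}_k^{(i)}$, using that the limiting outlier locations $\lambda^{(i)}_l+\sigma_i^2 n_i/\lambda^{(i)}_l$ are distinct.
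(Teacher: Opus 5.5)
Your proposal is correct and follows essentially the same route as the paper: the tangent--normal decomposition of $\hat{\bm u}_k^{(i)}$, the rotational-invariance argument adapted from \cite{lebeau2024asymptotic} showing that the normal component is uniform on the unit sphere of the orthogonal complement of $\operatorname{span}(\muu^{(i)})$, and Lemma~\ref{lemma:lambdai phase transition} (from \cite{benaych2011eigenvalues}) for the parallel coefficients. Your delocalization caveat $\sqrt{n_i}\,\|\muu^{(i)}\|_{2\to\infty}=O(1)$ is well taken, and the paper's final combination step implicitly needs it too: $\|\bm\varepsilon\|\to 0$ alone does not control $\sqrt{n_i}\,[\bm\varepsilon]_{\mathcal I}$, and uniformity on a codimension-$d$ sphere does not give Gaussian coordinates unless $\|\muu^{(i)}\|_{2\to\infty}\to 0$.
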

\begin{proof}
This proof is adapted from the proof of Theorem 2 in \cite{lebeau2024asymptotic}. For ease of exposition, we suppress the superscript $(i)$ throughout this proof.

	Consider the tangent-normal decomposition
	$$
	\hat{\bm{u}}_k
	=\sum_{\ell=1}^d \tau_\ell \bm{u}_\ell
	+\sqrt{1-\|\bm{\tau}\|^2}\hat{\bm{u}}_k^\#,
	$$
	where $\hat{\bm{u}}_k^\#=(\bm{I}_n-\muu\muu^\top)\frac{\hat{\bm{u}}_k}{\sqrt{1-\|\bm\tau\|^2}}$ is a unit-norm vector orthogonal to the span of $\muu$, and $\bm\tau=[\tau_1,\dots,\tau_d]^\top$ with $\tau_\ell=\bm{u}_\ell^\top\hat{\bm{u}}_k$ measures the cosine of the angle between $\bm{u}_\ell$ and $\hat{\bm{u}}_k$.
	Following the same argument as in the proof of Theorem 2 in \cite{lebeau2024asymptotic}, and using Lemma~\ref{lemma:lambdai phase transition}, we have
\begin{equation}\label{eq:hatu_k}
		\hat{\bm{u}}_k
	=\sqrt{\zeta_k} \bm{u}_k
	+\sqrt{1-\zeta_k}\hat{\bm{u}}_k^\#
	+\bm\varepsilon
	\text{ with }\|\bm\varepsilon\| \xrightarrow[]{\text{a.s.}} 0.
	\end{equation}

	Consider any $n\times n$ orthogonal matrix $\mo$ such that $\mo\muu=\muu$, i.e., a rotational symmetry about the span of $\muu$, and define $\tilde\ma=\mo\ma\mo^\top$.
	Then
	$$
	\tilde\ma
	=(\mo\muu)\mLambda(\mo\muu)^\top
	+\mo\mn\mo^\top.
	$$
	Notice that $\mo\mn\mo^\top$ and $\mn$ are identically distributed.
	Thus, $\tilde\ma$ follows the same model as $\ma$ because $\mo\muu=\muu$.
	Following the same argument as in the proof of Theorem 2 in \cite{lebeau2024asymptotic}, this rotational symmetry implies that $\hat{\bm{u}}_k^\#$ is uniformly distributed on $\mathbb{S}^{n-1}\cap \muu_\perp$, and consequently
	\begin{equation}\label{eq:hatu_k2}
		  \sqrt{n}[\hat{\bm{u}}_k^\#]_\mathcal{I} \xrightarrow{\mathcal{D}} 
  \mathcal{N}(\bm{0},\bm{I}_{|\mathcal{I}|})
	\end{equation}
  for any finite ordered set $\mathcal{I}\subset[n]$. Finally, combining \eqref{eq:hatu_k} and \eqref{eq:hatu_k2}, the desired result follows.
  \end{proof}

\begin{lemma}\label{lemma:lambdai and lambda}
Let $\bm{P} = \bm{U}\boldsymbol{\Lambda}\bm{U}^\top \in \mathbb{R}^{N \times N}$ where $\bm{U} \in \mathbb{R}^{N \times d}$ has orthonormal columns and $\boldsymbol{\Lambda} = \mathrm{diag}(\lambda_1, \ldots, \lambda_d)$ with distinct positive eigenvalues. 
Assume that $\|\bm{U}\|_{2\to\infty}^2 \lesssim \frac{d}{N}$.
Suppose $\mathcal{U}$ consists of $n = \alpha N$  {\color{black}i.i.d. uniform} indices from $[N]$,
and define $\bm{P}^{(i)} =  \bm{P}_{\mathcal{U}_i,\mathcal{U}_i}=\bm{U}_{\mathcal{U}_i} \boldsymbol{\Lambda} \bm{U}_{\mathcal{U}_i}^\top$.
Denote by $\lambda_k^{(i)}$ the $k$-th largest eigenvalue of $\bm{P}^{(i)}$.
Then for any $k\in[d]$, as $N \to \infty$, we have
      \[
        \lambda_k^{(i)}
        \;\xrightarrow{\mathrm{a.s.}}\;
        \alpha\,\lambda_k.
      \]
\end{lemma}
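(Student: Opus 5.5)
The idea is to reduce the eigenvalues of $\bm{P}^{(i)}$ to those of a $d\times d$ matrix and then apply a strong law of large numbers. The nonzero eigenvalues of $\bm{P}^{(i)}=\bm{U}_{\mathcal{U}_i}\boldsymbol{\Lambda}\bm{U}_{\mathcal{U}_i}^\top$ coincide with those of the $d\times d$ matrix
\[
\boldsymbol{\Lambda}^{1/2}\bm{U}_{\mathcal{U}_i}^\top\bm{U}_{\mathcal{U}_i}\boldsymbol{\Lambda}^{1/2}.
\]
It therefore suffices to show that $\bm{U}_{\mathcal{U}_i}^\top\bm{U}_{\mathcal{U}_i}\to\alpha\bm{I}_d$ almost surely in spectral norm, with an error $o(\alpha)$.

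Once that is established, Weyl's inequality gives
\[
|\lambda_k^{(i)}-\alpha\lambda_k|\le \|\boldsymbol{\Lambda}\|\,\|\bm{U}_{\mathcal{U}_i}^\top\bm{U}_{\mathcal{U}_i}-\alpha\bm{I}_d\|=o(\alpha\lambda_1).
\]
Here the convergence $\lambda_k^{(i)}\to\alpha\lambda_k$ is read in the relative sense, as it is used in Theorem~\ref{thm:xi}. Distinctness of the eigenvalues is not needed for this step, though it keeps the ordering of the limits well defined.

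To prove the concentration, write $\mathcal{U}_i=\{s_1,\dots,s_n\}$ with the $s_r$ i.i.d.\ uniform on $[N]$. Then
\[
\bm{U}_{\mathcal{U}_i}^\top\bm{U}_{\mathcal{U}_i}=\sum_{r=1}^n \bm{u}_{s_r}\bm{u}_{s_r}^\top .
\]
Each summand has expectation $N^{-1}\bm{U}^\top\bm{U}=N^{-1}\bm{I}_d$, so the mean of the sum is $(n/N)\bm{I}_d=\alpha\bm{I}_d$. By the incoherence assumption, each summand has norm at most $\|\bm{U}\|_{2\to\infty}^2\lesssim d/N$. The matrix variance statistic is therefore bounded by $n\cdot (d/N)^2\lesssim \alpha d^2/N$.

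Applying the matrix Bernstein inequality with deviation $t=\alpha\epsilon$ gives a tail bound of
\[
2d\exp\bigl(-c\,\alpha N\epsilon^2/d\bigr)
\]
for fixed $\epsilon>0$. This is summable in $N$ for fixed $\alpha$ and $d$, so Borel--Cantelli yields $\|\bm{U}_{\mathcal{U}_i}^\top\bm{U}_{\mathcal{U}_i}-\alpha\bm{I}_d\|/\alpha\to0$ almost surely. Taking $\epsilon=\epsilon_N\downarrow0$ slowly, for instance $\epsilon_N^2=C\log N/(\alpha N)$, gives an explicit rate and still a summable sequence of failure probabilities.

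The main obstacle is making sure the summands are genuinely i.i.d.\ and correctly centered. If $\mathcal{U}_i$ were instead drawn without replacement, I would use the matrix Bernstein inequality for sampling without replacement (Gross--Nesme), which dominates the with-replacement case. I also need to be explicit that $\alpha$ and $d$ are fixed as $N\to\infty$, since the exponent $\alpha N/d$ must grow faster than $\log N$ for Borel--Cantelli to apply. The final step, passing from convergence of the Gram matrix to convergence of the eigenvalues, is routine.
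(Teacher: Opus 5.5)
Your proposal is correct and follows essentially the same route as the paper. Both reduce to the $d\times d$ matrix $\bm{U}_{\mathcal{U}_i}^\top\bm{U}_{\mathcal{U}_i}\boldsymbol{\Lambda}$ (you use its symmetrized version $\boldsymbol{\Lambda}^{1/2}\bm{U}_{\mathcal{U}_i}^\top\bm{U}_{\mathcal{U}_i}\boldsymbol{\Lambda}^{1/2}$), show $\bm{U}_{\mathcal{U}_i}^\top\bm{U}_{\mathcal{U}_i}\to\alpha\bm{I}_d$ almost surely via a matrix concentration inequality for the centered i.i.d.\ rank-one summands plus Borel--Cantelli (matrix Hoeffding in the paper, Bernstein in yours), and then pass to eigenvalues. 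Your Weyl step on the symmetric form makes that last step quantitative and removes the need for distinct eigenvalues, and your explicit relative reading of the convergence when $\lambda_k$ grows with $N$ is a point the paper's proof leaves implicit.
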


\begin{proof}
Recall that $\lambda^{(i)}_1,\dots,\lambda^{(i)}_d$ are defined as the top-$d$ eigenvalues of $\bm{P}^{(i)}=\bm{U}_{\mathcal{U}_i} \boldsymbol{\Lambda} \bm{U}_{\mathcal{U}_i}^\top$. 
Notice that $\lambda^{(i)}_1,\dots,\lambda^{(i)}_d$ are also the eigenvalues of $\bm{U}_{\mathcal{U}_i}^\top\bm{U}_{\mathcal{U}_i} \boldsymbol{\Lambda}$.
By Lemma~\ref{lemma:U^TU to alpha I}, we have
$
\bm{U}_{\mathcal{U}_i}^\top \bm{U}_{\mathcal{U}_i} \xrightarrow{\text{a.s.}} \alpha \bm{I}_d.
$
This implies that
\begin{equation*}
 \bm{U}_{\mathcal{U}_i}^\top\bm{U}_{\mathcal{U}_i} \boldsymbol{\Lambda}\xrightarrow{\mathrm{a.s.}}\alpha\boldsymbol{\Lambda}.
\end{equation*}
Since the eigenvalues in $\boldsymbol{\Lambda}$ are assumed to be distinct, we conclude that 
$
\lambda_k^{(i)}
        \;\xrightarrow{\mathrm{a.s.}}\;
        \alpha\,\lambda_k
$
for every $k\in[d]$.
\end{proof}

\begin{lemma}\label{lemma:U^TU to alpha I}
Let $\bm{U} \in \mathbb{R}^{N \times d}$ be a matrix with orthonormal columns, i.e., $\bm{U}^\top \bm{U} = \bm{I}_d$. 
Assume that $\|\bm{U}\|_{2\to\infty}^2 
\lesssim\frac{d}{N}$. 
Suppose $\mathcal{U}$ consists of $n = \alpha N$  {\color{black}i.i.d. uniform} indices from $[N]$.
Define the subsampled matrix $\bm{U}_{\mathcal{U}} \in \mathbb{R}^{n \times d}$ by selecting the rows of $\bm{U}$ corresponding to indices in $\mathcal{U}$.
Then, as $N\to\infty$, we have
\[
\bm{U}_{\mathcal{U}}^\top \bm{U}_{\mathcal{U}} \xrightarrow{\text{a.s.}} \alpha \bm{I}_d.
\]
\end{lemma}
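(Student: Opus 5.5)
The plan is to write $\bm{U}_{\mathcal{U}}^\top \bm{U}_{\mathcal{U}}$ as a sum of i.i.d. rank-one matrices, show its entries concentrate around $\alpha\bm{I}_d$ at a summable rate, and conclude by Borel--Cantelli. Let $\mathcal{U}=\{s_1,\dots,s_n\}$ with $s_t$ i.i.d. uniform on $[N]$, and let $\bm{u}_s$ denote the $s$th row of $\bm{U}$. Then
\begin{equation*}
\bm{U}_{\mathcal{U}}^\top \bm{U}_{\mathcal{U}}=\sum_{t=1}^n \bm{u}_{s_t}\bm{u}_{s_t}^\top=:\sum_{t=1}^n \bm{Z}_t .
\end{equation*}
The summands $\bm{Z}_t$ are i.i.d., and $\mathbb{E}\bm{Z}_t=N^{-1}\sum_{s=1}^N\bm{u}_s\bm{u}_s^\top=N^{-1}\bm{U}^\top\bm{U}=N^{-1}\bm{I}_d$. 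Hence $\mathbb{E}\,\bm{U}_{\mathcal{U}}^\top\bm{U}_{\mathcal{U}}=(n/N)\bm{I}_d=\alpha\bm{I}_d$, up to the rounding in $n=\lfloor\alpha N\rfloor$, which contributes only $O(1/N)$.

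It remains to control the fluctuation $\sum_t(\bm{Z}_t-\mathbb{E}\bm{Z}_t)$. The delocalization assumption $\|\bm{U}\|_{2\to\infty}^2\lesssim d/N$ gives $\|\bm{Z}_t\|=\|\bm{u}_{s_t}\|^2\le Cd/N$ almost surely. For the variance proxy, $\bm{Z}_t^2=\|\bm{u}_{s_t}\|^2\bm{Z}_t$, so
\begin{equation*}
\Big\|\sum_{t=1}^n\mathbb{E}\bm{Z}_t^2\Big\|\le \frac{Cd}{N}\cdot\frac{n}{N}\lesssim \frac{d\alpha}{N}.
\end{equation*}
Applying matrix Bernstein (or entrywise Hoeffding, since $d$ is fixed) to the centered sum yields
\begin{equation*}
\mathbb{P}\Big(\big\|\bm{U}_{\mathcal{U}}^\top\bm{U}_{\mathcal{U}}-\alpha\bm{I}_d\big\|>\epsilon\Big)\le 2d\exp\!\Big(-\frac{c\,\epsilon^2 N}{d(\alpha+\epsilon)}\Big)
\end{equation*}
for every fixed $\epsilon>0$.

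This tail bound is summable in $N$, so the Borel--Cantelli lemma gives $\limsup_N\|\bm{U}_{\mathcal{U}}^\top\bm{U}_{\mathcal{U}}-\alpha\bm{I}_d\|\le\epsilon$ almost surely. Intersecting these events over rational $\epsilon\downarrow0$ gives the almost sure convergence claimed in Lemma~\ref{lemma:U^TU to alpha I}.

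The main obstacle is the meaning of ``almost surely as $N\to\infty$'' when $\bm{U}$ and $\mathcal{U}$ change with $N$. I will treat the samples at different $N$ as living on a common probability space with arbitrary coupling. Borel--Cantelli needs only summability of the marginal tail probabilities, so no independence across $N$ is required, and the exponential bound above supplies that summability. A minor point is that with i.i.d. sampling, indices may repeat. The sum representation handles repeats automatically, counting each row with its multiplicity, so no separate argument is needed.
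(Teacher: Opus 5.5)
Your proposal is correct and follows essentially the same route as the paper: write $\bm{U}_{\mathcal{U}}^\top\bm{U}_{\mathcal{U}}$ as a sum of $n$ i.i.d.\ rank-one terms with mean $N^{-1}\bm{I}_d$, use $\|\bm{U}\|_{2\to\infty}^2\lesssim d/N$ to get a matrix concentration bound that is exponentially small in $N$, and conclude with Borel--Cantelli. The differences are cosmetic. You use matrix Bernstein with the sharper variance proxy $d\alpha/N$, where the paper uses matrix Hoeffding with $\bm{B}_i^2=((C+1)d/N)^2\bm{I}_d$. You also make explicit two points the paper leaves implicit: the $O(1/N)$ rounding in $n$, and the fact that Borel--Cantelli needs no independence or coupling across $N$.
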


\begin{proof}
Let $\bm{u}_{j_1}, \dots, \bm{u}_{j_n} \in \mathbb{R}^d$ denote the rows of $\bm{U}$ sampled to form $\bm{U}_{\mathcal{U}}$. 
Define the centered random matrices
$$
\bm{Z}_i := \bm{u}_{j_i} \bm{u}_{j_i}^\top - \frac{1}{N} \bm{I}_d, \quad i \in [n].
$$
Then $\{\bm{Z}_i\}_{i\in[n]}$ are independent, zero-mean, symmetric random matrices, and
$$
\sum_{i=1}^n \bm{Z}_i = \bm{U}_{\mathcal{U}}^\top \bm{U}_{\mathcal{U}} - \alpha \bm{I}_d.
$$
For each $i\in[n]$, by the uniform sampling assumption, we have
$$
\mathbb{E}[\bm{u}_{j_i} \bm{u}_{j_i}^\top] = \frac{1}{N} \sum_{j=1}^N \bm{u}_j \bm{u}_j^\top = \frac{1}{N} \bm{U}^\top \bm{U} = \frac{1}{N} \bm{I}_d,
$$
so $\mathbb{E}[\bm{Z}_i] = \bm{0}$.
Since $\bm{Z}_i$ is symmetric, we have $\bm{Z}_i^2 \preceq \|\bm{Z}_i\|^2 \bm{I}_d$.
And for $\|\bm{Z}_i\|$, we have
$$
\|\bm{Z}_i\| 
\le \|\bm{u}_{j_i}\|^2 + \left\| \frac{1}{N} \bm{I}_d \right\| 
\le C\frac{d}{N} + \frac{1}{N} 
\le (C+1)\frac{d}{N}
$$
for some constant $C>0$. Therefore,
$$
\bm{Z}_i^2 \preceq \left( (C+1) \frac{d}{N} \right)^2 \bm{I}_d =: \bm{B}_i^2
$$
with $\left\| \sum_{i=1}^n \bm{B}_i^2 \right\| = n\left( (C+1) \frac{d}{N} \right)^2 = \alpha N \cdot \frac{(C+1)^2 d^2}{N^2} = \frac{\alpha (C+1)^2 d^2}{N}$.

Applying the matrix Hoeffding inequality (Theorem~1.3 in \cite{tropp2012user}), we obtain
$$
\mathbb{P}\left( \left\| \sum_{i=1}^n \bm{Z}_i \right\| \ge t \right) 
\le d \cdot \exp\left( -\frac{t^2 N}{8 \alpha (C+1)^2 d^2} \right).
$$
For any fixed $\varepsilon > 0$, define the event
$
A_N := \left\{ \left\| \sum_{i=1}^n \bm{Z}_i \right\| \ge \varepsilon \right\}.
$
Setting $t = \varepsilon$, for sufficiently large $N$, we have
$$
\mathbb{P}(A_N) \le d \cdot \exp\left( -c N \right)
$$
for the constant $c := \frac{\varepsilon^2}{8 \alpha (C+1)^2 d^2} > 0$.
Since $\sum_{N=1}^\infty \mathbb{P}(A_N) < \infty$, the Borel-Cantelli lemma implies that 
$\mathbb{P}(A_N \text{ i.o.})=0$, i.e. $\left\|\sum_{i=1}^n \bm{Z}_i \right\| <\varepsilon $ eventually a.s. 
Since $\varepsilon > 0$ was arbitrary, we conclude that
$
\left\| \bm{U}_{\mathcal{U}}^\top \bm{U}_{\mathcal{U}} - \alpha \bm{I}_d \right\| \xrightarrow{\text{a.s.}} 0.
$
\end{proof}

\end{document}